\numberwithin{equation}{section}
\newtheorem{theorem}{Theorem}[section]
\newtheorem{proposition}[theorem]{Proposition}
\newtheorem{lemma}[theorem]{Lemma}
\newtheorem{corollary}[theorem]{Corollary}
\newtheorem{definition}[theorem]{Definition}
\newtheorem{remark}[theorem]{Remark}
\def\cb{{\mathcal B}}
\def\ce{{\mathcal E}}
\def\cw{{\mathcal W}}
\def\bc{{\mathbb C}}
\def\bh{{\mathbb H}}
\def\bn{{\mathbb N}}
\def\br{{\mathbb R}}
\def\bz{{\mathbb Z}}
\def\frak{\mathfrak}
\def\ga{{\frak A}}
\def\a{\alpha}
\def\b{\beta}
\def\tr{{\rm Tr}}
\def\L{\Lambda}
\def\G{\Gamma}
\def\ce{\mathcal E}
\def\ffi{\varphi}
\def\<{\langle}
\def\>{\rangle}
\def\1{\mathbf{1}}
\def\cw{\cal W}
\def\cal{\mathcal}
\def\s{\sigma}
\def\bh{\mathbf{h}}
\def\bs{\mathbf{s}}
\def\bg{\mathbf{g}}
\def\id{{\bf 1}\!\!{\rm I}}
\begin{document}

\begin{center}
{\Large {\bf On Quantum Markov Chains on Cayley tree II:\\
 Phase transitions for the associated chain with $XY$-model on \\ the Cayley tree of order three}}\\[1cm]
\end{center}

\begin{center}
{\large {\sc Luigi Accardi}}\\[2mm]
\textit{Centro Interdisciplinare Vito Volterra\\
II Universit\`{a} di Roma ``Tor Vergata''\\
Via Columbia 2, 00133 Roma, Italy} \\
E-email: {\tt accardi@volterra.uniroma2.it}
\end{center}

\begin{center}
{\large {\sc Farrukh Mukhamedov}}\\[2mm]
\textit{ Department of Computational \& Theoretical Sciences,\\
Faculty of Science, International Islamic University Malaysia,\\
P.O. Box, 141, 25710, Kuantan, Pahang, Malaysia}\\
E-mail: {\tt far75m@yandex.ru, \ farrukh\_m@iiu.edu.my}
\end{center}

\begin{center}
{\large{\sc Mansoor Saburov}}\\[2mm]
\textit{Department of Computational \& Theoretical Sciences,\\
Faculty of Science, International Islamic University Malaysia,\\
P.O. Box, 141, 25710, Kuantan, Pahang, Malaysia}\\
E-mail: {\tt msaburov@gmail.com}
\end{center}

\begin{abstract}
In the present paper we study forward  Quantum Markov Chains (QMC)
defined on a Cayley tree. Using the tree structure of graphs, we
give a construction of quantum Markov chains on a Cayley tree. By
means of such constructions we prove the existence of a phase
transition for the $XY$-model on a Cayley tree of order three in QMC
scheme. By the phase transition we mean the existence of two now
quasi equivalent QMC for the given family of interaction operators
$\{K_{<x,y>}\}$.

\vskip 0.3cm \noindent {\it Mathematics Subject Classification}:
46L53, 60J99, 46L60, 60G50, 82B10, 81Q10, 94A17.\\
{\it Key words}: Quantum Markov chain; Cayley tree; $XY$-model;
phase transition.
\end{abstract}

\section{Introduction }\label{intr}

One of the basic open problems in quantum probability is the
construction of a theory of quantum Markov fields, that is quantum
process with  multi-dimensional index set. This program concerns the
generalization of the theory of Markov fields (see
\cite{D},\cite{Geor}) to noncommutative setting, naturally arising
in quantum statistical mechanics and quantum filed theory.

The quantum analogues of Markov chains were first constructed in
\cite{[Ac74f]}, where the notion of quantum Markov chain on infinite
tensor product algebras was introduced. Nowadays, quantum Markov
chains have become a standard computational tool in solid state
physics, and several natural applications have emerged in quantum
statistical mechanics and quantum information theory. The reader is
referred to \cite{fannes2,G,ILW,Kum,Mat,OP} and the references cited
therein, for recent developments of the theory and the applications.

First attempts to construct a quantum analogue of classical Markov
fields have been done in \cite{[Liebs99]}, \cite{[AcFi01a]},
\cite{[AcFi01b]},\cite{AcLi}. In these papers the notion of {\it
quantum Markov state}, introduced in \cite{[AcFr80]}, extended to
fields as a sub--class of the {\it quantum Markov chains} introduced
in \cite{[Ac74f]}. In \cite{[AcFiMu07]} it has been proposed a
definition of quantum Markov states and chains, which extend a
proposed one in \cite{Oh05}, and includes all the presently known
examples. Note that in the mentioned papers quantum Markov fields
were considered over multidimensional integer lattice $\bz^d$. This
lattice has so-called amenability property. On the other hand, there
do not exist analytical solutions (for example, critical
temperature) on such lattice. But investigations of phase
transitions of spin models on hierarchical lattices showed that
there are exact calculations of various physical quantities (see for
example, \cite{Bax,Per}). Such studies on the hierarchical lattices
begun with the development of the Migdal-Kadanoff renormalization
group method where the lattices emerged as approximants of the
ordinary crystal ones.  On the other hand, the study of exactly
solved models deserves some general interest in statistical
mechanics \cite{Bax}. Therefore, it is natural to investigate
quantum Markov fields over hierarchical lattices.  For example, a
Cayley tree is the simplest hierarchical lattice with non-amenable
graph structure. This means that the ratio of the number of boundary
sites to the number of interior sites of the Cayley tree  tends to a
nonzero constant in the thermodynamic limit of a large system, i.e.
the ratio $W_n/V_n$ (see section 2 for the definitions) tends to
$(k-1)/k$ as $ n\to\infty$, where $k$ is the order of the tree.
Nevertheless, the Cayley tree is not a realistic lattice, however,
its amazing topology makes the exact calculation of various
quantities possible. First attempts to investigate quantum Markov
chains over such trees was done in \cite{aklt}, such studies were
related to investigate thermodynamic limit of valence-bond-solid
models on a Cayley tree \cite{fannes}. The mentioned considerations
naturally suggest the study of the following problem: the extension
to fields of the notion of generalized Markov chain. In \cite{AOM}
we have introduced a hierarchy of notions of Markovianity for states
on discrete infinite tensor products of $C^*$--algebras and for each
of these notions we constructed some explicit examples. We showed
that the construction of \cite{[AcFr80]} can be generalized to
trees. It is worth to note that, in a different context and for
quite different purposes, the special role of trees was already
emphasized in \cite{[Liebs99]}. Note that in \cite{fannes} finitely
correlated states are constructed as ground states of VBS-model on a
Cayley tree. Such shift invariant $d$-Markov chains can be
considered as an extension of $C^*$-finitely correlated states
defined in \cite{fannes2} to the Cayley trees. Note that a
noncommutative extension of classical Markov fields, associated with
Ising and Potts models on a Cayley tree, were investigated in
\cite{Mukh04,MR04}. In the classical case, Markov fields on trees
are also considered in \cite{[Pr]}-\cite{[Za85]}.

If a tree is not one-dimensional lattice, then it is expected (from
a physical point of view) the existence of a phase transition for
quantum Markov chains constructed over such a tree. In \cite{AMSa}
we have provided a construction of forward QMC, such states are
different from backward QMC. In that construction, a QMC is defined
as a weak limit of finite volume states with boundary conditions.
Such a QMC depends on the boundary conditions. For by means of the
provided construction we proved uniqueness QMC, associated with
$XY$-model on a Cayley tree of order two.

Our goal, in this paper, is to establish the existence of a phase
transition that $XY$-model on the Cayley tree of order three. Note
that phase transitions in a quantum setting play an important role
to understand quantum spin systems (see for example
\cite{BCS},\cite{FS}). In this paper, using the construction defined
in \cite{AMSa} we shall prove the existence of a phase transition
for the $XY$-model on a Cayley tree of order three in QMC scheme. By
the phase transition we means the existence of two distinct QMC for
the given family of interaction operators $\{K_{<x,y>}\}$. Hence,
results of the present paper will totaly differ from \cite{AMSa},
and show by the increasing the dimension of the tree we are getting
the phase transition. We have to stress here that the constructed
QMC associated with $XY$-model, is different from thermal states of
that model, since such states correspond to $\exp(-\b
\sum_{<x,y>}H_{<x,y>})$, which is different from a product of
$\exp(-\b H_{<x,y>})$. Roughly speaking, if we consider the usual
Hamiltonian system $H(\s)=-\b\sum_{<x,y>}h_{<x,y>}(\s)$, then its
Gibbs measure is defined by the fraction
\begin{equation}\label{mu-G1}
\mu(\s)=\frac{e^{-H(\s)}}{\sum_{\s}e^{-H(\s)}}.
\end{equation}
Such a measure can be viewed in another way as well. Namely,
\begin{equation}\label{mu-G2}
\mu(\s)=\frac{\prod_{<x,y>}e^{\b h_{<x,y>}(\s)}}{\sum\limits_{\s}\prod_{<x,y>}e^{\b h_{<x,y>}(\s)}}.
\end{equation}
A usual quantum mechanical definition of the quantum Gibbs states
based on equation \eqref{mu-G1}. In this paper, we use an
alternative way to define the quantum Gibbs states based on
\eqref{mu-G2}. Note that whether or not the resulting states have a
physical interest is a question that cannot be solved on a purely
mathematical ground.

\section{Preliminaries}\label{dfqmf}

Let $\Gamma^k = (L,E)$ be a semi-infinite Cayley tree of order
$k\geq 1$ with the root $x^0$ (i.e. each vertex of $\Gamma^k$ has
exactly $k+1$ edges, except for the root $x^0$, which has $k$
edges). Here $L$ is the set of vertices and $E$ is the set of edges.
The vertices $x$ and $y$ are called {\it nearest neighbors} and they
are denoted by $l=<x,y>$ if there exists an edge connecting them. A
collection of the pairs $<x,x_1>,\dots,<x_{d-1},y>$ is called a {\it
path} from the point $x$ to the point $y$. The distance $d(x,y),
x,y\in V$, on the Cayley tree, is the length of the shortest path
from $x$ to $y$.

Recall a coordinate structure in $\G^k$:  every vertex $x$ (except
for $x^0$) of $\G^k$ has coordinates $(i_1,\dots,i_n)$, here
$i_m\in\{1,\dots,k\}$, $1\leq m\leq n$ and for the vertex $x^0$ we
put $(0)$.  Namely, the symbol $(0)$ constitutes level 0, and the
sites $(i_1,\dots,i_n)$ form level $n$ ( i.e. $d(x^0,x)=n$) of the
lattice (see Fig. 1).

\begin{figure}
\begin{center}
\includegraphics[width=10.07cm]{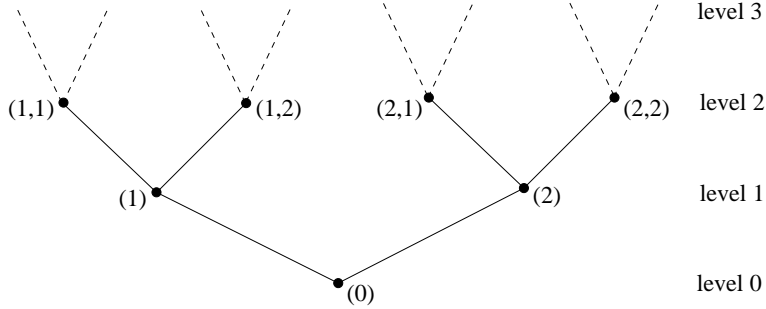}
\end{center}
\caption{The first levels of $\G^2$} \label{fig1}
\end{figure}

Let us set
\[
W_n = \{ x\in L \, : \, d(x,x_0) = n\} , \qquad \Lambda_n =
\bigcup_{k=0}^n W_k, \qquad  \L_{[n,m]}=\bigcup_{k=n}^mW_k, \ (n<m)
\]
\[
E_n = \big\{ <x,y> \in E \, : \, x,y \in \Lambda_n\big\}, \qquad
\Lambda_n^c = \bigcup_{k=n}^\infty W_k
\]
For $x\in \G^k_+$, $x=(i_1,\dots,i_n)$ denote
$$ S(x)=\{(x,i):\ 1\leq
i\leq k\},
$$
here $(x,i)$ means that $(i_1,\dots,i_n,i)$. This set is called a
set of {\it direct successors} of $x$.

The algebra of observables $\cb_x$ for any single site $x\in L$ will
be taken as the algebra $M_d$ of the complex $d\times d$ matrices.
The algebra of observables localized in the finite volume $\L\subset
L$ is then given by $\cb_\L=\bigotimes\limits_{x\in\L}\cb_x$. As
usual if $\L^1\subset\L^2\subset L$, then $\cb_{\L^1}$ is identified
as a subalgebra of $\cb_{\L^2}$ by tensoring with unit matrices on
the sites $x\in\L^2\setminus\L^1$. Note that, in the sequel, by
$\cb_{\L,+}$ we denote the positive part of $\cb_\L$. The full
algebra $\cb_L$ of the tree is obtained in the usual manner by an
inductive limit
$$
\cb_L=\overline{\bigcup\limits_{\L_n}\cb_{\L_n}}.
$$

In what follows, by ${\cal S}({\cal B}_\L)$ we
will denote the set of all states defined on the algebra ${\cal
B}_\L$.

Consider a triplet ${\cal C} \subset {\cal B} \subset {\cal A}$ of
unital $C^*$-algebras. Recall that a {\it quasi-conditional
expectation} with respect to the given triplet is a completely
positive (CP) identity preserving linear map $\ce \,:\, {\cal A} \to
{\cal B}$ such that $ \ce(ca) = c \ce(a)$, for all $a\in {\cal A},\, c \in {\cal C}$.

A state $\varphi$ on ${\cal B}_L$ is called a {\it forward quantum
$d$-Markov chain (QMC)}, associated to $\{\L_n\}$, on ${\cal B}_L$
if for each $\Lambda_n$, there exist a quasi-conditional expectation
$\ce_{\Lambda_n^c}$ with respect to the triplet ${\cal
B}_{{\Lambda}_{n+1}^c}\subseteq {\cal B}_{\Lambda_n^c}\subseteq{\cal
B}_{\Lambda_{n-1}^c} $and a state $\hat\varphi_{\Lambda_n^c}\in{\cal
S}({\cal B}_{\Lambda_n^c}) $ such that for any $n\in {\mathbb N}$
one has
\begin{equation}\label{eq4.1re}
\hat\varphi_{\Lambda_n^c}| {\cal
B}_{\Lambda_{n+1}\backslash\Lambda_n} =
\hat\varphi_{\Lambda_{n+1}^c}\circ \ce_{\Lambda_{n+1}^c}| {\cal
B}_{\Lambda_{n+1}\backslash\Lambda_n}
\end{equation}
and
\begin{equation}\label{dfgqmf}
\varphi = \lim_{n\to\infty} \hat\varphi_{\Lambda_n^c}\circ
\ce_{\Lambda_n^c}\circ \ce_{\Lambda_{n-1}^c} \circ \cdots \circ
\ce_{\Lambda_1^c}
\end{equation}
in the weak-* topology.

Note that \eqref{eq4.1re} is an analogue of the DRL equation from
classical statistical mechanics \cite{D, Geor}, and QMC state is
thus the counterpart of the infinite-volume Gibbs measure.

\begin{remark} We point out that in \cite{AOM} a forward QMC was called a generalized quantum Markov state, and the
existence of the limit \eqref{dfgqmf} under the condition \eqref{eq4.1re} was proved there as well.
\end{remark}


\section{Construction of QMC on the Cayley tree}\label{dfcayley}

In this section, we recall a construction of forward
quantum $d$-Markov chain (see \cite{AMSa}).

Let us rewrite the elements of $W_n$ in the following order, i.e.
\begin{eqnarray*}
\overrightarrow{W_n}:=\left(x^{(1)}_{W_n},x^{(2)}_{W_n},\cdots,x^{(|W_n|)}_{W_n}\right),\quad \overleftarrow{W_n}:=\left(x^{(|W_n|)}_{W_n},x^{(|W_n|-1)}_{W_n},\cdots, x^{(1)}_{W_n}\right).
\end{eqnarray*}
Note that $|W_n|=k^n$. Vertices $x^{(1)}_{W_n},x^{(2)}_{W_n},\cdots,x^{(|W_n|)}_{W_n}$ of $W_n$ can be
represented in terms of the coordinate system as follows
\begin{eqnarray*}
&&x^{(1)}_{W_n}=(1,1,\cdots,1,1), \quad x^{(2)}_{W_n}=(1,1,\cdots,1,2), \ \ \cdots \quad x^{(k)}_{W_n}=(1,1,\cdots,1,k,),\\
&&x^{(k+1)}_{W_n}=(1,1,\cdots,2,1), \quad x^{(2)}_{W_n}=(1,1,\cdots,2,2), \ \ \cdots \quad x^{(2k)}_{W_n}=(1,1,\cdots,2,k),
\end{eqnarray*}
\[\vdots\]
\begin{eqnarray*}
&&x^{(|W_n|-k+1)}_{W_n}=(k,k,,\cdots,k,1), \ x^{(|W_n|-k+2)}_{W_n}=(k,k,\cdots,k,2),\ \ \cdots  x^{|W_n|}_{W_n}=(k,k,\cdots,k,k).
\end{eqnarray*}

Analogously, for a given vertex $x,$ we shall use the following notation for
the set of direct successors of $x$:
\begin{eqnarray*}
\overrightarrow{S(x)}:=\left((x,1),(x,2),\cdots (x,k)\right),\quad
\overleftarrow{S(x)}:=\left((x,k),(x,k-1),\cdots (x,1)\right).
\end{eqnarray*}
In what follows, for the sake of simplicity, we will use notation
$i\in \overrightarrow{S(x)}$ (resp. $i\in \overleftarrow{S(x)}$
instead of $(x,i)\in \overrightarrow{S(x)}$ (resp. $(x,i)\in
\overleftarrow{S(x)}$).

Assume that for each edge $<x,y>\in E$ of the tree an operator
$K_{<x,y>}\in {\cal B}_{\{x,y\}}$ is assigned. We would like to
define a state on $\cb_{\L_n}$ with boundary conditions $w_{0}\in
{\cal B}_{(0),+}$ and $\bh=\{h_x\in {\cal B}_{x,+}\}_{x\in L}$.

Let us denote
\begin{eqnarray}
K_{[m-1,m]}&:=&\prod_{x\in
\overrightarrow{W}_{m-1}}\prod_{y\in \overrightarrow{S(x)}}K_{<x,y>},\\
\bh^{1/2}_n&:=&\prod_{x\in \overrightarrow{W}_n}h_x^{1/2}, \quad \quad \bh_n:=\bh^{1/2}_n(\bh^{1/2}_n)^{*},\\
K_n&:=&w_0^{1/2}K_{[0,1]}K_{[1,2]}\cdots K_{[n-1,n]}\bh^{1/2}_n,\\
{\cw}_{n]}&:=&K_nK_n^{*},
\end{eqnarray}
It is clear that ${\cw}_{n]}$ is positive.

In what follows, by $\tr_{\L}:\cb_L\to\cb_{\L}$ we mean normalized
partial trace (i.e. $\tr_{\L}(\id_{L})=\id_{\L}$, here
$\id_{\L}=\bigotimes\limits_{y\in \L}\id$), for any
$\Lambda\subseteq_{\text{fin}}L$. For the sake of shortness we put
$\tr_{n]} := \tr_{\Lambda_n}$.

Let us define a positive functional $\ffi^{(n,f)}_{w_0,\bh}$ on
$\cb_{\Lambda_n}$ by
\begin{eqnarray}\label{ffi-ff}
\ffi^{(n,f)}_{w_0,\bh}(a)=\tr(\cw_{n+1]}(a\otimes\id_{W_{n+1}})),
\end{eqnarray}
for every $a\in \cb_{\Lambda_n}$. Note that here, $\tr$ is a
normalized trace on ${\cal B}_L$ (i.e. $\tr(\id_L)=1$).

To get an infinite-volume state $\ffi^{(f)}$ on $\cb_L$  such
that $\ffi^{(f)}\lceil_{\cb_{\L_n}}=\ffi^{(n,f)}_{w_0,\bh}$, we
need to impose some constrains to the boundary conditions
$\big\{w_0,\bh\big\}$ so that the functionals
$\{\ffi^{(n,f)}_{w_0,\bh}\}$ satisfy the compatibility condition,
i.e.
\begin{eqnarray}\label{compatibility}
\ffi^{(n+1,f)}_{w_0,\bh}\lceil_{\cb_{\L_n}}=\ffi^{(n,f)}_{w_0,\bh}.
\end{eqnarray}

\begin{theorem}[\cite{AMSa}]\label{compa} Assume that $K_{<x,y>}$ is self-adjoint for every $<x,y>\in E$. Let the boundary conditions $w_{0}\in {\cal
B}_{(0),+}$ and ${\bh}=\{h_x\in {\cal B}_{x,+}\}_{x\in L}$ satisfy
the following conditions:
\begin{eqnarray}\label{eq1}
&& \tr ( w_0 h_0 ) =1 \\
\label{eq2}
&&\tr_{x]}\left[\prod_{y\in \overrightarrow{S(x)}}K_{<x,y>}\prod_{y\in \overrightarrow{S(x)}}h^{(y)}\prod_{y\in \overleftarrow{S(x)}}K_{<x,y>}\right]=h^{(x)}
\ \ \textrm{for every} \ \  x\in L.
\end{eqnarray}
Then the functionals $\{\ffi^{(n,f)}_{w_0,\bh}\}$ satisfy the
compatibility condition \eqref{compatibility}. Moreover, there is a
unique forward quantum $d$-Markov chain  $\ffi^{(b)}_{w_0,{\bh}}$ on $\cb_L$ such that
$\ffi^{(f)}_{w_0,{\bh}}=w-\lim_{n\to\infty}\ffi^{(n,f)}_{w_0,\bh}$.
\end{theorem}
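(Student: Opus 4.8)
The plan is to verify the compatibility condition \eqref{compatibility} directly from the definition \eqref{ffi-ff}, using the two algebraic hypotheses \eqref{eq1}--\eqref{eq2}, and then to invoke the general existence result recalled in the Remark (i.e. \cite{AOM}) to upgrade the compatible family of functionals to a genuine forward quantum $d$-Markov chain. So the argument splits into two essentially independent parts: (a) compatibility, which is a computation with partial traces, and (b) the QMC/quasi-conditional-expectation structure, which is quoted.

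For part (a) I would fix $a\in\cb_{\L_n}$ and compute $\ffi^{(n+1,f)}_{w_0,\bh}(a)=\tr\big(\cw_{n+2]}(a\otimes\id_{W_{n+1}}\otimes\id_{W_{n+2}})\big)$, the goal being to reduce the trace over $\L_{n+2}$ to the trace over $\L_{n+1}$ defining $\ffi^{(n,f)}_{w_0,\bh}(a)=\tr\big(\cw_{n+1]}(a\otimes\id_{W_{n+1}})\big)$. The key observations are: first, $\cw_{n+2]}=K_{n+2}K_{n+2}^{*}$ where $K_{n+2}=K_{n+1}K_{[n+1,n+2]}\bh^{1/2}_{n+2}(\bh^{1/2}_{n+1})^{-1}$ in an informal sense — more precisely one should write $K_{n+2}=w_0^{1/2}K_{[0,1]}\cdots K_{[n,n+1]}K_{[n+1,n+2]}\bh^{1/2}_{n+2}$ and note that $K_{[0,1]}\cdots K_{[n,n+1]}=K_{n+1}(\bh^{1/2}_{n+1})^{-1}$; second, since $a\otimes\id_{W_{n+1}}$ and everything in $K_{n+1}$ is localized in $\L_{n+1}$, the operators $K_{[n+1,n+2]}$ and $\bh_{n+2}$ commute past $a$, and using the trace property together with the factorization of $K_{[n+1,n+2]}=\prod_{x\in\overrightarrow{W}_{n+1}}\prod_{y\in\overrightarrow{S(x)}}K_{<x,y>}$ over the distinct sites of $W_{n+1}$, the partial trace $\tr_{W_{n+2}}\big(K_{[n+1,n+2]}\bh_{n+2}K_{[n+1,n+2]}^{*}\big)$ factorizes into a product over $x\in W_{n+1}$ of local expressions of exactly the form appearing on the left side of \eqref{eq2}. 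Condition \eqref{eq2} then collapses each such factor to $h^{(x)}$, reassembling $\prod_{x\in\overrightarrow{W}_{n+1}}h_x=\bh_{n+1}$, which is precisely the operator appearing in $\cw_{n+1]}$; hence $\tr_{W_{n+2}}\cw_{n+2]}=\cw_{n+1]}$ and the compatibility follows by the tower property of the normalized partial trace. Condition \eqref{eq1}, $\tr(w_0 h_0)=1$, is used separately to check normalization, i.e. that $\ffi^{(n,f)}_{w_0,\bh}(\id)=1$, by running the same collapsing argument all the way down from $W_{n+1}$ to $W_0$.

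For part (b), once \eqref{compatibility} holds, the functionals $\ffi^{(n,f)}_{w_0,\bh}$ determine a unique positive normalized functional $\ffi^{(f)}_{w_0,\bh}$ on the dense subalgebra $\bigcup_n\cb_{\L_n}$, which extends by continuity to a state on $\cb_L$ and is the weak-$*$ limit asserted. The identification of this state as a forward quantum $d$-Markov chain — that is, the construction of the quasi-conditional expectations $\ce_{\L_n^c}$ and the boundary states $\hat\varphi_{\L_n^c}$ satisfying \eqref{eq4.1re} and \eqref{dfgqmf} — is exactly the content of the general theorem of \cite{AOM} referred to in the Remark, applied to the present tower of local functionals; here one takes $\ce_{\L_n^c}$ to be built from the Umegaki-type conditional expectation associated with the density $\cw$, transported by the interaction operators $K_{[n-1,n]}$, and the compatibility \eqref{compatibility} is precisely what guarantees the consistency relation \eqref{eq4.1re}.

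The main obstacle is the bookkeeping in part (a): one must be careful that the various $K_{<x,y>}$ and $h_y^{1/2}$ for $y$ ranging over $S(x)$ all lie in commuting tensor factors, so that the ordered products $\overrightarrow{S(x)}$ versus $\overleftarrow{S(x)}$ in \eqref{eq2} genuinely line up with $K_{[n+1,n+2]}\,\bh_{n+2}\,K_{[n+1,n+2]}^{*}$ after cyclic rearrangement under the trace, and that the partial trace over $W_{n+2}$ really does factorize as a tensor product over the $x\in W_{n+1}$ (this uses that the successor sets $S(x)$ are disjoint for distinct $x$). A secondary technical point is the (possibly only formal) use of $(\bh^{1/2}_{n+1})^{-1}$: if some $h_x$ is not invertible one should instead argue directly with $K_{n+2}K_{n+2}^{*}=K_{n+1}(\bh_{n+1})^{-1}\cdots$ avoided entirely, by writing $K_{n+2}K_{n+2}^{*}$ as $w_0^{1/2}K_{[0,1]}\cdots K_{[n,n+1]}\big(K_{[n+1,n+2]}\bh_{n+2}K_{[n+1,n+2]}^{*}\big)K_{[n,n+1]}^{*}\cdots K_{[0,1]}^{*}w_0^{1/2}$ and applying $\tr_{W_{n+2}}$ to the bracketed middle factor only, so that no inverse ever appears.
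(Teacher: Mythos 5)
Your argument is correct, but note that the paper contains no proof of Theorem \ref{compa} to compare against: the statement is imported verbatim from \cite{AMSa}, and the Remark in Section 2 defers the existence of the limit \eqref{dfgqmf} under \eqref{eq4.1re} to \cite{AOM}, so deferring the quasi-conditional-expectation/QMC part, as you do, is exactly what the paper itself does. Your part (a) is the right computation and is the same telescoping mechanism the paper uses implicitly in Proposition \ref{state^nwithW_n} and explicitly in Propositions \ref{evaluatestateahlpha} and \ref{evaluatestategamma}: the module property of $\tr_{n+1]}$ reduces everything to $\tr_{n+1]}\bigl(K_{[n+1,n+2]}\bh_{n+2}K_{[n+1,n+2]}^{*}\bigr)=\bh_{n+1}$, which factorizes over the disjoint blocks $\{x\}\cup S(x)$, $x\in W_{n+1}$, and collapses by \eqref{eq2} (self-adjointness of $K_{<x,y>}$ being what makes the adjoint of the $\overrightarrow{S(x)}$-ordered product equal the $\overleftarrow{S(x)}$-ordered one), with \eqref{eq1} giving normalization after collapsing down to the root. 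Your final remark about avoiding $(\bh^{1/2}_{n+1})^{-1}$ by tracing only the bracketed middle factor is the correct way to handle non-invertible boundary operators.
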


From direct calculation we can derive the following

\begin{proposition}\label{state^nwithW_n}
If \eqref{eq1} and \eqref{eq2} are satisfied then one has $\ffi^{(n,f)}_{w_0,\bh}(a)=\tr(\cw_{n]}(a))$ for any $a\in \cb_{\Lambda_n}$.
\end{proposition}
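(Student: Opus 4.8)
The plan is to unfold the definition \eqref{ffi-ff} of $\ffi^{(n,f)}_{w_0,\bh}(a) = \tr(\cw_{n+1]}(a\otimes\id_{W_{n+1}}))$ and carry out the partial trace over the top shell $W_{n+1}$ explicitly, showing it collapses $\cw_{n+1]}$ down to $\cw_{n]}$ on the subalgebra $\cb_{\Lambda_n}$. First I would write out $K_{n+1} = K_n' K_{[n,n+1]} \bh^{1/2}_{n+1}$ where $K_n' = w_0^{1/2} K_{[0,1]}\cdots K_{[n-1,n]}$ (so that $K_n = K_n' \bh^{1/2}_n$), and observe that $K_{[n,n+1]}$, $\bh^{1/2}_{n+1}$ all live in $\cb_{\Lambda_{n+1}}$ while $K_n'$ lives in $\cb_{\Lambda_n}$. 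Then, using the factorization of the normalized trace over the tensor decomposition $\cb_L = \cb_{\Lambda_n}\otimes\cb_{W_{n+1}}\otimes(\text{rest})$ and the fact that $a\otimes\id_{W_{n+1}}$ does not touch $W_{n+1}$, I would push the trace over $W_{n+1}$ inside to act only on the $W_{n+1}$-dependent operators $K_{[n,n+1]} \bh_{n+1} K_{[n,n+1]}$.

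The key computation is that $\tr_{W_{n+1}}$ of $K_{[n,n+1]}\bh_{n+1}K_{[n,n+1]}$ equals $\bh_n$. This is where condition \eqref{eq2} enters: since $W_{n+1} = \bigsqcup_{x\in W_n} S(x)$ is the disjoint union of the direct-successor sets, and the operators $K_{<x,y>}$ for $y\in S(x)$ together with the $h^{(y)}$ for $y\in S(x)$ act on disjoint sets of sites as $x$ ranges over $W_n$, the partial trace over $W_{n+1}$ factorizes as a product over $x\in\overrightarrow{W}_n$ of the local partial traces $\tr_{x]}\big[\prod_{y\in\overrightarrow{S(x)}}K_{<x,y>}\prod_{y\in\overrightarrow{S(x)}}h^{(y)}\prod_{y\in\overleftarrow{S(x)}}K_{<x,y>}\big]$, each of which equals $h^{(x)}$ by \eqref{eq2}. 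Multiplying these over $x\in W_n$ reconstructs exactly $\prod_{x\in\overrightarrow{W}_n}h_x = \bh_n$ (up to bookkeeping of the ordering, which is consistent because the supports are disjoint and hence the operators commute). Substituting back, $\tr_{W_{n+1}}(\cw_{n+1]}) = K_n' \bh_n K_n'^{*} = \cw_{n]}$, and therefore $\tr(\cw_{n+1]}(a\otimes\id_{W_{n+1}})) = \tr(\cw_{n]}\, a)$ for $a\in\cb_{\Lambda_n}$.

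The main obstacle I anticipate is the careful justification of the factorization of the partial trace over $W_{n+1}$ into the product of the local expressions in \eqref{eq2} — in particular checking that the various orderings $\overrightarrow{W}_n$, $\overrightarrow{S(x)}$, $\overleftarrow{S(x)}$ appearing in $K_{[n,n+1]}$ are compatible with the grouping by $x$, and that the operators associated to different $x\in W_n$ genuinely commute (which they do, since $S(x)\cap S(x')=\emptyset$ for $x\neq x'$ and $K_{<x,y>}\in\cb_{\{x,y\}}$ with $x\in W_n$, $y\in S(x)\subset W_{n+1}$). Once the supports are tracked correctly this is a bookkeeping exercise rather than a conceptual difficulty; the substantive input is condition \eqref{eq2}, which is precisely designed to make this reduction work. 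One should also note $\tr$ here is the normalized trace, so the normalization constants in the partial traces $\tr_{x]}$ match up with those in $\tr_{W_{n+1}}$ automatically.
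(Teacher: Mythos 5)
Your proposal is correct and is precisely the ``direct calculation'' the paper invokes without writing it out: one uses the module property of the normalized partial trace $\tr_{n]}$ (so that $K_n'=w_0^{1/2}K_{[0,1]}\cdots K_{[n-1,n]}\in\cb_{\Lambda_n}$ passes outside), the disjointness of the blocks $\{x\}\cup S(x)$ for $x\in W_n$, and condition \eqref{eq2} at each such $x$ to collapse $\cw_{n+1]}$ to $K_n'\,\bh_n\,K_n'^{*}=\cw_{n]}$. The ordering/adjoint bookkeeping you flag (self-adjointness of $K_{<x,y>}$ turning $K_{[n,n+1]}^{*}$ into the $\overleftarrow{S(x)}$-ordered products of \eqref{eq2}) is exactly the only point needing care, so there is no gap.
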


Our goal in this paper is to establish the existence of phase
transition for the given family $\{K_{<x,y>}\}$ of operators.
Heuristically, the phase transition means the existence of two
distinct QMC for the given $\{K_{<x,y>}\}$. Let us provide a more
exact definition.

\begin{definition}
We say that there exists a phase transition for a family of
operators $\{K_{<x,y>}\}$ if \eqref{eq1}, \eqref{eq2} have at least
two $(u_0,\{h_x\}_{x\in L})$ and $(v_0,\{s_x\}_{x\in L})$ solutions
such that the corresponding quantum $d$-Markov chains
$\ffi_{u_0,\bh}$ and $\ffi_{v_0,\bs}$ are not quasi equivalent.
Otherwise, we say there is no phase transition.
\end{definition}

\begin{remark} In the classical case, i.e. the interaction operators commute with each other and belong to commutative part of $\cb_L$, the provided definition coincides with the known definition of the phase transition for models with nearest-neighbor interactions on the tree (see for example \cite{Bax,Geor,[Pr]}).

\end{remark}


\section{QMC associated with XY-model and a main result}\label{exam1}

In this section, we define the model and  shall formulate the main
results of the paper. In what follows we consider a semi-infinite
Cayley tree $\G^3=(L,E)$ of order 3. Our starting $C^{*}$-algebra is
the same $\cb_L$ but with $\cb_{x}=M_{2}(\bc)$ for $x\in L$. By
$\s_x^{(u)},\s_y^{(u)},\s_z^{(u)}$ we denote the Pauli spin
operators at site $u\in L$. Here
\begin{equation}\label{pauli}
\s_x^{(u)}= \left(
          \begin{array}{cc}
            0 & 1 \\
            1 & 0 \\
          \end{array}
        \right), \quad
\s_y^{(u)}= \left(
          \begin{array}{cc}
            0 & -i \\
            i & 0 \\
          \end{array}
        \right), \quad
\s_z^{(u)}= \left(
          \begin{array}{cc}
            1 & 0 \\
            0 & -1 \\
          \end{array}
        \right).
\end{equation}

For every edge $<u,v>\in E$ put
\begin{equation}\label{1Kxy1}
K_{<u,v>}=\exp\{\b H_{<u,v>}\}, \ \ \b>0,
\end{equation}
where
\begin{equation}\label{1Hxy1}
H_{<u,v>}=\frac{1}{2}\big(\s_{x}^{(u)}\s_{x}^{(v)}+\s_{y}^{(u)}\s_{y}^{(v)}\big).
\end{equation}

Such kind of Hamiltonian is called {\it quantum XY-model} per edge $<x,y>$.

Now taking into account the following equalities
\begin{eqnarray*}\label{1Hxy2}
&&H_{<u,v>}^{2m}=H_{<u,v>}^2=\frac{1}{2}\big(\id-\s_{z}^{(u)}\s_{z}^{(v)}\big),\
\ \
H_{<u,v>}^{2m-1}=H_{<u,v>}, \ \ \ m\in\bn,
\end{eqnarray*}
one finds
\begin{eqnarray}\label{K<u,v>}
K_{<u,v>}=\id+\sinh\beta H_{<u,v>}+(\cosh\beta-1)H^2_{<u,v>}.
\end{eqnarray}

The main result of the present paper concerns the existence of the
phase transition for the model \eqref{1Kxy1}. Namely, we have

\begin{theorem}\label{main} Let $\{K_{<x,y>}\}$ be given by \eqref{1Kxy1} on the Cayley tree of order three. Then there are two positive numbers $\b_*$ and $\b^*$ such that
\begin{enumerate}
\item[(i)] if $\b\in(0,\b_*]\cup [\b^*,\infty)$, then  there is a unique forward
quantum $d$-Markov chain associated with \eqref{1Kxy1};
\item[(ii)] if $\b\in(\b_*,\b^*)$, then  there is a phase transition for a given model,
i.e. there are two distinct forward quantum $d$-Markov chains.
\end{enumerate}
\end{theorem}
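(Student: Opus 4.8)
The plan is to make the boundary conditions \eqref{eq1}--\eqref{eq2} fully explicit, to collapse them to one scalar equation in a single real parameter, and to finish by an elementary one-variable analysis; throughout I restrict, as in \cite{AMSa}, to translation-invariant boundary conditions $h_x\equiv h\in\cb_{x,+}=M_2(\bc)_+$. Since $H_{<u,v>}$ commutes with $\s_z^{(u)}+\s_z^{(v)}$, the operator $K_{<u,v>}=\exp\{\b H_{<u,v>}\}$ commutes with $e^{i\theta\s_z^{(u)}}\otimes e^{i\theta\s_z^{(v)}}$, so \eqref{eq2} is covariant under conjugating $h$ by $e^{i\theta\s_z}$, and a $U(1)$-rotate of a solution yields a unitarily (hence quasi) equivalent chain; thus I may assume the off-diagonal entry of $h$ is real. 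Using the block form of $K_{<x,y>}$ in the $\s_z$-eigenbasis of the pair $\{x,y\}$ --- it is $\id$ on $|00\rangle,|11\rangle$ and $\left(\begin{smallmatrix}\cosh\b&\sinh\b\\ \sinh\b&\cosh\b\end{smallmatrix}\right)$ on $|01\rangle,|10\rangle$ --- the left side of \eqref{eq2} telescopes over the three successors of $x$:
\[
\tr_{x]}\left[\prod_{y\in \overrightarrow{S(x)}}K_{<x,y>}\prod_{y\in \overrightarrow{S(x)}}h^{(y)}\prod_{y\in \overleftarrow{S(x)}}K_{<x,y>}\right]=\Phi_h\circ\Phi_h\circ\Phi_h(\id),\qquad \Phi_h(a):=\tr_y\!\left[K_{<x,y>}(a\otimes h)K_{<x,y>}\right],
\]
with $\Phi_h:M_2\to M_2$ completely positive. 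A short computation should show that the plane $\mathrm{span}\{\id,\s_x\}$ is $\Phi_h$-invariant, so (the iteration starting at $\id$) that \eqref{eq2} forces $h=u\,\id+v\,\s_x$ with $u>|v|$, and that on this plane, in the basis $(\id,\s_x)$, $\Phi_h$ is the matrix $A(u,v)=\left(\begin{smallmatrix}\cosh^2\b\,u & \cosh\b\sinh\b\,v\\ \sinh\b\,v & \cosh\b\,u\end{smallmatrix}\right)$; hence \eqref{eq2} becomes $A(u,v)^3\binom{1}{0}=\binom{u}{v}$, while \eqref{eq1} merely fixes $w_0$ (which alters the chain only near the root and not its quasi-equivalence class).

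Next, since $A(u,v)$ is linear in $(u,v)$, the equation $A(u,v)^3\binom{1}{0}=\binom{u}{v}$ is homogeneous of degree $3$ on the left and $1$ on the right; writing $v=ru$ and $\tilde A(r)=\left(\begin{smallmatrix}\cosh^2\b & \cosh\b\sinh\b\,r\\ \sinh\b\,r & \cosh\b\end{smallmatrix}\right)$, it is equivalent to requiring that $\tilde A(r)^3\binom{1}{0}$ be parallel to $\binom{1}{r}$, with $u^2$ then equal to the reciprocal of the first component of $\tilde A(r)^3\binom{1}{0}$. Evaluating $\tilde A(r)^3\binom{1}{0}$ by three $2\times 2$ multiplications, one sees that $r=0$ is a root for every $\b$ (it gives the ``disordered'' solution $h=\cosh^{-3}\b\cdot\id$), and after dividing out $r$ the remaining roots satisfy an identity that simplifies to
\[
r^{2}=\frac{\cosh\b\,\left(\cosh^{4}\b-\sinh\b\,(\cosh^{2}\b+\cosh\b+1)\right)}{\sinh^{2}\b\,\left(\sinh\b-2\cosh^{2}\b-\cosh\b\right)} .
\]
The denominator is strictly negative for all $\b>0$ (from $\cosh\b>\sinh\b$ one gets $2\cosh^2\b+\cosh\b>\sinh\b$), so a genuine solution $r\neq0$ exists --- and one checks $0<r^2<1$ and $u^2>0$, yielding an admissible $h>0$ --- exactly when the quantity $g(\b):=\cosh^{4}\b-\sinh\b\,(\cosh^{2}\b+\cosh\b+1)$ is negative.

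It then remains to analyse the sign of $g$ on $(0,\infty)$. One has $g(0)=1>0$, $g(\b)\to+\infty$ as $\b\to\infty$ (the term $\cosh^4\b$ dominates $\sinh\b\cosh^2\b$), and $g<0$ at the $\b$ with $\sinh\b=1$ (there $g=4-(3+\sqrt2)=1-\sqrt2<0$); substituting $t=e^{\b}$ turns $g$, up to a positive factor, into a polynomial in $t$ whose number of roots in $(1,\infty)$ should be exactly two (a Descartes/Sturm count), so $g$ changes sign exactly twice, at two points $\b_*<\b^*$. Consequently, for $\b\in(0,\b_*]\cup[\b^*,\infty)$ the only solution of \eqref{eq1}--\eqref{eq2} is $h\propto\id$, and Theorem \ref{compa} produces a unique forward quantum $d$-Markov chain --- this is case (i); for $\b\in(\b_*,\b^*)$ there are, in addition, the solutions $h=u(\id\pm r_0\,\s_x)$, hence at least two distinct chains, $\ffi_0$ (from $h\propto\id$) and $\ffi_{\pm}$ (from $h=u(\id\pm r_0\s_x)$).

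Finally I would verify that $\ffi_0$ and $\ffi_{\pm}$ are not quasi equivalent. As $h\propto\id$ is fixed by every unitary preserving the $\{\s_x,\s_y\}$-plane, $\ffi_0$ is invariant under the induced automorphism and $\ffi_0(\s_x^{(u)})=0$, whereas Proposition \ref{state^nwithW_n} gives $\ffi_{\pm}(\s_x^{(u)})=m(\b,r_0)\neq0$; using the finitely-correlated structure of the chains one shows that $|W_n|^{-1}\sum_{u\in W_n}\s_x^{(u)}$ converges, in the two GNS representations, to the scalars $0$ and $m(\b,r_0)$ respectively, so $\ffi_0$ and $\ffi_{\pm}$ are disjoint and a fortiori not quasi equivalent, which completes (ii). The main obstacle of this program is the explicit reduction of \eqref{eq2} --- isolating the invariant plane $\mathrm{span}\{\id,\s_x\}$, computing $A(u,v)$ and then $\tilde A(r)^3\binom{1}{0}$, and arriving at the displayed formula for $r^2$ --- together with the sign analysis of $g(\b)$, which is exactly what forces the transition region to be a bounded interval (as $\b\to\infty$, $K_{<x,y>}$ degenerates, up to a scalar, to the rank-one projection onto the symmetric entangled pair vector, a rigidity that again makes the solution unique); the law-of-large-numbers step underlying the non-quasi-equivalence is a secondary technical point.
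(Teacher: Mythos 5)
Your reduction of \eqref{eq2} is correct and in fact reproduces the paper's computation exactly: your one-successor map $\Phi_h$ restricted to $\mathrm{span}\{\id,\sigma_x\}$ is the matrix $A(u,v)$ the paper implicitly derives, the equation $A(u,v)^3\binom{1}{0}=\binom{u}{v}$ is precisely the paper's fixed-point system $B_2u^3+A_2uv^2=u$, $B_1u^2v+A_1v^3=v$, your formula for $r^2$ equals the paper's $1/D=(B_1-B_2)/(A_2-A_1)$, and your condition $g(\beta)<0$ is equivalent to the paper's $P_9(\cosh\beta)<0$ defining the window $(\beta_*,\beta^*)$. So the identification of the critical interval is right and is obtained by essentially the same route (the paper works with $t=\cosh\beta$ and the degree-$9$ polynomial rather than $t=e^{\beta}$, but the sign analysis is the same Descartes-type count you defer).

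There are, however, two genuine gaps. First, you restrict from the outset to site-independent boundary conditions $h_x\equiv h$, i.e.\ to fixed points of the recursion. But \eqref{eq2} only ties level $n$ to level $n-1$, so its solutions are whole compatible sequences $\{h^{(n)}\}_{n\ge 0}$, and even in the uniqueness regime there is a one-parameter family of level-dependent solutions $h^{(n)}(\alpha)=\cosh^{-3}\beta\,\sqrt[3^n]{\alpha\cosh^3\beta}\,\id$. Part (i) therefore requires two things your argument does not supply: a trajectory analysis showing that no infinite solution with nonzero off-diagonal part exists outside $(\beta_*,\beta^*)$ (not merely that no \emph{fixed point} does), and the scaling computation \eqref{uniq} showing that all surviving solutions induce one and the same state. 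Second, for part (ii) the crux is not the law-of-large-numbers step you call secondary, but your claim that $\ffi_{\pm}(\sigma_x^{(u)})$ equals a nonzero constant $m(\beta,r_0)$: this expectation actually depends on $N=d(u,x^0)$ and equals $\gamma_0^{-1}\langle A^N h_\gamma,e\rangle$ for a certain $2\times 2$ transfer matrix $A$; it stays bounded away from $0$ as $N\to\infty$ only because $A$ happens to have $1$ as an eigenvalue, with second eigenvalue $\det A\in(0,1)$ (Propositions \ref{A-m} and \ref{A-N}). Without that spectral fact the order parameter could decay at infinity and neither the Bratteli--Robinson criterion nor your disjointness argument would yield anything. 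A smaller caveat: disposing of the off-diagonal phase by saying a global $U(1)$ rotation gives a ``unitarily, hence quasi, equivalent'' chain is not safe, since conjugation by an infinite product unitary is in general an outer automorphism; fortunately the phase is a conserved quantity of the recursion and the relevant solutions are handled without this.
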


The rest of the paper will be devote to the proof the this theorem.
To do it, we shall use a dynamical system approach, which is
associated with the equations \eqref{eq1},\eqref{eq2}.

\section{A dynamical system related to \eqref{eq1},\eqref{eq2}}

In this section we shall reduce equations \eqref{eq1},\eqref{eq2} to
some dynamical system. Our goal is to describe all solutions
$\bh=\{h_x\}$ and $w_0$ of those equations.

Furthermore, we shall assume that $h_x=h_y$ for every $x,y\in W_n$,
$n\in\bn$. Hence, we denote $h_x^{(n)}:=h_x$, if $x\in W_n$. Now
from \eqref{1Kxy1},\eqref{1Hxy1} one can see that
$K_{<u,u>}=K^{*}_{<u,v>}$, therefore, equation \eqref{eq2} can be
rewritten as follows
\begin{eqnarray}\label{state}
Tr_x(K_{<x,y>}K_{<x,z>}K_{<x,v>}h^{(n)}_yh^{(n)}_zh^{(n)}_vK_{<x,v>}K_{<x,z>}K_{<x,y>})&=&h_x^{(n-1)},
\end{eqnarray}
for every $x\in L.$

After small calculations equation \eqref{state} reduces to the
following system
\begin{equation}\label{mainsystem}
\left\{
\begin{array}{r}
   \left(\dfrac{a^{(n)}_{11}+a^{(n)}_{22}}{2}\right)^3B_2+a^{(n)}_{12}a^{(n)}_{21}
   \left(\dfrac{a^{(n)}_{11}+a^{(n)}_{22}}{2}\right)A_2 = a^{(n-1)}_{11} \\
   a^{(n)}_{12}\left(\left(\dfrac{a^{(n)}_{11}+a^{(n)}_{22}}{2}\right)^2B_1
   +a^{(n)}_{12}a^{(n)}_{21}A_1\right)= a^{(n-1)}_{12} \\
   a^{(n)}_{21}\left(\left(\dfrac{a^{(n)}_{11}+a^{(n)}_{22}}{2}\right)^2B_1
   +a^{(n)}_{12}a^{(n)}_{21}A_1\right)= a^{(n-1)}_{21} \\
   \left(\dfrac{a^{(n)}_{11}+a^{(n)}_{22}}{2}\right)^3B_2+a^{(n)}_{12}a^{(n)}_{21}
   \left(\dfrac{a^{(n)}_{11}+a^{(n)}_{22}}{2}\right)A_2 = a^{(n-1)}_{22}
\end{array}
\right.
\end{equation}
where
\begin{eqnarray}\label{A1B1}
A_{1}=\sinh^3\beta\cosh\beta,\quad B_{1}=\sinh\beta\cosh^{2}\beta(1+\cosh\beta+\cosh^2\beta),\\
A_{2}=\sinh^2\beta\cosh^2\beta(1+2\cosh\beta),\quad
B_2=\cosh^6\beta.\label{A2B2}
\end{eqnarray}
Here
\begin{equation*}
h_{x}^{(n-1)}=\left(
          \begin{array}{cc}
            a^{(n-1)}_{11} & a^{(n-1)}_{12} \\
            a^{(n-1)}_{21} & a^{(n-1)}_{22} \\
          \end{array}
        \right), \quad\quad
h_{y}^{(n)}=h_{z}^{(n)}=h_{v}^{(n)}=\left(
          \begin{array}{cc}
            a^{(n)}_{11} & a^{(n)}_{12} \\
            a^{(n)}_{21} & a^{(n)}_{22} \\
          \end{array}
        \right).
\end{equation*}
From  \eqref{mainsystem} we immediately get that
$a^{(n)}_{11}=a^{(n)}_{22}$ for all $n\in \bn$.

Self-adjointness of $h_x^{(n)}$ (i.e.
$\overline{a^{(n)}_{12}}=a^{(n)}_{21},$ for any $n\in \bn$) and the
representation $a_{12}^{(n)}=|a_{12}^{(n)}|\exp(i\varphi_{n})$
allows us to reduce the system \eqref{mainsystem} to
\begin{equation}\label{equationtohxnwithrealandphi}
\left\{
\begin{array}{r}
B_2(a^{(n)}_{11})^3+A_2a^{(n)}_{11}|a^{(n)}_{12}|^2= a^{(n-1)}_{11}\\
|a^{(n)}_{12}|\left(B_1(a^{(n)}_{11})^2+A_1|a^{(n)}_{12}|^2\right)=
|a^{(n-1)}_{12}|\\
\varphi_{n}=\varphi_{n-1}
\end{array}
\right.
\end{equation}

From \eqref{equationtohxnwithrealandphi} it follows that $\varphi_{n}=\varphi_{0}$, whenever $n\in\bn.$ Therefore, we shall study the following system

\begin{equation}\label{equationtohxnwithreal}
\left\{
\begin{array}{r}
B_2(a^{(n)}_{11})^3+A_2a^{(n)}_{11}|a^{(n)}_{12}|^2
= a^{(n-1)}_{11}\\
|a^{(n)}_{12}|\left(B_1(a^{(n)}_{11})^2+A_1|a^{(n)}_{12}|^2\right)=
|a^{(n-1)}_{12}|
\end{array}
\right.
\end{equation}

\begin{remark}\label{positivityofhxn} Note that according to
the positivity of $h_x^{(n)}$ and $a_{11}^{(n)}=a_{22}^{(n)}$ we conclude that
$a_{11}^{(n)}>|a_{12}^{(n)}|$   for all $n\in\bn.$
\end{remark}

Now we are going to investigate the derived system
\eqref{equationtohxnwithreal}. To do this, let us define
a mapping $f:(x,y)\in\br^2_{+}\to(x{'},y{'})\in\br^2_{+}$ by
\begin{equation}\label{dynsystem}
\left\{
\begin{array}{r}
B_2(x^{'})^3+A_2x^{'}(y^{'})^2= x\\
B_1(x^{'})^2y^{'}+A_1(y^{'})^3=y
\end{array}
\right.
\end{equation}

Furthermore, due to Remark \ref{positivityofhxn}, we restrict the
dynamical system \eqref{dynsystem} to the following domain
$$\Delta=\{(x,y)\in \br^2_{+}: x > y\}.$$

Denote
\begin{eqnarray}
&&P_9(t)=t^9-t^8-t^7-t^6+2t^4+2t^3-t-1,\label{P9(t)}\\
&&D:=\frac{A_2-A_1}{B_1-B_2}.\label{D}\\
&&E:=\frac{1}{A_2+DB_2},\label{E}
\end{eqnarray}

Further, we will need the following auxiliary facts.

\begin{lemma}\label{inequalities}
Let $A_1,B_1,A_2,B_2,D$ be numbers defined by \eqref{A1B1},
\eqref{A2B2}, \eqref{D} and $P_9(t)$ be the polynomial given by
\eqref{P9(t)}, where \ $\beta>0.$ Then the following statements hold
true:
\begin{itemize}
  \item [(i)] The polynomial $P_9(t)$ has only three positive roots 1, $t_{*},$ and $t^{*}$ such that $1.05<t_{*}<1.1$ and  $1.5<t_{*}<1.6.$ Moreover, if $t\in (1,t_{*})\cup (t^{*},\infty)$ then $P_9(t)>0$ and $t\in (t_{*},t^{*})$ then $P_9(t)<0.$ Denote by $\beta_{*}=\cosh^{-1}t_{*}$ and $\beta^{*}=\cosh^{-1}t^{*};$
  \item [(ii)] For any $\beta\in (0,\infty)$ we have $A_1< A_{2};$
  \item [(iii)] If $\beta\in(0,\beta_{*}]\cup[\beta^{*},\infty)$ then $B_1\le B_2$ and  If $\beta\in(\beta_{*},\beta^{*})$ then $B_1> B_2;$
  \item [(iv)] For any $\beta\in (0,\infty)$ we have $A_1+B_1 < A_{2}+B_2;$
  \item [(v)] If $\beta\in(\beta_{*},\beta^{*})$ then $D>1$ and $E>0;$
  \item [(vi)] For any $\beta\in (0,\infty)$ we have $A_1A_2< B_1B_2$ and $A_1B_2< A_2B_1;$
  \item [(vii)] If $\beta\in(\beta_{*},\beta^{*})$ then $A_2B_1<A_1A_2+3A_1B_2+B_1B_2$  and $2A_1A_2+3A_1B_2<A_2B_1;$
  \item [(viii)] For any $\beta\in (0,\infty)$ we have  $0<\sinh\beta(1+\cosh\beta)<\cosh^3\beta.$
\end{itemize}

\end{lemma}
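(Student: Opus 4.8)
The lemma is a bundle of inequalities in the single parameter $\beta>0$, and the plan is to reduce each of them to a polynomial inequality in $t:=\cosh\beta$, which ranges over $(1,\infty)$. Put $s:=\sinh\beta=\sqrt{t^2-1}$, so that $\sinh^2\beta=t^2-1$; then \eqref{A1B1}--\eqref{A2B2} become
\begin{equation*}
A_1=st(t^2-1),\quad B_1=st^2(1+t+t^2),\quad A_2=t^2(t^2-1)(1+2t),\quad B_2=t^6 .
\end{equation*}
Thus $A_1,B_1$ carry exactly one factor $s$ and $A_2,B_2$ carry none; after cancelling the common monomial factors, each inequality is either already polynomial in $t$ or contains a single lone factor $s$ and becomes polynomial after one squaring (legitimate since the two sides are then visibly nonnegative on $(1,\infty)$). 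So the proof is a finite checklist of polynomial inequalities, on $[1,\infty)$ or on an explicit finite subinterval.

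The crux is (i), together with the identity connecting it to (iii). Since $P_9(1)=0$ one divides out, $P_9(t)=(t-1)Q_8(t)$ with $Q_8(t)=t^8-t^6-2t^5-2t^4+2t^2+2t+1$, and a direct expansion gives $(t^2-1)(1+t+t^2)^2-t^8=-Q_8(t)$; hence for $t>1$ the sign of $B_1-B_2=t^2\bigl(s(1+t+t^2)-t^4\bigr)$ equals the sign of $-Q_8(t)$, i.e. of $-P_9(t)$. To prove (i) I would apply Descartes' rule to $Q_8$ (two sign changes, so at most two positive roots) and then evaluate $Q_8$ at $t=1,\,1.05,\,1.1,\,1.5,\,1.6$; the intermediate value theorem exhibits exactly two positive roots $t_*\in(1.05,1.1)$, $t^*\in(1.5,1.6)$, with $Q_8>0$ on $(0,t_*)\cup(t^*,\infty)$ and $Q_8<0$ on $(t_*,t^*)$, and multiplying by $(t-1)$ gives the root set $\{1,t_*,t^*\}$ of $P_9$ and its sign pattern. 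Then (iii) is immediate from the displayed sign equivalence, and (v) follows: on $(\beta_*,\beta^*)$ we have $B_1-B_2>0$ by (iii) and $A_2-A_1>0$ by (ii), so $D=(A_2-A_1)/(B_1-B_2)>0$; moreover $D>1\iff A_2+B_2>A_1+B_1$, which is (iv); and then $E=1/(A_2+DB_2)>0$ since $A_2,D,B_2>0$.

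The remaining "for all $\beta$" items are elementary polynomial estimates. For (ii): cancelling $t(t^2-1)$ reduces $A_1<A_2$ to $s<t(1+2t)$, true since $s<t$. For (vi): $A_1A_2<B_1B_2$ cancels to $(t^2-1)^2(1+2t)<t^5(1+t+t^2)$, and the difference equals $t^4(t-1)(t+1)^2+(4t^3+2t^2-2t-1)$, i.e. nonnegative plus positive on $[1,\infty)$; and $A_1B_2<A_2B_1$ cancels to $t^3<1+3t+3t^2+2t^3$, which is trivial. For (viii): $s(1+t)<t^3\iff(t-1)(t+1)^3<t^6$, and $t^6-(t-1)(t+1)^3=t(t^2-1)(t^3-2)+1$, which is $>0$ on $[1,\infty)$ (clear for $t\ge 2^{1/3}$, while for $1\le t<2^{1/3}$ one has $t(t^2-1)(2-t^3)<2^{1/3}(2^{2/3}-1)=2-2^{1/3}<1$). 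For (iv): after squaring, the reduced inequality $s(t^3+2t^2+t-1)<t(t^4+2t^3+t^2-2t-1)$ becomes the degree-ten inequality $t^2(t^4+2t^3+t^2-2t-1)^2>(t^2-1)(t^3+2t^2+t-1)^2$ for $t\ge 1$, which I would settle by expanding the difference (its value at $t=1$ is $1$) and exhibiting it with nonnegative coefficients in $t-1$, or by a short monotonicity argument.

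The main obstacle — and the only place where the quantitative content of (i) is genuinely used — is (vii). Cancelling common factors turns the two claimed inequalities into $t^7+2t^6-3t^4-2t^3+t^2+3t+1>0$ and $t^4+t^3-t^2-5t-2<0$. Neither holds for all $t\ge 1$ (the second fails for $t$ slightly above $1.7$), which is exactly why the statement is restricted to $\beta\in(\beta_*,\beta^*)$; but both do hold on the interval $(1,1.6)\supset(t_*,t^*)$ thanks to the bound $t^*<1.6$ from (i), and on $[1,1.6]$ each of the two polynomials is monotone (its derivative keeping constant sign there), so the endpoint values decide. The real difficulty is therefore making part (i) airtight — being sure the five sample evaluations truly pin down two simple positive roots and no others — and then carrying the bound $t^*<1.6$ cleanly into (vii); everything else is bookkeeping of which common factors cancel and which lone $\sqrt{t^2-1}$ forces a squaring step, plus the one degree-ten expansion for (iv), all elementary but error-prone.
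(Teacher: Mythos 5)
Your proposal is correct and follows essentially the same route as the paper's Appendix: reduce every item to a polynomial inequality in $t=\cosh\beta$, locate the roots $t_*,t^*$ of $P_9$ by Descartes' rule plus sample evaluations, read off (iii) from the sign equivalence of $B_1-B_2$ with $-P_9$ and (v) from (iii)--(iv), and settle (vii) on $(t_*,t^*)\subset(1,1.6)$ using $t^*<1.6$; your degree-ten polynomial in (iv) is exactly the paper's $Q_{10}(t)=t^{10}+4t^9+5t^8-4t^7-14t^6-6t^5+11t^4+8t^3-3t^2-2t+1$, whose positivity the paper establishes by an explicit factorization and which your $(t-1)$-expansion route also confirms. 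One small aside to correct: the first inequality of (vii), i.e.\ positivity of $t^7+2t^6-3t^4-2t^3+t^2+3t+1$, in fact holds for all $t\ge 1$ (the paper factors out $t(t-1)$ to see this), so only the second inequality genuinely requires the quantitative bound $t^*<1.6$ from part (i).
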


The proof is provided in the Appendix.

\section{Fixed points and asymptotical behavior of $f$. Existence of forward QMC}

In this section we shall find fixed points of \eqref{dynsystem} and
prove the absence of periodic points. Moreover, we investigate an
asymptotical behavior of \eqref{dynsystem}. Note that every fixed
point of \eqref{dynsystem} defines (see Theorem \ref{compa}) a
forward QMC. Hence, the existence of the fixed points implies the
existence of forward QMC.

Let us first find all of the fixed points of the system.

\begin{theorem}\label{fixed-p}
Let $f$ be a dynamical system given by \eqref{dynsystem}. Then the
following assertions hold true:
\begin{enumerate}
\item[(i)] If $\beta\in(0,\beta_{*}]\cup[\beta^{*},\infty)$ then there is a unique fixed point $\big(\frac{1}{\cosh^3\beta},0\big)$ in the
domain $\Delta$;

\item[(ii)] If $\beta\in(\beta_{*},\beta^{*})$ then there are two fixed points in the domain $\Delta,$ which are $\big(\frac{1}{\cosh^3\beta},0\big)$ and $(\sqrt{DE},\sqrt{E}).$
\end{enumerate}
\end{theorem}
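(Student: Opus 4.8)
The plan is to solve the fixed-point equation $f(x,y)=(x,y)$ for the map \eqref{dynsystem} on the domain $\Delta$. Setting $x'=x$, $y'=y$ in \eqref{dynsystem} gives the system
\begin{equation*}
\left\{
\begin{array}{r}
B_2x^3+A_2xy^2=x\\
B_1x^2y+A_1y^3=y
\end{array}
\right.
\end{equation*}
Since we are in $\Delta\subseteq\br^2_+$ with $x>y\geq 0$, in particular $x>0$, we may divide the first equation by $x$ to get $B_2x^2+A_2y^2=1$. For the second equation there are two cases: either $y=0$ or $y>0$.

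First I would treat the case $y=0$. Then the second equation is automatic and the first gives $B_2x^2=1$, i.e. $x=1/\sqrt{B_2}=1/\cosh^3\beta$ (using \eqref{A2B2} and taking the positive root). Since $\cosh^{-3}\beta>0=y$, this point lies in $\Delta$ for every $\beta>0$; this accounts for the fixed point $(\cosh^{-3}\beta,0)$ appearing in both parts (i) and (ii).

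Next I would treat the case $y>0$. Dividing the second equation by $y$ gives $B_1x^2+A_1y^2=1$, so together with $B_2x^2+A_2y^2=1$ we have a linear system in $(x^2,y^2)$:
\begin{equation*}
\left\{
\begin{array}{r}
B_2x^2+A_2y^2=1\\
B_1x^2+A_1y^2=1
\end{array}
\right.
\end{equation*}
Subtracting, $(B_1-B_2)x^2=(A_2-A_1)y^2$, hence $x^2=\frac{A_2-A_1}{B_1-B_2}\,y^2=Dy^2$ with $D$ as in \eqref{D}. Substituting back into $B_2x^2+A_2y^2=1$ gives $(A_2+DB_2)y^2=1$, so $y^2=\frac{1}{A_2+DB_2}=E$ and $x^2=DE$, with $E$ as in \eqref{E}. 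This yields $(x,y)=(\sqrt{DE},\sqrt{E})$ provided the radicands are positive; then $x>y$ holds iff $D>1$. By Lemma \ref{inequalities}(v), when $\beta\in(\beta_*,\beta^*)$ we have $D>1$ and $E>0$, so this point exists and lies in $\Delta$, giving the second fixed point in part (ii). Conversely, when $\beta\in(0,\beta_*]\cup[\beta^*,\infty)$, the point $(\sqrt{DE},\sqrt{E})$ either fails to exist (as a real point in $\br^2_+$) or fails the strict inequality $x>y$: indeed by Lemma \ref{inequalities}(iii) one has $B_1\le B_2$ on this range, so $D=\frac{A_2-A_1}{B_1-B_2}\le 0$ (the numerator being positive by (ii)), whence $x^2=Dy^2\le 0$ forces $y=0$, reducing to the first case. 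This establishes uniqueness in part (i).

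The main obstacle is bookkeeping around the degenerate sub-cases: one must verify that the denominator $B_1-B_2$ is nonzero precisely where we divide (at the endpoints $\beta=\beta_*,\beta^*$ one has $B_1=B_2$ by the boundary case of (iii), and there the linear system $B_2x^2+A_2y^2=1=B_1x^2+A_1y^2$ becomes inconsistent since $A_1<A_2$ by (ii), so no $y>0$ solution exists — consistent with part (i)), and that the positivity/strictness conditions $D>1$, $E>0$ are exactly captured by Lemma \ref{inequalities}(v). Once those inequalities are invoked, the argument is purely algebraic.
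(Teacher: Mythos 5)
Your proposal is correct and follows essentially the same route as the paper: split into the cases $y=0$ and $y>0$, reduce the latter to the linear system $B_2x^2+A_2y^2=1=B_1x^2+A_1y^2$, subtract to get $x^2=Dy^2$, and invoke Lemma \ref{inequalities} (ii), (iii), (v) to decide existence and membership in $\Delta$. Your extra remarks on the degenerate sub-cases ($B_1=B_2$ at the endpoints, $D\le 0$ off the transition interval) only make explicit what the paper's argument leaves implicit.
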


\begin{proof}
Assume that $(x,y)$ is a fixed point, i.e.
\begin{equation}\label{1fix}
\left\{
\begin{array}{r}
B_2x^3+A_2xy^2
= x\\
B_1x^2y+A_1y^3=y
\end{array}
\right.
\end{equation}

Consider two different cases with respect to $y$.

{\sc Case (a).} Let $y=0.$ Then one finds that either $x=0$ or
$x=\frac{1}{\cosh^3\beta}.$ But, only the point
$(\frac{1}{\cosh^3\beta},0)$ belongs to the domain $\Delta.$\\

{\sc Case (b).} Now suppose $y>0.$ Since $x>y>0$ one
finds
\begin{eqnarray*}
\left\{
\begin{array}{r}
B_2x^2+A_2y^2= 1\\
B_1x^2+A_1y^2=1,
\end{array}
\right.
\end{eqnarray*}
hence, due to \eqref{A1B1} and \eqref{A2B2} we obtain
\begin{eqnarray*}
(B_1-B_2)x^2=
(A_2-A_1)y^2.
\end{eqnarray*}

According to  Lemma \ref{inequalities} $(ii),(iii),(v)$  we infer
that if $\beta\in(0,\beta_{*}]\cup[\beta^{*},\infty)$ then $B_1\le
B_2,$ $A_1< A_2$, and if $\beta\in(\beta_{*},\beta^{*})$ then
$B_1>B_2,$ $A_1< A_2$, and which imply
$$\frac{x^2}{y^2}=\frac{A_2-A_1}{B_1-B_2}=D>1.$$
Therefore, if $\beta\in(0,\beta_{*}]\cup[\beta^{*},\infty)$ then the dynamical system \eqref{dynsystem} has a unique fixed point $(\frac{1}{\cosh^3\beta},0)$. If  $\beta\in(\beta_{*},\beta^{*})$ then the dynamical system \eqref{dynsystem} has two fixed points
$(\frac{1}{\cosh^3\beta},0)$ and $(\sqrt{DE},\sqrt{E}).$
\end{proof}

To investigate  an asymptotical behavior of the dynamical system on  $\Delta$ we need some auxiliary facts.

Let $g_{\beta}:[0,1]\rightarrow {\mathbb{R}}_{+}$ be  a function  given by
\begin{eqnarray}\label{gbetat}
g_{\beta}(t)=\frac{A_1t^3+B_1t}{A_2t^2+B_2},
\end{eqnarray}
where $\beta\in (0,\infty).$

\begin{proposition}\label{gincreasingfunction} Let $g_{\beta}:[0,1]\rightarrow {\mathbb{R}}_{+}$ be the function given by \eqref{gbetat} and $\beta\in(\beta_{*},\beta^{*}).$ Then the following assertions hold true:
\begin{enumerate}
\item[(i)] $g_{\beta}$ is an increasing function on $[0,1]$;\\
\item[(ii)] If $t\in [0,\frac{1}{\sqrt{D}}]$, then $g_{\beta}(t)\ge t$. If $t\in [\frac{1}{\sqrt{D}},1]$ then $g_{\beta}(t)\le t$;\\
\item[(iii)]  If $0\le g_{\beta}(t)\le\frac{1}{\sqrt{D}}$ then $0\le t\le\frac{1}{\sqrt{D}}$ and if $\frac{1}{\sqrt{D}}\le g_{\beta}(t)\le1$ then $\frac{1}{\sqrt{D}}\le t\le1.$
\end{enumerate}
\end{proposition}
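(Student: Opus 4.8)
The plan is to prove the three assertions about $g_\beta(t) = \dfrac{A_1 t^3 + B_1 t}{A_2 t^2 + B_2}$ essentially by elementary calculus, using only the sign information on the parameters $A_1, B_1, A_2, B_2, D$ supplied by Lemma \ref{inequalities} for $\beta \in (\beta_*, \beta^*)$. Throughout I would keep in mind that on this interval we have $A_1 < A_2$, $B_1 > B_2 > 0$ (hence $B_1 - B_2 > 0$), and $D = \frac{A_2 - A_1}{B_1 - B_2} > 1$.

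For part (i), I would compute the derivative $g_\beta'(t)$ and show its numerator is positive on $[0,1]$. Writing $g_\beta = N/M$ with $N = A_1 t^3 + B_1 t$, $M = A_2 t^2 + B_2$, one gets $g_\beta' = (N'M - NM')/M^2$, and the numerator $N'M - NM'$ is a polynomial in $t$ whose coefficients are sums/differences of products of the $A_i, B_i$. After expansion it should collapse to something of the form $A_1 A_2 t^4 + (\text{positive})\, t^2 + B_1 B_2$ plus a cross term; the key point is that all surviving coefficients are positive once one uses $B_1 B_2 > 0$ and, where a difference like $3A_1 B_2 - \dots$ appears, the inequalities from Lemma \ref{inequalities}(vii) — namely $A_2 B_1 < A_1 A_2 + 3 A_1 B_2 + B_1 B_2$ and $2A_1 A_2 + 3A_1 B_2 < A_2 B_1$ — are exactly what is needed to control the mixed terms. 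So (i) reduces to a bookkeeping check that $N'M - NM' > 0$ on $(0,1]$, with $g_\beta(0)=0$ handled separately; I expect the Lemma's items (vi) and (vii) were proved precisely to make this sign determination go through.

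For part (ii), the cleanest route is to consider $h(t) := g_\beta(t) - t = \dfrac{A_1 t^3 + B_1 t - (A_2 t^2 + B_2) t}{A_2 t^2 + B_2} = \dfrac{t\big((A_1 - A_2) t^2 + (B_1 - B_2)\big)}{A_2 t^2 + B_2}$. Since the denominator is positive and $t \ge 0$, the sign of $h(t)$ is the sign of $(A_1 - A_2) t^2 + (B_1 - B_2) = (B_1 - B_2)\big(1 - \tfrac{A_2 - A_1}{B_1 - B_2} t^2\big) = (B_1 - B_2)(1 - D t^2)$. As $B_1 - B_2 > 0$, this is $\ge 0$ iff $t^2 \le 1/D$, i.e. iff $t \le 1/\sqrt{D}$, which is exactly (ii). (One should note $1/\sqrt D < 1$ since $D > 1$, so the split point lies in $[0,1]$.)

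For part (iii), I would combine (i) and (ii). Since $g_\beta$ is increasing (part (i)) and $g_\beta(1/\sqrt D) = 1/\sqrt D$ by the sign analysis in part (ii) (the fixed point of $g_\beta$ in $(0,1)$), monotonicity immediately gives: $g_\beta(t) \le 1/\sqrt D \iff t \le 1/\sqrt D$, and likewise $g_\beta(t) \ge 1/\sqrt D \iff t \ge 1/\sqrt D$; one should also check $g_\beta(1) \le 1$ (which follows from part (ii) at $t=1$) and $g_\beta(0) = 0$ so that the image stays inside the relevant subintervals of $[0,1]$. The main obstacle is really just part (i): verifying positivity of the derivative's numerator is the one place where a genuine (if routine) polynomial manipulation is unavoidable, and it is the reason the somewhat opaque inequalities (vi)–(vii) of Lemma \ref{inequalities} are recorded. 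Parts (ii) and (iii) are then essentially immediate consequences of the factorization $(A_1 - A_2)t^2 + (B_1 - B_2) = (B_1 - B_2)(1 - Dt^2)$ together with (i).
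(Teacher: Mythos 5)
Your parts (ii) and (iii) are correct and coincide with the paper's own argument: the factorization $g_\beta(t)-t=-\frac{(A_2-A_1)\,t\,(t^2-1/D)}{A_2t^2+B_2}$ gives (ii) at once, and (iii) follows from (i) together with the fact that $t=1/\sqrt{D}$ (and $t=0$) are fixed points of $g_\beta$. The genuine gap is in part (i). The numerator of $g_\beta'$ is $\hat g(t)=A_1A_2t^4+(3A_1B_2-A_2B_1)t^2+B_1B_2$ (there is no extra ``cross term''), and your stated mechanism --- that ``all surviving coefficients are positive'' --- is false: by the second inequality of Lemma \ref{inequalities}(vii), $2A_1A_2+3A_1B_2<A_2B_1$, so the coefficient $3A_1B_2-A_2B_1$ is strictly negative (indeed less than $-2A_1A_2$) throughout the regime $\beta\in(\beta_*,\beta^*)$. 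Hence positivity of $\hat g$ on $[0,1]$ cannot be read off from coefficient signs, and this is exactly the nontrivial step your outline omits while calling it ``bookkeeping''.

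What is actually needed (and what the paper does) is to locate the minimum of $\hat g$ on $[0,1]$: since $\hat g'(t)=2t\bigl(2A_1A_2t^2-(A_2B_1-3A_1B_2)\bigr)$, the only interior critical point would satisfy $t^2=\frac{A_2B_1-3A_1B_2}{2A_1A_2}>1$ (again the second inequality of Lemma \ref{inequalities}(vii)), so $\hat g$ has no critical point in $(0,1)$ and its minimum over $[0,1]$ is attained at the endpoints; $\hat g(0)=B_1B_2>0$ and $\hat g(1)=A_1A_2+3A_1B_2-A_2B_1+B_1B_2>0$ by the first inequality of Lemma \ref{inequalities}(vii). With this step supplied, your plan becomes precisely the paper's proof. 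Two small additional remarks: the roles of the two inequalities in (vii) are as just described (one controls the location of the critical point, the other the value at $t=1$), not ``making the coefficients positive''; and Lemma \ref{inequalities}(vi) is not needed in this proposition at all --- it is used later, in the argument excluding periodic points.
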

\begin{proof} Let us prove (i). We know that
\begin{eqnarray*}
g_{\beta}'(t)=\frac{A_1A_2t^4+(3A_1B_2-A_2B_1)t^2+B_1B_2}{(A_2t^2+B_2)^2}.
\end{eqnarray*}
Let us denote
\[\hat{g}(t)=A_1A_2t^4+(3A_1B_2-A_2B_1)t^2+B_1B_2.\]
It is enough to show that $\hat{g}(t)>0,$ for any $t\in [0,1].$ To do so, we will show that $\min\limits_{t\in[0,1]}\hat{g}(t)>0.$
It follows from Lemma \ref{inequalities} $(vii)$ that $\hat{g}(0)=B_1B_2>0$ and $\hat{g}(1)>0.$ It is clear that
$$\hat{g}'(t)=2t(2A_1A_2t^2-(A_2B_1-3A_1B_2))$$
Since $A_2B_1-3A_1B_2-2A_1A_2>0$ (see Lemma \ref{inequalities} $(vii)$) one has
\[t^2=\frac{A_2B_1-3A_1B_2}{2A_1A_2}>1.\] So, $\min\limits_{t\in[0,1]}\hat{g}(t)>0$, and hence $\hat{g}(t)>0$ for any $t\in [0,1].$
Therefore, $g_{\beta}'(t)>0$ for any $t\in [0,1],$ and this proves the assertion.

(ii). One can see that
\begin{eqnarray}\label{gbetaminust}
g_\beta(t)-t=-\frac{(A_2-A_1)t(t^2-\frac{1}{D})}{A_2t^2+B_2}
\end{eqnarray}
Therefore, we find that if $t\in [0,\frac{1}{\sqrt{D}}]$ then $g_{\beta}(t)\ge t$, and  if $t\in [\frac{1}{\sqrt{D}},1]$ then $g_{\beta}(t)\le t.$

(iii). It follows from \eqref{gbetaminust} that the function $g_\beta(t)$ has two fixed points $t=0$ and $t=\frac{1}{\sqrt{D}}.$  Let  $0\le g_{\beta}(t)\le\frac{1}{\sqrt{D}}$, and suppose that $t>\frac{1}{\sqrt{D}}.$
Due to (i) and  $t=\frac{1}{\sqrt{D}}$ is fixed point, we obtain  $g_{\beta}(t)>\frac{1}{\sqrt{D}}$, which is impossible. Similarly, one can show that $\frac{1}{\sqrt{D}}\le g_{\beta}(t)\le1$ implies $\frac{1}{\sqrt{D}}\le t\le1.$
\end{proof}

Let us start to study the asymptotical behavior of the dynamical system $f:\Delta\rightarrow {\mathbb{R}}_{+}$ given by \eqref{dynsystem}

\begin{theorem}\label{absentofperiodicpointoff}
The dynamical system $f:\Delta\rightarrow {\mathbb{R}}^2_{+}$, given
by \eqref{dynsystem} (with $\beta\in(0,\infty)$), does not have any
$k$ ($k\geq 2$) periodic points in $\Delta$.
\end{theorem}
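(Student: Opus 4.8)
The plan is to exploit the fact that the second coordinate map of $f$ is, up to a rescaling, governed by the one-dimensional map $g_\beta$ analyzed in Proposition~\ref{gincreasingfunction}. Writing $f(x',y')=(x,y)$ as in \eqref{dynsystem}, I would first observe that along an orbit one can eliminate $x$ to express the ratio $y/x$ in terms of $t:=y'/x'$. Indeed, dividing the second equation of \eqref{dynsystem} by the first gives
\begin{equation*}
\frac{y}{x}=\frac{B_1(x')^2y'+A_1(y')^3}{B_2(x')^3+A_2x'(y')^2}
=\frac{A_1t^3+B_1t}{A_2t^2+B_2}=g_\beta(t),
\end{equation*}
so if we set $t_n:=y_n/x_n$ along an orbit $(x_n,y_n)=f^n(x_0,y_0)$, then $t_{n-1}=g_\beta(t_n)$; that is, the ratio evolves \emph{backward} under $g_\beta$. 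Since the orbit stays in $\Delta=\{x>y\}$ we have $t_n\in[0,1)$ throughout, so this reduction is legitimate on the whole domain.

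The key point is then that a $k$-periodic point of $f$ would force $t_0=t_k$, i.e. $t_0$ would be a $k$-periodic point of the map $t\mapsto g_\beta(t)$ on $[0,1]$ (periodicity is the same forward or backward). But $g_\beta$ is strictly increasing on $[0,1]$ by Proposition~\ref{gincreasingfunction}(i), and a strictly increasing map of an interval has \emph{no} periodic points of period $\ge 2$: if $g_\beta(t_0)>t_0$ then applying the increasing map repeatedly gives a strictly increasing sequence $t_0<g_\beta(t_0)<g_\beta^2(t_0)<\cdots$, contradicting $g_\beta^k(t_0)=t_0$; similarly if $g_\beta(t_0)<t_0$; and if $g_\beta(t_0)=t_0$ the point is a fixed point. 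Hence every periodic ratio must actually be a fixed ratio: $t_0\in\{0,1/\sqrt D\}$ by \eqref{gbetaminust} (in the regime $\beta\in(\beta_*,\beta^*)$ where $D>1$; for $\beta$ outside this range $g_\beta(t)=t$ only at $t=0$ on $[0,1]$, by Lemma~\ref{inequalities}(ii),(iii)).

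It remains to rule out genuine $k$-periodic orbits with constant ratio $t_n\equiv c\in\{0,1/\sqrt D\}$. On the invariant ray $y=cx$, substituting $y'=cx'$ into the first equation of \eqref{dynsystem} gives $x=(B_2+A_2c^2)(x')^3$, so along such an orbit $x_{n-1}=\lambda x_n^3$ with $\lambda=B_2+A_2c^2>0$ a fixed constant; equivalently the forward map on this ray is $x_n=(x_{n-1}/\lambda)^{1/3}$, a strictly increasing one-dimensional map of $[0,\infty)$, which again has no periodic points of period $\ge2$ by the same monotonicity argument. Therefore any periodic point of $f$ of period $k\ge2$ would have to be a fixed point of $f$, a contradiction; so no such periodic points exist. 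The only mildly delicate bookkeeping is handling the boundary case $c=0$ (i.e. $y_n\equiv 0$) and confirming that the reduction $t_{n-1}=g_\beta(t_n)$ is valid with $x_n>0$ throughout an orbit in $\Delta$ — but since $x_n>y_n\ge0$ forces $x_n>0$, and $x_n=0$ would collapse the orbit to the origin (not in $\Delta$), this is immediate. I expect the monotonicity-of-$g_\beta$ step (already supplied by Proposition~\ref{gincreasingfunction}) to be the crux, with everything else routine.
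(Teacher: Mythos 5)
Your reduction of the orbit to the ratio $t_n=y^{(n)}/x^{(n)}$ with $t_{n-1}=g_\beta(t_n)$ is exactly the paper's mechanism, and your treatment of the constant-ratio (invariant ray) case mirrors the paper's Case (b). However, there is a concrete gap in the range of $\beta$: the theorem is asserted for all $\beta\in(0,\infty)$, while Proposition \ref{gincreasingfunction}(i), which you invoke for the statement that $g_\beta$ is increasing on $[0,1]$, is stated and proved only for $\beta\in(\beta_*,\beta^*)$. Its proof rests on Lemma \ref{inequalities}(vii), and the second inequality there, $2A_1A_2+3A_1B_2<A_2B_1$, genuinely fails for large $\beta$ (in the appendix this is $Q_4(\cosh\beta)>0$, which breaks down once $\cosh\beta$ exceeds the root $\hat t\approx 1.7$--$1.8$). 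So for $\beta\in(0,\beta_*]\cup[\beta^*,\infty)$ you are using a monotonicity statement the paper has not established, and you cannot simply extend the proposition's proof verbatim. This is precisely why the paper splits Case (a) into two regimes: for $\beta$ outside $(\beta_*,\beta^*)$ it avoids $g_\beta$ altogether and instead iterates the inequality $\frac{x^{(i-1)}}{y^{(i-1)}}>\frac{B_2}{B_1}\cdot\frac{x^{(i)}}{y^{(i)}}$ together with $B_2/B_1\ge 1$ (Lemma \ref{inequalities}(iii),(vi)) to reach a contradiction, reserving the monotonicity-of-$g_\beta$ argument for $\beta\in(\beta_*,\beta^*)$.

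The gap is repairable without monotonicity: in the outer regime one has, from
\begin{equation*}
g_\beta(t)-t=\frac{t\left[(B_1-B_2)-(A_2-A_1)t^2\right]}{A_2t^2+B_2},
\end{equation*}
together with Lemma \ref{inequalities}(ii),(iii), that $g_\beta(t)<t$ for every $t\in(0,1]$ and $g_\beta\bigl([0,1]\bigr)\subseteq[0,1]$; hence the backward iterates satisfy $g_\beta^{\,k}(t_0)\le g_\beta(t_0)<t_0$ whenever $t_0>0$, so a periodic ratio forces $t_0=0$, and then $g_\beta(t_1)=0$ forces $t_1=0$, etc., reducing to the ray $y=0$ which you already handle. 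If you add this (or prove positivity of the numerator of $g_\beta'$ on $[0,1]$ for all $\beta>0$, which is true but requires a new estimate), your argument becomes a complete and slightly more unified alternative to the paper's two-regime proof; as written, the appeal to Proposition \ref{gincreasingfunction}(i) outside $(\beta_*,\beta^*)$ is unjustified.
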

\begin{proof}
Assume that the dynamical system $f$  has a periodic
point $(x^{(0)},y^{(0)})$ with a period of $k$ in $\Delta,$ where $k\geq 2.$
This means that there are points
$$(x^{(0)},y^{(0)}),(x^{(1)},y^{(1)}),\dots,(x^{(k-1)},y^{(k-1)})\in \Delta,$$
such that they satisfy the following equalities
\begin{equation}\label{conforperiodic}
\left\{
\begin{array}{r}
B_2(x^{(i)})^3+A_2x^{(i)}(y^{(i)})^2= x^{(i-1)}\\
B_1(x^{(i)})^2y^{(i)}+A_1(y^{(i)})^3=y^{(i-1)}
\end{array}
\right.
\end{equation}
where $i=\overline{1,k},$ i.e.
$f\left(x^{(i-1)},y^{(i-1)}\right)=\left(x^{(i)},y^{(i)}\right),$ with  $x^{(k)}=x^{(0)},$
$y^{(k)}=y^{(0)}.$

Now again consider two different cases with respect to $y^{(0)}$.\\

{\sc Case (a).} Let $y^{(0)}>0.$ Then $x^{(i)},y^{(i)}$ should
be positive for all $i=\overline{1,k}.$ Let us look for different cases with respect to $\beta.$

Assume that $\beta\in (0,\beta_{*}]\cup[\beta^{*},\infty).$ We then have
\begin{eqnarray*}
\frac{x^{(i-1)}}{y^{(i-1)}} &=&
\frac{B_2}{B_1}\cdot\frac{x^{(i)}}{y^{(i)}}+\frac{(A_2B_1-A_1B_2)x^{(i)}y^{(i)}}{B_1\left(B_1(x^{(i)})^2+A_1(y^{(i)})^2\right)}
\end{eqnarray*}
where $i=\overline{1,k}.$

Due to Lemma \ref{inequalities} $(vi)$ and $x^{(i)},y^{(i)}>0$ for all $i=\overline{1,k},$ one finds
\begin{eqnarray}\label{ineqperoid}
\frac{x^{(i-1)}}{y^{(i-1)}} >
\frac{B_2}{B_1}\cdot\frac{x^{(i)}}{y^{(i)}},
\end{eqnarray}
for all  $i=\overline{1,k}.$

Iterating  \eqref{ineqperoid} we get
\begin{eqnarray*}
\frac{x^{(0)}}{y^{(0)}}>
\left(\frac{B_2}{B_1}\right)^{k}\cdot\frac{x^{(0)}}{y^{(0)}}.
\end{eqnarray*}
But, the last inequality is impossible, since Lemma
\ref{inequalities} $(iii)$ implies
$$\frac{B_2}{B_1}\ge1.$$
Hence, in this case, the dynamical system \eqref{dynsystem} does not
have any periodic points with $k\geq 2.$

Let $\beta\in (\beta_{*},\beta^{*})$,  then one finds
\begin{eqnarray*}
\frac{y^{(i-1)}}{x^{(i-1)}} =
g_{\beta}\left(\frac{y^{(i)}}{x^{(i)}}\right),\ \ \
i=\overline{1,k}.
\end{eqnarray*}
This means that $\frac{y^{(0)}}{x^{(0)}}$ is a $k$
periodic point for the function $g_\beta(t).$ But this contradictions to
Proposition \ref{gincreasingfunction} (i), since
the function $g_{\beta}(t)$ is increasing, and it does not have any periodic point on the segment $[0,1].$\\

{\sc Case (b).} Now suppose that $y^{(0)}=0.$ Since $k\geq 2$ we have
$x^{(0)}\neq \frac{1}{\cosh^3\beta}.$ So, from
\eqref{conforperiodic} one finds that $y^{(i)}=0$ for all
$i=\overline{1,k}.$ Then again \eqref{conforperiodic} implies that
$$(x^{(i)})^3\cosh^6\beta=x^{(i-1)}, \ \ \ \forall
i=\overline{1,k},$$ which means
$$x^{(i)}=\frac{1}{\cosh^2\beta}\sqrt[3]{x^{(i-1)}}, \ \ \ \forall
i=\overline{1,k}.$$ Hence, we have
$$x^{(0)}=\frac{1}{\cosh^{3}\beta}\sqrt[\leftroot{-2}\uproot{3}3^{k+1}]{x^{(0)}\cosh^3\beta}.$$
This yields either $x^{(0)}=0$ or $x^{(0)}= \frac{1}{\cosh^3\beta},$
which is a contradiction.
\end{proof}

\begin{theorem}\label{trajectorywhenbetain0beta*infty}
Let  $f:\Delta\to{\mathbb{R}}^2_{+}$ be the dynamical system given by
\eqref{dynsystem} and $\beta\in (0,\beta_{*}]\cup[\beta^{*},\infty).$ Then the following assertions hold true:
\begin{enumerate}
\item[(i)] if $y^{(0)}>0$ then the trajectory
$\{(x^{(n)},y^{(n)})\}_{n=0}^{\infty}$ of $f$ starting from the
point $(x^{(0)},y^{(0)})$ is finite.

\item[(ii)] if $y^{(0)}=0$ then the trajectory
$\{(x^{(n)},y^{(n)})\}_{n=0}^{\infty}$  starting from the point
$(x^{(0)},y^{(0)})$ has the following form
\begin{equation*}
\left\{
\begin{array}{l}
x^{(n)} = \cfrac{\sqrt[3^n]{x^{(0)}\cosh^3\beta}}{\cosh^3\beta}\\
y^{(n)} = 0,
\end{array}
\right.
\end{equation*}
and it converges to the fixed point $(\frac{1}{\cosh^3\beta},0).$
\end{enumerate}
\end{theorem}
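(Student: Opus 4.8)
The plan is to analyze the two coordinate components of the dynamical system \eqref{dynsystem} separately, exploiting the fact that on the set $\{y=0\}$ the map $f$ decouples into a single scalar recursion. For part (ii), I would set $y^{(0)}=0$; since the second equation of \eqref{dynsystem} reads $B_1(x')^2y'+A_1(y')^3=y$, putting $y=0$ forces $y'=0$ (all coefficients are positive and $x'>0$), so by induction $y^{(n)}=0$ for all $n$. The first equation then collapses to $B_2(x^{(n+1)})^3 = x^{(n)}$, i.e. $(x^{(n+1)})^3\cosh^6\beta = x^{(n)}$ using $B_2=\cosh^6\beta$ from \eqref{A2B2}. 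Solving this explicitly gives $x^{(n+1)} = \cosh^{-2}\beta\,(x^{(n)})^{1/3}$, and unrolling the recursion — tracking the exponents of $x^{(0)}$ and of $\cosh\beta$ as geometric sums — yields the closed form $x^{(n)} = \cosh^{-3}\beta\,\bigl(x^{(0)}\cosh^3\beta\bigr)^{1/3^n}$ claimed in the statement; this is a routine induction. Convergence then follows because $\bigl(x^{(0)}\cosh^3\beta\bigr)^{1/3^n}\to 1$ as $n\to\infty$ for any $x^{(0)}>0$, so $x^{(n)}\to\cosh^{-3}\beta$, and the limit point $(\cosh^{-3}\beta,0)$ is exactly the fixed point identified in Theorem \ref{fixed-p}(i).

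For part (i) the point is to show that when $\beta\in(0,\beta_*]\cup[\beta^*,\infty)$ a trajectory starting with $y^{(0)}>0$ cannot stay in $\Delta=\{x>y\}$ forever, hence is finite (it eventually leaves $\Delta$, where the recursion is no longer defined by Remark \ref{positivityofhxn}). I would argue by contradiction: suppose the whole forward orbit $\{(x^{(n)},y^{(n)})\}$ lies in $\Delta$. Since $y^{(0)}>0$ and all coefficients are positive, every $x^{(n)},y^{(n)}$ is strictly positive. Then the computation already displayed in the proof of Theorem \ref{absentofperiodicpointoff}, Case (a), applies verbatim: from \eqref{dynsystem} one gets
\begin{eqnarray*}
\frac{x^{(n-1)}}{y^{(n-1)}} = \frac{B_2}{B_1}\cdot\frac{x^{(n)}}{y^{(n)}}+\frac{(A_2B_1-A_1B_2)x^{(n)}y^{(n)}}{B_1\bigl(B_1(x^{(n)})^2+A_1(y^{(n)})^2\bigr)},
\end{eqnarray*}
and by Lemma \ref{inequalities}(vi) the second term is strictly positive, so $\dfrac{x^{(n-1)}}{y^{(n-1)}} > \dfrac{B_2}{B_1}\cdot\dfrac{x^{(n)}}{y^{(n)}}$ for every $n\ge 1$. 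Iterating from $n$ down to $0$ gives $\dfrac{x^{(0)}}{y^{(0)}} > \Bigl(\dfrac{B_2}{B_1}\Bigr)^{n}\dfrac{x^{(n)}}{y^{(n)}}$, equivalently $\dfrac{x^{(n)}}{y^{(n)}} < \Bigl(\dfrac{B_1}{B_2}\Bigr)^{n}\dfrac{x^{(0)}}{y^{(0)}}$.

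Now I invoke Lemma \ref{inequalities}(iii): for $\beta\in(0,\beta_*]\cup[\beta^*,\infty)$ we have $B_1\le B_2$, so $\Bigl(\dfrac{B_1}{B_2}\Bigr)^{n}\le 1$ and the ratio $x^{(n)}/y^{(n)}$ stays bounded by $x^{(0)}/y^{(0)}$ — which by itself does not yet give a contradiction, so I need the strict gain. The cleaner route, which I would actually write out, is to combine the strict inequality above with the monotone structure: because the extra positive term does not vanish, one in fact gets a uniform multiplicative loss at each step unless $y^{(n)}\to 0$; but if $y^{(n)}\to 0$ while $x^{(n)}$ stays bounded below (which it does, since on $y=0$ the only attractor is $\cosh^{-3}\beta>0$ and by continuity $x^{(n)}$ cannot collapse to $0$), then the ratio $x^{(n)}/y^{(n)}\to\infty$, contradicting the bound $x^{(n)}/y^{(n)} < (B_1/B_2)^n\, x^{(0)}/y^{(0)} \le x^{(0)}/y^{(0)}$. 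Either way the orbit cannot remain in $\Delta$ indefinitely, so it is finite. I expect this last step — pinning down precisely why the bounded-ratio estimate forces an exit from $\Delta$ rather than merely an asymptotic statement — to be the main obstacle, and I would handle it by showing directly that $x^{(n)}/y^{(n)}$ must eventually exceed any prescribed threshold (using that $y^{(n)}$ decreases geometrically relative to $x^{(n)}$ while $x^{(n)}$ is bounded), contradicting monotonicity of the ratio bound; alternatively one shows $x^{(n)} - y^{(n)}$ eventually becomes non-positive by a direct estimate on \eqref{dynsystem} using $B_1\le B_2$ and $A_1<A_2$ from Lemma \ref{inequalities}(ii).
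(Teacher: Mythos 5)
Your part (ii) is correct and is essentially the paper's own argument (forcing $y^{(n)}=0$, reducing to $B_2(x^{(n)})^3=x^{(n-1)}$, unrolling, and passing to the limit). The problem is in part (i). You correctly derive the ratio recursion
\[
\frac{x^{(n-1)}}{y^{(n-1)}}=\frac{B_2}{B_1}\cdot\frac{x^{(n)}}{y^{(n)}}
+\frac{(A_2B_1-A_1B_2)\,\frac{x^{(n)}}{y^{(n)}}}{B_1^2\left(\frac{x^{(n)}}{y^{(n)}}\right)^2+A_1B_1},
\]
and the a priori bound $r_n:=x^{(n)}/y^{(n)}\le (B_1/B_2)^n r_0\le r_0$, and you rightly note this alone is not a contradiction. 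But the decisive step — the one the paper takes — is to feed that bound back into the recursion: the coefficient $\frac{A_2B_1-A_1B_2}{B_1^2r_n^2+A_1B_1}$ is decreasing in $r_n$, so $r_n\le r_0$ yields the \emph{uniform} estimate $r_{n-1}\ge\left(\frac{B_2}{B_1}+\frac{(A_2B_1-A_1B_2)(y^{(0)})^2}{B_1^2(x^{(0)})^2+A_1B_1(y^{(0)})^2}\right)r_n$ with a factor strictly greater than $1$ independent of $n$; hence $r_n\to 0$ geometrically, which contradicts $r_n>1$ (every point of $\Delta$ has $x>y$). No dichotomy "uniform loss unless $y^{(n)}\to 0$" is needed: once $r_n\le r_0$ is known, the contraction holds at every step, and the possibility $r_n\to\infty$ is already excluded by that same bound. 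Your sketch leaves exactly this step open (you even flag it as "the main obstacle").

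Moreover, the workaround you propose for that obstacle has the monotonicity reversed and would fail. From $r_{n-1}>\frac{B_2}{B_1}r_n\ge r_n$ the ratio $x^{(n)}/y^{(n)}$ is strictly \emph{decreasing} along the forward orbit: $y$ gains on $x$ and the orbit is pushed toward the boundary $x=y$ of $\Delta$. So it is false that "$y^{(n)}$ decreases geometrically relative to $x^{(n)}$", and $x^{(n)}/y^{(n)}$ cannot "eventually exceed any prescribed threshold"; the contradiction must come from the ratio being driven below $1$, not from it blowing up. Likewise, the assertion that $x^{(n)}$ stays bounded away from $0$ "by continuity" with the attractor on the line $y=0$ is not justified (continuity gives no uniform control over infinitely many iterations) — and it becomes unnecessary once the uniform contraction argument above is used.
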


\begin{proof} (i) Let $y^{(0)}>0$ and  suppose that the trajectory
$\{(x^{(n)},y^{(n)})\}_{n=0}^{\infty}$ of the dynamical system
starting from the point $(x^{(0)},y^{(0)})$ is infinite. This means
that the points $(x^{(n)},y^{(n)})$ are well defined and belong to the domain $\Delta$ for all
$n\in\bn.$ Since $y^{(0)}>0$ we have $y^{(n)}>0$ for all  $n\in\bn.$  Then, it follows from \eqref{dynsystem} that
\begin{eqnarray}\label{xn-1yn-1xnyn}
\frac{x^{(n-1)}}{y^{(n-1)}} =
\frac{B_2}{B_1}\cdot\frac{x^{(n)}}{y^{(n)}}+\frac{(A_2B_1-A_1B_2)\cfrac{x^{(n)}}{y^{(n)}}}{B_1^2\left(\cfrac{x^{(n)}}{y^{(n)}}\right)^2+A_1B_1}
\quad \textrm{for all} \ \ n\in\bn.
\end{eqnarray}
 It yields that
\begin{eqnarray*}
\frac{x^{(n-1)}}{y^{(n-1)}} >
\frac{B_2}{B_1}\cdot\frac{x^{(n)}}{y^{(n)}},
\end{eqnarray*}
and
\begin{eqnarray}\label{x0y0andxnyn}
\frac{x^{(0)}}{y^{(0)}} >
\left(\frac{B_2}{B_1}\right)^n\cdot\frac{x^{(n)}}{y^{(n)}}, \quad \textrm{for all} \ \ n\in\bn.
\end{eqnarray}
  It follows from \eqref{x0y0andxnyn} and Lemma \ref{inequalities} $(iii)$ that
\begin{eqnarray}\label{xnynx0y0}
\frac{x^{(n)}}{y^{(n)}}<
\left(\frac{B_1}{B_2}\right)^n\cdot\frac{x^{(0)}}{y^{(0)}}\le\frac{x^{(0)}}{y^{(0)}},
\end{eqnarray}
for all $n\in\bn.$ Using \eqref{xn-1yn-1xnyn} and \eqref{xnynx0y0} one  gets
\begin{eqnarray*}
\frac{x^{(n-1)}}{y^{(n-1)}} >
\left(\frac{B_2}{B_1}+\frac{(A_2B_1-A_1B_2)(y^{(0)})^2}{B_1^2(x^{(0)})^2+A_1B_1(y^{(0)})^2}\right)\cdot\frac{x^{(n)}}{y^{(n)}},
\end{eqnarray*}
and
\begin{eqnarray*}
\frac{x^{(n)}}{y^{(n)}}<
\left(\frac{B_2}{B_1}+\frac{(A_2B_1-A_1B_2)(y^{(0)})^2}{B_1^2(x^{(0)})^2+A_1B_1(y^{(0)})^2}\right)^{-n}\cdot\frac{x^{(0)}}{y^{(0)}}.
\end{eqnarray*}
We know that if $\beta\in (0,\beta_{*}]\cup[\beta^{*},\infty)$ then due to Lemma \ref{inequalities} $(iii)$ one finds
\[\frac{B_2}{B_1}+\frac{(A_2B_1-A_1B_2)(y^{(0)})^2}{B_1^2(x^{(0)})^2+A_1B_1(y^{(0)})^2}>\frac{B_2}{B_1}\ge1.\]
Therefore, we conclude that, for all $\beta\in (0,\beta_{*}]\cup[\beta^{*},\infty)$
\begin{eqnarray*}
\frac{x^{(n)}}{y^{(n)}}\to 0
\end{eqnarray*}
as $n\to \infty.$

On the other hand, due to  $(x^{(n)},y^{(n)})\in \Delta,$ we have
\begin{eqnarray*}\label{conforxnyn}
\frac{x^{(n)}}{y^{(n)}}\ge 1,
\end{eqnarray*}
for all $n\in\bn.$ This contradiction shows that the trajectory
$\{(x^{(n)},y^{(n)})\}_{n=0}^{\infty}$ must be finite.\\

(ii) Now let $y^{(0)}=0$, then \eqref{dynsystem} implies
$y^{(n)}=0$ for all $n\in\bn.$ Hence, from \eqref{dynsystem} one
finds
$$x^{(n)}\cosh^3\beta=\sqrt[3]{x^{(n-1)}\cosh^3\beta}.$$
So, iterating the last equality we obtain
$$x^{(n)}\cosh^3\beta=\sqrt[3^n]{x^{(0)}\cosh^3\beta},$$
which yields the desired equality and the trajectory $\{(x^{(n)},0)\}_{n=0}^{\infty}$ converges to the fixed point $(\frac{1}{\cosh^3\beta},0).$
\end{proof}

\begin{theorem}\label{trajectorywhenbetainbeta_*beta*}
Let  $f:\Delta\to{\mathbb{R}}^2_{+}$ be the dynamical system given by
\eqref{dynsystem} and $\beta\in (\beta_{*},\beta^{*}).$ Then the following assertions hold true:
\begin{enumerate}
\item[(i)] There are  two invariant lines  $l_1=\{(x,y)\in\Delta: y=0\}$ and $l_2=\{(x,y)\in\Delta: y=\frac{x}{\sqrt{D}}\}$ w.r.t. $f$;
 \item [(ii)] if an initial point $(x^{(0)},y^{(0)})$ belongs to the invariant
 line $l_k,$ then its trajectory $\{(x^{(n)},y^{(n)})\}_{n=0}^{\infty}$
 converges to the  fixed point belonging to the line $l_k,$ where $k=\overline{1,2};$
  \item [(iii)] if an initial point $(x^{(0)},y^{(0)})$ satisfies the following condition
      \[\frac{y^{(0)}}{x^{(0)}}\in \left(0,\frac{1}{\sqrt{D}}\right),\] then its trajectory $\{(x^{(n)},y^{(n)})\}_{n=0}^{\infty}$ converges to the fixed point $(\frac{1}{\cosh^3\beta},0)$ which belongs to $l_1;$
  \item [(iv)] if an initial point $(x^{(0)},y^{(0)})$ satisfies the following condition
      \[\frac{y^{(0)}}{x^{(0)}}\in \left(\frac{1}{\sqrt{D}},1\right),\] then  its trajectory $\{(x^{(n)},y^{(n)})\}_{n=0}^{\infty}$  is finite.
\end{enumerate}
\end{theorem}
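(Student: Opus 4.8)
The plan is to reduce the planar dynamics on $\Delta$ to a scalar dynamics for the ratio $t_n:=y^{(n)}/x^{(n)}$ together with a scalar recursion for $x^{(n)}$. Exactly as in the proofs of Theorems \ref{absentofperiodicpointoff} and \ref{trajectorywhenbetain0beta*infty}, dividing the two equations of \eqref{dynsystem} gives $t_{n-1}=g_\beta(t_n)$, with $g_\beta$ the function \eqref{gbetat}, so $t_n=g_\beta^{-1}(t_{n-1})$ whenever a preimage in $[0,1)$ exists; all the information about $g_\beta$ that is needed is contained in Proposition \ref{gincreasingfunction} together with the facts that the fixed points of $g_\beta$ in $[0,1]$ are precisely $0$ and $1/\sqrt D$, that $g_\beta$ is a continuous increasing bijection of $[0,1)$ onto $[0,g_\beta(1))$, and that $g_\beta(1)=\tfrac{A_1+B_1}{A_2+B_2}<1$ by Lemma \ref{inequalities}(iv) (hence $1/\sqrt D<g_\beta(1)$, since $D>1$ by Lemma \ref{inequalities}(v) and $g_\beta$ is increasing). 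For the first coordinate, the first equation of \eqref{dynsystem} reads $x^{(n-1)}=(x^{(n)})^3\big(B_2+A_2t_n^2\big)$, i.e.\ $x^{(n)}=\big(x^{(n-1)}/(B_2+A_2t_n^2)\big)^{1/3}$.

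Part (i) follows by substitution: on $l_1$ the second equation of \eqref{dynsystem} becomes $y'(B_1(x')^2+A_1(y')^2)=0$, which forces $y'=0$; on $l_2$ the ratio equals the $g_\beta$-fixed point $1/\sqrt D$, and since $g_\beta$ is injective on $[0,1]$ and $1/\sqrt D<g_\beta(1)$, its only preimage in $[0,1)$ is $1/\sqrt D$ again. For (ii): restricted to $l_1$ the recursion collapses to $x^{(n)}\cosh^3\beta=\sqrt[3]{x^{(n-1)}\cosh^3\beta}$, so by the computation in Theorem \ref{trajectorywhenbetain0beta*infty}(ii) one gets $x^{(n)}\to 1/\cosh^3\beta$; restricted to $l_2$, inserting $y^{(n)}=x^{(n)}/\sqrt D$ into the first equation and using \eqref{D}, \eqref{E} yields $(x^{(n)})^3=DE\,x^{(n-1)}$, which in logarithmic coordinates is a $1/3$-contraction with fixed point $\sqrt{DE}$, so $(x^{(n)},y^{(n)})\to(\sqrt{DE},\sqrt E)$, the second fixed point of Theorem \ref{fixed-p}(ii); in both cases $1/\sqrt D<1$ keeps the orbit inside $\Delta$.

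For (iii), let $t_0\in(0,1/\sqrt D)$. Since $g_\beta$ maps $[0,1/\sqrt D]$ bijectively onto itself (Proposition \ref{gincreasingfunction}(i),(iii)) and has no fixed point in $(0,1/\sqrt D)$, Proposition \ref{gincreasingfunction}(ii) gives $g_\beta(t)>t$ there; induction then yields $t_n\in(0,1/\sqrt D)$ for all $n$ — so the orbit is infinite and stays in $\Delta$ — and $(t_n)$ is strictly decreasing, hence $t_n\downarrow 0$ (its limit is a $g_\beta$-fixed point in $[0,1/\sqrt D)$ by continuity). Then $B_2+A_2t_n^2\to B_2=\cosh^6\beta$, so, setting $a_n=\log x^{(n)}$, the $x$-recursion reads $a_n=\tfrac13a_{n-1}+b_n$ with $b_n=-\tfrac13\log(B_2+A_2t_n^2)\to-2\log\cosh\beta$; writing $\delta_n:=a_n+3\log\cosh\beta$ one has $\delta_n=\tfrac13\delta_{n-1}+o(1)$, and since $(a_n)$ is bounded, $\limsup|\delta_n|\le\tfrac13\limsup|\delta_n|$ forces $\delta_n\to 0$, i.e.\ $x^{(n)}\to 1/\cosh^3\beta$. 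Hence $y^{(n)}=t_nx^{(n)}\to 0$, and the orbit converges to $(1/\cosh^3\beta,0)\in l_1$.

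For (iv), let $t_0\in(1/\sqrt D,1)$. As long as the orbit remains in $\Delta$ one has $g_\beta(t_n)=t_{n-1}$ with $t_n\in(1/\sqrt D,1)$; since $g_\beta$ has no fixed point in $(1/\sqrt D,1]$, Proposition \ref{gincreasingfunction}(ii) gives $g_\beta(t_n)<t_n$, hence $t_n>t_{n-1}$, so $(t_n)$ is strictly increasing. The orbit cannot be infinite: if it were, each $t_n$ would admit a preimage under $g_\beta$ in $[0,1)$, forcing $t_n<g_\beta(1)<1$ for all $n$ and hence $(t_n)$ to converge to a $g_\beta$-fixed point in $(t_0,1]$, of which there is none. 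Therefore there is a finite $N$ with $(x^{(N)},y^{(N)})\in\Delta$ but $t_N\ge g_\beta(1)$, so \eqref{dynsystem} has no solution $(x^{(N+1)},y^{(N+1)})$ lying in $\Delta$, and the orbit is finite. The one genuinely delicate point is part (iii): turning the angular convergence $t_n\to 0$ into convergence of $(x^{(n)},y^{(n)})$ itself requires the logarithmic-coordinate contraction estimate for the $x$-recursion with its drifting coefficient $B_2+A_2t_n^2$; making precise, in (iv), that a ``finite orbit'' is exactly the failure of a $\Delta$-valued preimage to exist once $t_n$ passes $g_\beta(1)$ is elementary but also needs to be stated carefully.
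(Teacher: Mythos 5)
Your proposal is correct and follows essentially the same route as the paper: reduce the planar dynamics to the scalar recursion $t_{n-1}=g_\beta(t_n)$ for the ratio $t_n=y^{(n)}/x^{(n)}$, use the monotonicity and fixed-point structure of $g_\beta$ from Proposition \ref{gincreasingfunction} and Lemma \ref{inequalities} to control the angular variable in each of the four cases, and then treat the $x$-coordinate separately. The only notable variation is in part (iii), where your logarithmic contraction estimate $\delta_n=\tfrac13\delta_{n-1}+o(1)$ is a cleaner (and more carefully justified) version of the paper's nested-cube-root computation $x^{(n)}=\sqrt[3]{b_n\sqrt[3]{b_{n-1}\cdots}}$, whose stated limit should read $\sqrt{\lim b_n}=1/\sqrt{B_2}$ rather than $\lim b_n$.
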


\begin{proof} (i). It follows from \eqref{dynsystem} that
if $y=0$ then $y'=0,$ which means $l_1$ is an invariant line. Let
$\frac{y}{x}=\frac{1}{\sqrt{D}}.$ Again from \eqref{dynsystem} it
follows that
$\frac{1}{\sqrt{D}}=\frac{y}{x}=g_{\beta}(\frac{y'}{x'}).$ Since
$g_{\beta}(t)$ is an increasing function on segment $[0,1]$ and
$t=\frac{1}{\sqrt{D}}$ is its fixed point, we then get
$\frac{y'}{x'}=\frac{1}{\sqrt{D}},$ which yields that $l_2$ is an
invariant line for $f$.

(ii).  Let us consider a case when an initial point $(x^{(0)},y^{(0)})$ belongs
to $l_k$. Let $(x_k,y_k)$ be the fixed point of $f$ belonging to $l_k$ ($k=\overline{1,2}$). It follows from \eqref{dynsystem} that
\begin{eqnarray}\label{y0x0gbetaynxn}
\frac{y_k}{x_k}=\frac{y^{(0)}}{x^{(0)}}=g^{(n)}_\beta\left(\frac{y^{(n)}}{x^{(n)}}\right).
\end{eqnarray}
for all $n\in \bn.$ Since $g_\beta(t)$ is increasing and $t=\frac{y_k}{x_k}$ is its fixed point, we have
\begin{eqnarray}\label{y0x0ynxn}
\frac{y_k}{x_k}=\frac{y^{(n)}}{x^{(n)}},
\end{eqnarray}
for all $n\in \bn.$ We know that $\frac{y_1}{x_1}=0$ and $\frac{y_2}{x_2}=\frac{1}{\sqrt{D}}.$

In the case when $\frac{y_1}{x_1}=0$, one gets $$\left(x^{(n)},y^{(n)}\right)=\left(\frac{\sqrt[3^n]{x^{(0)}\cosh^3\beta}}{\cosh^3\beta},0\right),$$ hence the trajectory converges to the fixed point $(x_1,y_1)=(\frac{1}{\cosh^3\beta},0)$. Clearly, it belongs to $l_1.$

In the case when $\frac{y_2}{x_2}=\frac{1}{\sqrt{D}},$ we have
$$\left(x^{(n)},y^{(n)}\right)=\left(\sqrt{DE}\sqrt[3^n]{\frac{x^{(0)}}{\sqrt{DE}}},\sqrt{E}\sqrt[3^n]{\frac{y^{(0)}}{\sqrt{E}}}\right),$$ and
the trajectory converges to the fixed point $(x_2,y_2)=(\sqrt{DE},\sqrt{E})$ which belongs to the line $l_2.$\\

(iii). Assume that an initial point $(x^{(0)},y^{(0)})$
satisfies
\begin{equation}\label{0-D}
\frac{y^{(0)}}{x^{(0)}}\in \left(0,\frac{1}{\sqrt{D}}\right).
\end{equation}
It then follows from \eqref{dynsystem} that
\begin{eqnarray*}
\frac{y^{(n-1)}}{x^{(n-1)}}=g_\beta\left(\frac{y^{(n)}}{x^{(n)}}\right),
\end{eqnarray*}
for all $n\in \bn.$ Since \eqref{0-D} and due to Proposition \eqref{gincreasingfunction} (ii), we conclude that
\[\frac{y^{(n)}}{x^{(n)}}\in \left(0,\frac{1}{\sqrt{D}}\right),\]
for all $n\in \bn.$ According to Proposition \ref{gincreasingfunction}(iii) we get
\[\frac{y^{(0)}}{x^{(0)}}>\frac{y^{(1)}}{x^{(1)}}>\cdots>\frac{y^{(n)}}{x^{(n)}}>\cdots,\]
and the sequence $$c_n:=\cfrac{y^{(n)}}{x^{(n)}}$$ converges to $0.$

Let us denote $$b_n:=\frac{1}{B_2+c_{n}A_2}.$$ From \eqref{dynsystem}, one can easily get
\[x^{(n)}=\sqrt[3]{b_{n}\sqrt[3]{b_{n-1}\sqrt[3]{\cdots{\sqrt[3]{b_1x^{(0)}}}}}}\]
and \[\lim\limits_{n\to\infty}x^{(n)}=\lim\limits_{n\to\infty}b_n=\frac{1}{\sqrt{B_2}}=\frac{1}{\cosh^3\beta}.\] Therefore, the trajectory $\{(x^{(n)},y^{(n)})\}_{n=0}^{\infty}$ converges to the fixed point $(\frac{1}{\cosh^3\beta},0)$ which belongs to $l_1.$\\

(iv) Now assume that
\begin{equation}\label{D-1}
\frac{y^{(0)}}{x^{(0)}}\in \left(\frac{1}{\sqrt{D}},1\right).
\end{equation}
We suppose that the trajectory
$\{(x^{(n)},y^{(n)})\}_{n=0}^{\infty}$ is infinite. This means
that the points $(x^{(n)},y^{(n)})$ are well defined and belong to the domain $\Delta$ for all
$n\in\bn.$ Then, it follows from \eqref{dynsystem} that
\begin{eqnarray*}
\frac{y^{(n-1)}}{x^{(n-1)}}=g_\beta\left(\frac{y^{(n)}}{x^{(n)}}\right),
\end{eqnarray*}
for all $n\in \bn.$ Since \eqref{D-1} and  due to Proposition \eqref{gincreasingfunction} (ii), we conclude that
\[\frac{y^{(n)}}{x^{(n)}}\in \left(\frac{1}{\sqrt{D}},1\right),\]
for all $n\in \bn.$ According to Proposition \ref{gincreasingfunction}(iii) one finds
\[\frac{y^{(0)}}{x^{(0)}}<\frac{y^{(1)}}{x^{(1)}}<\cdots<\frac{y^{(n)}}{x^{(n)}}<\cdots.\] Since $(x^{(n)},y^{(n)})\in\Delta$ and the sequence $\cfrac{y^{(n)}}{x^{(n)}}$ is bounded, so it converges to some point $\tilde{t}\in(\frac{1}{\sqrt{D}},1].$ We know that the point  $\tilde{t}$ should be a fixed point of $g_\beta(t)$ on $(\frac{1}{\sqrt{D}},1].$ However, the function $g_\beta(t)$ does not have any fixed points on $(\frac{1}{\sqrt{D}},1].$ Hence, this contradiction shows that the trajectory
$\{(x^{(n)},y^{(n)})\}_{n=0}^{\infty}$ must be finite.
\end{proof}

\section{Uniqueness of QMC}

In this section we prove the first part of the main theorem (see
Theorem \ref{main}), i.e. we show the uniqueness of the forward
quantum $d$-Markov chain in the regime $\b\in(0,\b_*)\cup
[\b^*,\infty)$.

So, assume that $\beta\in(0,\beta_{*}]\cup[\beta^{*},\infty).$
From Theorem \ref{trajectorywhenbetain0beta*infty}, we infer that equations
\eqref{eq1},\eqref{eq2} have a lot of parametrical solutions
$(w_0(\a),\{h_x(\a)\})$ given by
\begin{equation}\label{solutionofmainstate}
w_0(\alpha)=\left(
              \begin{array}{cc}
                \dfrac{1}{\alpha} & 0 \\
                0 & \dfrac{1}{\alpha} \\
              \end{array}
            \right),\quad
 h^{(n)}_x(\alpha)=\left(
                    \begin{array}{cc}
                      \dfrac{\sqrt[3^{n}]{\alpha\cosh^3\beta}}{\cosh^3\beta} & 0 \\
                      0 & \dfrac{\sqrt[3^{n}]{\alpha\cosh^3\beta}}{\cosh^3\beta} \\
                    \end{array}
                 \right),
                 \end{equation}
for every $x\in V$, here $\alpha$ is any positive real number.

The boundary conditions corresponding to the fixed point of
\eqref{dynsystem} are the following ones:
\begin{equation}\label{solutionofmainstatewhenalphafixed}
w_0=\left(
               \begin{array}{cc}
                      {\cosh^3\beta} & 0 \\
                      0 & {\cosh^3\beta} \\
                    \end{array}
                  \right), \quad
                  h^{(n)}_x=\left(
                   \begin{array}{cc}
                \dfrac{1}{\cosh^3\beta} & 0 \\
                0 & \dfrac{1}{\cosh^3\beta} \\
              \end{array}
            \right), \ \ \forall x\in V,
            \end{equation}
which correspond to the value of $\alpha_0=\cfrac{1}{\cosh^3\beta}$
in \eqref{solutionofmainstate}. Therefore, in the sequel we denote such
operators by $w_0\left(\alpha_0\right)$ and
$h_x^{(n)}\left(\alpha_0\right)$, respectively.

Let us consider the states $\ffi^{(n,f)}_{w_0(\a),\bh(\alpha)}$
corresponding to the solutions
$(w_0(\alpha),\{h_x^{(n)}(\alpha)\})$. By definition we have
\begin{eqnarray}\label{uniq}
\ffi^{(n,f)}_{w_0(\a),\bh(\alpha)}(x) &=&
\tr\left(w^{1/2}_{0}(\alpha)\prod_{i=0}^{n-1}K_{[i,i+1]}\prod_{x\in  \overrightarrow{W}_n}h^{(n)}_x(\alpha)
\prod_{i=1}^{n}K_{[n-i,n+1-i]}w^{1/2}_{0}(\alpha)x\right)\nonumber\\
&=&\frac{\left(\sqrt[3^{n+1}]{\alpha\cosh^4\beta}\right)^{3^{n+1}}}{{\alpha}(\cosh^4\beta)^{3^{n+1}}}
\tr\left(\prod_{i=0}^{n-1}K_{[i,i+1]}\prod_{i=1}^{n}K_{[n-i,n+1-i]}x\right)\nonumber\\
&=&\frac{\alpha_0^{3^{n+1}}}{\alpha_0}\tr\left(\prod_{i=0}^{n-1}K_{[i,i+1]}\prod_{i=1}^{n}K_{[n-i,n+1-i]}x\right)\nonumber\\
&=& \tr\left((w^{1/2}_{0}(\alpha_0)\prod_{i=0}^{n-1}K_{[i,i+1]}\prod_{x\in  \overrightarrow{W}_n}h^{(n)}_x(\alpha_0)
\prod_{i=1}^{n}K_{[n-i,n+1-i]}w^{1/2}_{0}(\alpha_0)x\right)\nonumber\\
&=&\ffi^{(n,f)}_{w_0(\a_0),\bh(\alpha_0)}(x),
\end{eqnarray}
for any $\alpha$. Hence, from the definition of quantum $d-$Markov
chain we find that
$\ffi^{(f)}_{w_0(\a),\bh(\alpha)}=\ffi^{(f)}_{w_0(\a_0),\bh(\alpha_0)}$,
which yields that the uniqueness of forward quantum $d$-Markov chain
associated with the model \eqref{1Kxy1}.

Hence, Theorem \ref{main} (i) is proved.

\section{Existence of phase transition}

This section is devoted to the proof of part (ii) of Theorem \ref{main}. We shall prove the existence of the
phase transition in the regime  $\beta\in(\beta_{*},\beta^{*}).$

In this section, for the sake of simplicity of formulas, we will use
the following notations, for the Pauli matrices:
\begin{eqnarray*}
\sigma_0:=\id, \quad \sigma_1:=\sigma_x, \quad \sigma_2:=\sigma_y, \quad \sigma_3:=\sigma_z
\end{eqnarray*}

According to Theorem \ref{fixed-p} in the considered regime there
are two fixed points of the dynamical system  \eqref{dynsystem}.
Then the corresponding solutions of equations
\eqref{eq1},\eqref{eq2} can be written as follows:
$(w_0(\a_0),\{h_x(\a_0)\})$ and $(w_0(\gamma),\{h_x(\gamma)\})$,
where
\begin{eqnarray*}
&&w_0(\a_0)=\frac{1}{\a_0}\sigma_0, \quad h_x(\a_0)=\a_0\sigma^{(x)}_0\\[2mm]
&&w_0(\gamma)=\frac{1}{\gamma_0}\sigma_0, \quad h_x(\gamma)=\gamma_0\sigma^{(x)}_0+\gamma_1\sigma_1^{(x)}
\end{eqnarray*}
here $\a_0=\frac{1}{\cosh^3\beta}$, $\gamma=(\gamma_0,\gamma_1)$ with $\gamma_0=\sqrt{DE},$ $\gamma_1=\sqrt{E}$.

By $\ffi^{(f)}_{w_0(\a_0),{\bh(\gamma)}}$, $\ffi^{(f)}_{w_0(\gamma),{\bh(\gamma)}}$ we denote the corresponding forward quantum $d-$Markov chains. To prove the existence of the phase transition, we need to show that these two states are not quasi-equivalent. To do so, we will need some auxiliary facts and results.\\

Denote
\begin{equation}\label{mainMatrix}
A=\left(
    \begin{array}{cc}
      \cosh^6\beta\gamma_0^2+\sinh^2\beta\cosh^3\beta\gamma_1^2 & \gamma_0\gamma_1\sinh^2\beta\cosh^2\beta(1+\cosh\beta) \\
      \gamma_0\gamma_1\sinh\beta\cosh^2\beta(1+\cosh\beta) & \sinh\beta\cosh^4\beta\gamma_0^2+\sinh^3\beta\cosh\beta\gamma_1^2 \\
    \end{array}
  \right).
\end{equation}

Let us study some properties of the matrix $A.$
One can easily check out that the matrix $A$ given by \eqref{mainMatrix} can be written as follows
\begin{equation}\label{mainformofmainMatrix}
A=\left(
    \begin{array}{cc}
      \dfrac{\cosh\beta(\sinh\beta+\cosh^3\beta)}{\sinh\beta(1+\cosh\beta)^2} & \dfrac{\sqrt{(A_2-A_1)(B_1-B_2)}}{\sinh\beta\cosh^2\beta(1+\cosh\beta)^2} \\
       \dfrac{\sqrt{(A_2-A_1)(B_1-B_2)}}{\sinh^2\beta\cosh^2\beta(1+\cosh\beta)^2} &
        \dfrac{\sinh\beta+\cosh^3\beta}{\cosh\beta(1+\cosh\beta)^2}  \\
    \end{array}
  \right).
\end{equation}
\begin{proposition}\label{A-m}
If $\beta\in(\beta_{*},\beta_{*})$ then the following inequalities  hold true
\begin{itemize}
  \item [(i)] $0<\dfrac{\cosh\beta(\sinh\beta+\cosh^3\beta)}{\sinh\beta(1+\cosh\beta)^2}<1;$
  \item [(ii)] $0< \dfrac{\sinh\beta+\cosh^3\beta}{\cosh\beta(1+\cosh\beta)^2}<1;$
  \item [(iii)] $1<\tr(A)<2;$
  \item [(iv)] $0<\det(A)<1.$
\end{itemize}
\end{proposition}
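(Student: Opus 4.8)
The plan is to establish each of the four inequalities directly from the closed form \eqref{mainformofmainMatrix}, using the auxiliary estimates of Lemma \ref{inequalities}, in particular part (viii) which gives $0<\sinh\beta(1+\cosh\beta)<\cosh^3\beta$ for all $\beta>0$. First I would prove (i) and (ii) together, since the two diagonal entries are
\[
A_{11}=\frac{\cosh\beta(\sinh\beta+\cosh^3\beta)}{\sinh\beta(1+\cosh\beta)^2},\qquad
A_{22}=\frac{\sinh\beta+\cosh^3\beta}{\cosh\beta(1+\cosh\beta)^2},
\]
and both are manifestly positive; the upper bound $A_{jj}<1$ is equivalent, after clearing the positive denominators, to the inequalities $\cosh\beta(\sinh\beta+\cosh^3\beta)<\sinh\beta(1+\cosh\beta)^2$ and $\sinh\beta+\cosh^3\beta<\cosh\beta(1+\cosh\beta)^2$ respectively. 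Expanding $(1+\cosh\beta)^2=1+2\cosh\beta+\cosh^2\beta$ and simplifying, each of these reduces to a polynomial inequality in $\sinh\beta,\cosh\beta$; I expect that each one follows from part (viii) of Lemma \ref{inequalities} (together with $\cosh\beta>\sinh\beta$ and $\cosh\beta\geq 1$), possibly after grouping terms so that the dominant term $\cosh^3\beta$ on the right absorbs $\sinh\beta(1+\cosh\beta)$ on the left.

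Next, (iii) follows immediately from (i) and (ii): since $\tr(A)=A_{11}+A_{22}$ with each summand in $(0,1)$, we get $\tr(A)<2$ at once, while the lower bound $\tr(A)>1$ should be obtained by showing $A_{11}+A_{22}>1$, i.e. after clearing denominators
\[
\cosh^2\beta(\sinh\beta+\cosh^3\beta)+\sinh\beta(\sinh\beta+\cosh^3\beta)>\sinh\beta\cosh\beta(1+\cosh\beta)^2,
\]
which is again a polynomial inequality amenable to the same bookkeeping with Lemma \ref{inequalities}(viii) and the identity $\cosh^2\beta-\sinh^2\beta=1$. Then for (iv), the positivity $\det(A)>0$ is equivalent to $A_{11}A_{22}>A_{12}A_{21}$; expressing $A_{12}A_{21}=\dfrac{(A_2-A_1)(B_1-B_2)}{\sinh^3\beta\cosh^4\beta(1+\cosh\beta)^4}$ from \eqref{mainformofmainMatrix} and $A_{11}A_{22}=\dfrac{(\sinh\beta+\cosh^3\beta)^2}{\sinh\beta\cosh\beta(1+\cosh\beta)^4}$, the inequality becomes $(\sinh\beta+\cosh^3\beta)^2\sinh^2\beta\cosh^3\beta>(A_2-A_1)(B_1-B_2)$; alternatively, and more cleanly, one can use the original form \eqref{mainMatrix} to compute $\det(A)$ as a product related to $B_1B_2\gamma_0^4+(A_1B_2+A_2B_1)\gamma_0^2\gamma_1^2+A_1A_2\gamma_1^4$ minus the off-diagonal product, and then invoke $\gamma_0^2=DE$, $\gamma_1^2=E$ together with Lemma \ref{inequalities}(vi) ($A_1A_2<B_1B_2$, $A_1B_2<A_2B_1$) to control the sign. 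The upper bound $\det(A)<1$ should follow from $\det(A)<A_{11}A_{22}<1$ by (i)–(ii) once we know $\det(A)<A_{11}A_{22}$, i.e. once $A_{12}A_{21}>0$, which is clear since all entries are positive.

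The main obstacle I anticipate is the verification of the sharp upper bounds $A_{11}<1$ and $A_{22}<1$ (hence also $\tr(A)<2$), because these do not reduce to a single application of Lemma \ref{inequalities}(viii): the left-hand sides contain a product $\cosh\beta\cdot\sinh\beta$ or $\cosh^4\beta$ that must be dominated by a term of the form $\sinh\beta\cdot\cosh^2\beta$ coming from the expanded denominator, and this requires $\cosh^2\beta>\cosh\beta$ together with careful term-by-term comparison — the inequality is genuinely using $\beta>0$ (equivalently $\cosh\beta>1$) and is not merely formal. A secondary delicate point is the lower bound $\tr(A)>1$, which is where the hypothesis $\beta\in(\beta_*,\beta^*)$ may actually be needed (through $B_1>B_2$, Lemma \ref{inequalities}(iii)), since outside that window the relevant strict inequality could fail; I would keep track of exactly which estimates from Lemma \ref{inequalities} are invoked at each step so that the dependence on the regime is transparent. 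Once all four polynomial inequalities are reduced to the facts catalogued in Lemma \ref{inequalities}, the proof is complete.
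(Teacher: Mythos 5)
Your overall reduction is on the right track, but there is a genuine gap in the one place that actually carries the content of the proposition, namely the upper bound in (i). Clearing denominators in $A_{11}<1$ gives, as you say, $\cosh\beta(\sinh\beta+\cosh^3\beta)<\sinh\beta(1+\cosh\beta)^2$, which after cancellation is
\[
\cosh^4\beta<\sinh\beta\,(1+\cosh\beta+\cosh^2\beta),
\]
and multiplying by $\cosh^2\beta$ this is \emph{exactly} $B_2<B_1$, i.e.\ Lemma \ref{inequalities}(iii), which holds only for $\beta\in(\beta_*,\beta^*)$ and fails outside that window. It cannot be obtained from Lemma \ref{inequalities}(viii), which is an inequality in the \emph{opposite} direction ($\sinh\beta(1+\cosh\beta)<\cosh^3\beta$); no amount of "grouping so that $\cosh^3\beta$ absorbs $\sinh\beta(1+\cosh\beta)$" will produce a bound in which the hyperbolic cosine power sits on the small side. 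You have in fact inverted the role of the regime hypothesis: you flag $\tr(A)>1$ as the step that may need $\beta\in(\beta_*,\beta^*)$, but that inequality holds for all $\beta>0$ — the paper derives it from (viii) via $\sinh^2\beta+\cosh^5\beta>\sinh\beta\cosh\beta(1+\cosh\beta)$ — whereas it is (i), and through it $\tr(A)<2$ and $\det(A)<1$, that genuinely require $B_1>B_2$. The paper makes this transparent by rewriting $A_{11}=\bigl(B_2/\cosh^2\beta+\sinh\beta\cosh\beta\bigr)\big/\bigl(B_1/\cosh^2\beta+\sinh\beta\cosh\beta\bigr)$, from which $A_{11}<1\Leftrightarrow B_2<B_1$ is immediate. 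Your treatment of (ii) (which only needs $\sinh\beta<\cosh\beta$) and your derivations of $\tr(A)<2$ from (i)--(ii) and of $\det(A)<A_{11}A_{22}<1$ are fine — the latter is even a little slicker than the paper's — but they all hang on (i), which your argument does not establish.

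A second, lesser gap is the positivity of $\det(A)$: you reduce it to the unverified inequality $(\sinh\beta+\cosh^3\beta)^2\sinh^2\beta\cosh^4\beta>(A_2-A_1)(B_1-B_2)$ (note also the stray factor $\cosh\beta$ in your expression for $A_{11}A_{22}$, which should be $(\sinh\beta+\cosh^3\beta)^2/\bigl(\sinh\beta(1+\cosh\beta)^4\bigr)$), and your "alternative" route via Lemma \ref{inequalities}(vi) is only sketched. The paper avoids this by computing the closed form $\det(A)=\bigl(\sinh^2\beta+\cosh^5\beta-\sinh\beta\cosh\beta(1+\cosh\beta)\bigr)\big/\bigl(\sinh\beta\cosh\beta(1+\cosh\beta)^2\bigr)$, whose numerator is positive by (viii) and $\cosh\beta>1$; its upper bound is then obtained by enlarging the numerator using $B_2<B_1$ and $\sinh\beta<\cosh\beta$. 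To repair your proposal you need to (a) replace the appeal to (viii) in step (i) by an appeal to Lemma \ref{inequalities}(iii), and (b) actually verify the determinant positivity, e.g.\ by computing $\det(A)$ explicitly or by noting $\det(A)=\tr(A)-1$.
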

\begin{proof}
(i). Since $B_2<B_1$ (see Lemma \ref{inequalities} $(iii)$) one can see that
\[0<\dfrac{\cosh\beta(\sinh\beta+\cosh^3\beta)}{\sinh\beta(1+\cosh\beta)^2}
=\frac{\dfrac{B_2}{\cosh^2\beta}+\cosh\beta\sinh\beta}{\dfrac{B_1}{\cosh^2\beta}+\cosh\beta\sinh\beta}<1.\]

(ii). The inequality  $\sinh\beta<\cosh\beta$ implies that
\[0< \dfrac{\sinh\beta+\cosh^3\beta}{\cosh\beta(1+\cosh\beta)^2}=\frac{\sinh\beta+\cosh^3\beta}{\cosh\beta+2\cosh^2\beta+\cosh^3\beta}<1.\]

(iii). One can see that
\begin{eqnarray}\label{trofmainMatrix}
\tr(A)=\dfrac{(\sinh\beta+\cosh^2\beta)(\sinh\beta+\cosh^3\beta)}{\sinh\beta\cosh\beta(1+\cosh\beta)^2}.
\end{eqnarray}
Therefore, from (i), (ii) it immediately  follows that $0<\tr(A)<2.$ Now we are going to show that $\tr(A)>1.$
Indeed, since $\cosh^3\beta>\sinh\beta(1+\cosh\beta)>0$ (see Lemma \ref{inequalities} $(viii)$) and $\cosh\beta>1$ one has
\begin{eqnarray}\label{fortrandDet}
\sinh^2\beta+\cosh^5\beta>\sinh\beta\cosh\beta(1+\cosh\beta)
\end{eqnarray}
Then, due to  \eqref{fortrandDet} we find
\[\tr(A)=\dfrac{\sinh^2\beta+\cosh^5\beta+\sinh\beta\cosh^2\beta(1+\cosh\beta)}
{\sinh\beta\cosh\beta(1+\cosh\beta)+\sinh\beta\cosh^2\beta(1+\cosh\beta)}>1.\]

(iv). Let us evaluate the  determinant $\det(A)$ of the matrix $A$ given by \eqref{mainformofmainMatrix}.
After some algebraic manipulations, one finds
\begin{eqnarray}\label{detofmainMatrix}
\det(A)&=&\frac{\sinh^2\beta+\cosh^5\beta-\sinh\beta\cosh\beta(1+\cosh\beta)}{\sinh\beta\cosh\beta(1+\cosh\beta)^2}.
\end{eqnarray}
Due to \eqref{fortrandDet} one can see that $\det(A)>0.$ We want to show that $\det(A)<1.$
Since $B_2<B_1$ (see Lemma \ref{inequalities} (iii)) and $\sinh\beta<\cosh\beta$ we have
\begin{eqnarray}
\label{B1B2}
&&\cosh^5\beta<\sinh\beta\cosh\beta(1+\cosh\beta+\cosh^2\beta),\\
\label{sinhcosh}&&\sinh^2\beta<\sinh\beta\cosh\beta(1+2\cosh\beta).
\end{eqnarray}
From  inequalities \eqref{B1B2},\eqref{sinhcosh}, one gets
\begin{eqnarray}
\sinh^2\beta+\cosh^5\beta<\sinh\beta\cosh\beta(2+3\cosh\beta+\cosh^2\beta).
\end{eqnarray}
Therefore, we obatain
\begin{eqnarray*}
\det(A)=\frac{\sinh^2\beta+\cosh^5\beta-\sinh\beta\cosh\beta(1+\cosh\beta)}
{\sinh\beta\cosh\beta(2+3\cosh\beta+\cosh^2\beta)-\sinh\beta\cosh\beta(1+\cosh\beta)}<1.
\end{eqnarray*}
This completes the proof.
\end{proof}

The next proposition deals with eigenvalues of the matrix $A$.

\begin{proposition}\label{A-N} Let $A$ be the matrix given by \eqref{mainformofmainMatrix}. Then the following assertions hold true:
\begin{itemize}
  \item [(i)] the numbers $\lambda_1=1$, $\lambda_2=\det(A)$ are eigenvalues of the matrix $A;$
  \item [(ii)] the vectors
  \begin{eqnarray}
  \label{eigenvectorlambda1}(x_1,y_1)&=&\left(\dfrac{\sqrt{(A_2-A_1)(B_1-B_2)}}{\sinh\beta\cosh^2\beta(1+\cosh\beta)^2},
  \dfrac{B_1-B_2}{\sinh\beta\cosh^2\beta(1+\cosh\beta)^2}\right),\\
  \label{eigenvectorlambda2}(x_2,y_2)&=&\left(\dfrac{B_2-B_1}{\sinh\beta\cosh^2\beta(1+\cosh\beta)^2},
  \dfrac{\sqrt{(A_2-A_1)(B_1-B_2)}}{\sinh^2\beta\cosh^2\beta(1+\cosh\beta)^2}\right)
  \end{eqnarray} are eigenvectors of the matrix $A$ corresponding to the eigenvalues $\lambda_1=1$ and $\lambda_2=\det(A),$ respectively;
  \item [(iii)] if $P=\left(
              \begin{array}{cc}
                x_1 & x_2 \\
                y_1 & y_2 \\
              \end{array}
            \right),$ where the vectors $(x_1,y_1)$ and $(x_2,y_2)$ are defined by \eqref{eigenvectorlambda1}, \eqref{eigenvectorlambda2} then \begin{eqnarray}\label{diagonalformofmainMatrix}
            P^{-1}AP=\left(\begin{array}{cc}
                                 \lambda_1 & 0 \\
                                  0 & \lambda_2 \\
                           \end{array}
                     \right);
            \end{eqnarray}
  \item [(iv)] for any $n\in \bn$ one has
  \begin{eqnarray}
  A^{n}=\left(
         \begin{array}{cc}
            \dfrac{x_1^2+\lambda_2^ny_1^2\sinh\beta}{x_1^2+y_1^2\sinh\beta} & \dfrac{x_1y_1\sinh\beta(1-\lambda_2^n)}{x_1^2+y_1^2\sinh\beta} \\
            \dfrac{x_1y_1(1-\lambda_2^n)}{x_1^2+y_1^2\sinh\beta} & \dfrac{\lambda_2^nx_1^2+y_1^2\sinh\beta}{x_1^2+y_1^2\sinh\beta}\\
         \end{array}
    \right).
  \end{eqnarray}
\end{itemize}
\end{proposition}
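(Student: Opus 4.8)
The plan is to handle (i)--(iv) in order, using throughout that $A$ is a $2\times 2$ matrix, so its spectrum is determined by $\tr(A)$ and $\det(A)$, both already computed in Proposition~\ref{A-m} via \eqref{trofmainMatrix} and \eqref{detofmainMatrix}. For (i), I would first show that $\lambda=1$ is a root of the characteristic polynomial $\lambda^{2}-\tr(A)\lambda+\det(A)$, i.e.\ that $\tr(A)=1+\det(A)$. Since the denominators in \eqref{trofmainMatrix} and \eqref{detofmainMatrix} coincide, this reduces to the hyperbolic identity
\[
(\sinh\beta+\cosh^{2}\beta)(\sinh\beta+\cosh^{3}\beta)
=\sinh\beta\cosh\beta(1+\cosh\beta)^{2}+\sinh^{2}\beta+\cosh^{5}\beta-\sinh\beta\cosh\beta(1+\cosh\beta),
\]
whose two sides both expand to $\sinh^{2}\beta+\cosh^{5}\beta+\sinh\beta\cosh^{2}\beta(1+\cosh\beta)$. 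Hence the characteristic polynomial factors as $(\lambda-1)(\lambda-\det(A))$, giving $\lambda_{1}=1$ and $\lambda_{2}=\det(A)$.

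For (ii), I would substitute $(x_{1},y_{1})$ from \eqref{eigenvectorlambda1} into $A(x_{1},y_{1})^{T}$ and verify it equals $(x_{1},y_{1})^{T}$ directly. The identity that makes this transparent is $B_{1}+\sinh\beta\cosh^{3}\beta=\sinh\beta\cosh^{2}\beta(1+\cosh\beta)^{2}$, immediate from \eqref{A1B1}, which rewrites the $(1,1)$-entry of $A$ as $a_{11}=\bigl(B_{2}+\sinh\beta\cosh^{3}\beta\bigr)/\bigl(\sinh\beta\cosh^{2}\beta(1+\cosh\beta)^{2}\bigr)$, so that the first row equation becomes $(1-a_{11})x_{1}=a_{12}y_{1}$; this holds because $y_{1}/x_{1}=\sqrt{(B_{1}-B_{2})/(A_{2}-A_{1})}=1/\sqrt{D}$, and the second row is checked identically. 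Since $\lambda_{1}\ne\lambda_{2}$ (see the next paragraph), the vector $(x_{2},y_{2})$ from \eqref{eigenvectorlambda2} is automatically an eigenvector for $\lambda_{2}$ once one checks it is not proportional to $(x_{1},y_{1})$; alternatively one repeats the same direct computation.

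For (iii), Proposition~\ref{A-m}(iv) gives $0<\det(A)<1$, hence $\lambda_{1}=1\ne\det(A)=\lambda_{2}$; eigenvectors for distinct eigenvalues are linearly independent, so $P$ is invertible (concretely $\det P=x_{1}y_{2}-x_{2}y_{1}>0$, using $A_{2}>A_{1}$, $B_{1}>B_{2}$ on $(\beta_{*},\beta^{*})$ by Lemma~\ref{inequalities}(ii),(iii)), and $AP=P\,\mathrm{diag}(\lambda_{1},\lambda_{2})$ yields \eqref{diagonalformofmainMatrix}. For (iv), I would iterate \eqref{diagonalformofmainMatrix} to get $A^{n}=P\,\mathrm{diag}(1,\lambda_{2}^{n})\,P^{-1}$ with $P^{-1}=\dfrac{1}{\det P}\begin{pmatrix}y_{2}&-x_{2}\\-y_{1}&x_{1}\end{pmatrix}$, and then invoke the two relations visible from \eqref{eigenvectorlambda1}--\eqref{eigenvectorlambda2}, namely $x_{2}=-y_{1}$ and $x_{1}=y_{2}\sinh\beta$, together with $\det P=(x_{1}^{2}+y_{1}^{2}\sinh\beta)/\sinh\beta$; multiplying the three matrices out and simplifying with these relations produces exactly the claimed expression for $A^{n}$.

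Everything is elementary; the only step calling for care is the bookkeeping behind (i) and (ii) --- spotting that $\tr(A)=1+\det(A)$ and the factorisation $B_{1}+\sinh\beta\cosh^{3}\beta=\sinh\beta\cosh^{2}\beta(1+\cosh\beta)^{2}$ --- after which (iii) and (iv) are routine linear algebra, so I do not expect a genuine obstacle.
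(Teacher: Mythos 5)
Your proposal is correct and follows essentially the same route as the paper: establishing $\tr(A)=1+\det(A)$ to factor the characteristic polynomial, verifying the eigenvectors via the identity $B_1+\sinh\beta\cosh^3\beta=\sinh\beta\cosh^2\beta(1+\cosh\beta)^2$ (the paper equivalently reads off $(x_1,y_1)=(a_{12},\,\lambda_1-a_{11})$ from the first row), and then computing $A^n=P\,\mathrm{diag}(1,\lambda_2^n)\,P^{-1}$ using $x_2=-y_1$, $y_2=x_1/\sinh\beta$ and $\det P=x_1^2/\sinh\beta+y_1^2$. No gaps.
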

\begin{proof}
$(i)$ We know that the following equation
\begin{equation*}
\lambda^2-\tr(A)\lambda+\det(A)=0
\end{equation*}
is a characteristic equation of the matrix $A$ given by \eqref{mainformofmainMatrix}. From \eqref{trofmainMatrix} and \eqref{detofmainMatrix} one can easily see that
\begin{eqnarray*}
\tr(A)-\det(A)=\dfrac{\sinh\beta\cosh^2\beta(1+\cosh\beta)+\sinh\beta\cosh\beta(1+\cosh\beta)}
{\sinh\beta\cosh\beta(1+\cosh\beta)^2}=1,
\end{eqnarray*}
this means that $\lambda_1=1$ and $\lambda_2=\det(A)$ are eigenvalues of the matrix $A.$

$(ii)$ The eigenvector $(x_1,y_1)$ of the matrix $A$, corresponding to $\lambda_1=1$ satisfies the following equation
\begin{eqnarray*}
\left(\frac{\cosh\beta(\sinh\beta+\cosh^3\beta)}{\sinh\beta(1+\cosh\beta)^2}-\lambda_1\right)x_1
+\frac{\sqrt{(A_2-A_1)(B_1-B_2)}}{\sinh\beta\cosh^2\beta(1+\cosh\beta)^2}y_1=0.
\end{eqnarray*}
Then, one finds
\begin{eqnarray*}\left\{\begin{array}{l}
x_1=\dfrac{\sqrt{(A_2-A_1)(B_1-B_2)}}{\sinh\beta\cosh^2\beta(1+\cosh\beta)^2}\\
y_1= \lambda_1-\dfrac{\cosh\beta(\sinh\beta+\cosh^3\beta)}{\sinh\beta(1+\cosh\beta)^2}=\dfrac{B_1-B_2}{\sinh\beta\cosh^2\beta(1+\cosh\beta)^2}.
\end{array}
\right.
\end{eqnarray*}
Analogously, one can show that the eigenvector $(x_2,y_2)$ of the matrix $A$, corresponding to $\lambda_2=\det(A)$, is equal to
\begin{eqnarray*}\left\{\begin{array}{l}
x_2=\lambda_2-\dfrac{\sinh\beta+\cosh^3\beta}{\cosh\beta(1+\cosh\beta)^2}=\dfrac{B_2-B_1}{\sinh\beta\cosh^2\beta(1+\cosh\beta)^2}\\
y_2= \dfrac{\sqrt{(A_2-A_1)(B_1-B_2)}}{\sinh^2\beta\cosh^2\beta(1+\cosh\beta)^2}.
\end{array}
\right.
\end{eqnarray*}
It is worth noting that $(x_2,y_2)=\left(-y_1,\dfrac{x_1}{\sinh\beta}\right).$

$(iii)$ Let \[P=\left(
              \begin{array}{cc}
                x_1 & x_2 \\
                y_1 & y_2 \\
              \end{array}
            \right),\]
where the vectors $(x_1,y_1)$ and $(x_2,y_2)$ are defined by \eqref{eigenvectorlambda1}, \eqref{eigenvectorlambda2}. We then get
\begin{eqnarray*}
P^{-1}AP&=&\frac{1}{\det(P)}\left(
                              \begin{array}{cc}
                                y_2 & -x_2 \\
                                -y_1 & x_1 \\
                              \end{array}
                            \right)
\left(
       \begin{array}{cc}
         \lambda_1x_1 & \lambda_2x_2 \\
         \lambda_1y_1 & \lambda_2y_2 \\
       \end{array}
     \right)=\left(
               \begin{array}{cc}
                 \lambda_1 & 0 \\
                 0 & \lambda_2 \\
               \end{array}
             \right),
\end{eqnarray*}
where $\det(P)=\dfrac{x_1^2}{\sinh\beta}+y_1^2>0.$

$(iv)$ From \eqref{diagonalformofmainMatrix} it follows that
\[A=P\left(
       \begin{array}{cc}
         \lambda_1 & 0 \\
         0 & \lambda_2 \\
       \end{array}
     \right)P^{-1}.
\]
Therefore, for any $n\in\bn$ we obtain
\begin{eqnarray*}
A^n&=&P\left(
       \begin{array}{cc}
         \lambda_1^n & 0 \\
         0 & \lambda_2^n \\
       \end{array}
     \right)P^{-1}=\frac{1}{\det(P)}\left(
              \begin{array}{cc}
                x_1 & x_2 \\
                y_1 & y_2 \\
              \end{array}
            \right)\left(
                          \begin{array}{cc}
                             y_2\lambda_1^n & -x_2\lambda_1^n \\
                             -y_1\lambda_2^n & x_1\lambda_2^n \\
                          \end{array}
                     \right)\\
&=&\frac{1}{\det(P)}\left(
                           \begin{array}{cc}
                             x_1y_2\lambda_1^n-x_2y_1\lambda_2^n & x_1x_2(\lambda_2^n-\lambda_1^n) \\
                             y_1y_2(\lambda_1^n-\lambda_2^n) & x_1y_2\lambda_2^n-x_2y_1\lambda_1^n \\
                           \end{array}
                         \right)\\
&=&\left(
         \begin{array}{cc}
            \dfrac{x_1^2+\lambda_2^ny_1^2\sinh\beta}{x_1^2+y_1^2\sinh\beta} & \dfrac{x_1y_1\sinh\beta(1-\lambda_2^n)}{x_1^2+y_1^2\sinh\beta} \\
            \dfrac{x_1y_1(1-\lambda_2^n)}{x_1^2+y_1^2\sinh\beta} & \dfrac{\lambda_2^nx_1^2+y_1^2\sinh\beta}{x_1^2+y_1^2\sinh\beta}\\
         \end{array}
    \right).
\end{eqnarray*}
This completes the proof.
\end{proof}

In what follows, for the sake of simplicity, let us denote
\begin{eqnarray}\label{numbersK_i}
K_0:=\frac{1+\cosh\beta}{2},\quad K_1:=\frac{\sinh\beta}{2}, \quad K_2:=\frac{\sinh\beta}{2}, \quad K_3:=\frac{1+\cosh\beta}{2}
\end{eqnarray}
In these notations, the operator $K_{<u,v>}$ given by \eqref{K<u,v>} can be written as follows
\begin{eqnarray}\label{compactformofK<u,v>}
K_{<u,v>}=\sum\limits_{i=0}^{3}K_{i}\sigma_i^{(u)}\otimes\sigma_i^{(v)}.
\end{eqnarray}
\begin{remark} In the sequel,  we will frequently use the following identities for the numbers $K_i,$ $i=\overline{0,3}$ given by \eqref{numbersK_i}:
\begin{itemize}
  \item [(i)] $K_0^2+K_1^2+K_2^2+K_3^2=\cosh^2\beta;$
  \item [(ii)] $2(K_0K_1-K_2K_3)=\sinh\beta\cosh\beta;$
  \item [(iii)] $2(K_0K_1+K_2K_3)=\sinh\beta;$
  \item [(iv)] $K_0^2+K_1^2-K_2^2-K_3^2=\cosh\beta.$
\end{itemize}
\end{remark}

\begin{proposition}
Let $K_{<u,v>}$ be given by \eqref{compactformofK<u,v>}, $\overrightarrow{S(x)}=(1,2,3),$ and $\bh^{(i)}=h_0^{(i)}\sigma_0^{(i)}+h_1^{(i)}\sigma_1^{(i)},$ where $i\in \overrightarrow{S(x)}.$ Then we have
\begin{eqnarray}
\tr_{x]}\left[\prod_{i\in \overrightarrow{S(x)}}K_{<x,i>}\prod_{i\in \overrightarrow{S(x)}}\bh^{(i)}\prod_{i\in \overleftarrow{S(x)}}K_{<x,i>}\right]=h_0^{(x)}\sigma_0^{(x)}+h_1^{(x)}\sigma_1^{(x)}
\end{eqnarray}
where
\begin{eqnarray}
h_0^{(x)}&=&h_0^{(1)}h_0^{(2)}h_0^{(3)}\cosh^6\beta+h_0^{(1)}h_1^{(2)}h_1^{(3)}\sinh^2\beta\cosh^3\beta\nonumber\\
&&+h_1^{(1)}h_1^{(2)}h_0^{(3)}\sinh^2\beta\cosh^3\beta+h_1^{(1)}h_0^{(2)}h_1^{(3)}\sinh^2\beta\cosh^2\beta,\\[2mm]
h_1^{(x)}&=&h_0^{(1)}h_0^{(2)}h_1^{(3)}\sinh\beta\cosh^2\beta+h_0^{(1)}h_1^{(2)}h_0^{(3)}\sinh\beta\cosh^3\beta\nonumber\\
&&+h_1^{(1)}h_0^{(2)}h_0^{(3)}\sinh\beta\cosh^4\beta+h_1^{(1)}h_1^{(2)}h_1^{(3)}\sinh^3\beta\cosh\beta
\end{eqnarray}
\end{proposition}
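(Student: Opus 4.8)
The statement is a purely computational identity: we must expand the product
$\prod_{i\in \overrightarrow{S(x)}}K_{<x,i>}\cdot\prod_{i\in \overrightarrow{S(x)}}\bh^{(i)}\cdot\prod_{i\in \overleftarrow{S(x)}}K_{<x,i>}$
over the four tensor factors $\cb_x\otimes\cb_1\otimes\cb_2\otimes\cb_3$, then take the normalized partial trace over the three successor sites $1,2,3$, and identify the surviving operator on $\cb_x$ as a combination of $\sigma_0^{(x)}$ and $\sigma_1^{(x)}$. The plan is to insert the compact form \eqref{compactformofK<u,v>}, $K_{<x,i>}=\sum_{j=0}^{3}K_j\sigma_j^{(x)}\otimes\sigma_j^{(i)}$, for each of the three edges, and to use that $\bh^{(i)}=h_0^{(i)}\sigma_0^{(i)}+h_1^{(i)}\sigma_1^{(i)}$ lives only on site $i$.

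First I would reorganize the expression by grouping, for each successor site $i\in\{1,2,3\}$, all the factors that act on $\cb_i$. Because the three copies of $K$ on the left all carry the site-$x$ index, and the three copies on the right (in reversed order $\overleftarrow{S(x)}$) do too, the whole product factors — after reindexing — as a sum over multi-indices $(j_1,j_2,j_3)$ and $(l_1,l_2,l_3)$ in $\{0,1,2,3\}^3$ of
$K_{j_1}K_{j_2}K_{j_3}K_{l_3}K_{l_2}K_{l_1}\,\bigl(\sigma_{j_1}\sigma_{j_2}\sigma_{j_3}\sigma_{l_3}\sigma_{l_2}\sigma_{l_1}\bigr)^{(x)}\otimes\bigotimes_{i=1}^{3}\bigl(\sigma_{j_i}\,\bh^{(i)}\,\sigma_{l_i}\bigr).$
Applying $\tr_{x]}=\tr_1\otimes\tr_2\otimes\tr_3$ (normalized), the site-$i$ contribution is $\tr\bigl(\sigma_{j_i}(h_0^{(i)}\sigma_0+h_1^{(i)}\sigma_1)\sigma_{l_i}\bigr)$. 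Since $\tr(\sigma_a\sigma_b)=\delta_{ab}$ for the normalized trace on $M_2$, this vanishes unless $\sigma_{j_i}\sigma_{l_i}\in\{\sigma_0,\sigma_1\}$-span appropriately paired; a short case check shows the site-$i$ trace equals $h_0^{(i)}$ when $\{j_i,l_i\}$ is one of $\{0,0\},\{1,1\},\{2,2\},\{3,3\}$ (with a sign in the $\{2,2\},\{3,3\}$ cases coming from $\sigma_2,\sigma_3$ anticommuting with $\sigma_1$), and equals $\pm h_1^{(i)}$ when $\{j_i,l_i\}=\{0,1\}$ or $\{2,3\}$, and is $0$ otherwise. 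This collapses the double sum to a manageable finite list of surviving index patterns.

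Next, for each surviving pattern I would compute the residual site-$x$ operator $\sigma_{j_1}\sigma_{j_2}\sigma_{j_3}\sigma_{l_3}\sigma_{l_2}\sigma_{l_1}$ using the Pauli algebra $\sigma_a\sigma_b=\delta_{ab}\sigma_0+i\varepsilon_{abc}\sigma_c$, and collect the scalar coefficients $K_{j_1}K_{j_2}K_{j_3}K_{l_3}K_{l_2}K_{l_1}$ together with the signs/factors of $i$ produced both at site $x$ and at the successor sites. Only the patterns whose site-$x$ product reduces to $\sigma_0^{(x)}$ or $\sigma_1^{(x)}$ can contribute (any leftover $\sigma_2^{(x)}$ or $\sigma_3^{(x)}$ would have to cancel, and I expect it does by symmetry of the summation under swapping the two multi-indices). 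Summing the $\sigma_0^{(x)}$-coefficients and regrouping via the identities in the preceding Remark — namely $K_0^2+K_1^2+K_2^2+K_3^2=\cosh^2\beta$, $2(K_0K_1\mp K_2K_3)=\sinh\beta\cosh\beta,\ \sinh\beta$, and $K_0^2+K_1^2-K_2^2-K_3^2=\cosh\beta$ — should produce exactly the four terms listed for $h_0^{(x)}$; likewise for $h_1^{(x)}$. For instance the $h_0^{(1)}h_0^{(2)}h_0^{(3)}$ term forces $j_i=l_i=0$ on all three sites, leaving $(K_0^2+K_1^2+K_2^2+K_3^2)^3=\cosh^6\beta$ on site $x$; the mixed terms such as $h_1^{(1)}h_0^{(2)}h_1^{(3)}\sinh^2\beta\cosh^2\beta$ arise from pairing $\{0,1\}$ on sites $1,3$ and $\{0,0\}$ (or $\{1,1\},\{2,2\},\{3,3\}$) on site $2$, with the cross-terms on site $x$ reorganizing through $2(K_0K_1+K_2K_3)=\sinh\beta$ on one factor and $\cosh^2\beta$ on the other.

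\textbf{Main obstacle.} The genuine difficulty is purely bookkeeping: there are $4^3\times 4^3$ index pairs a priori, and although the trace conditions prune most of them, one must carefully track the factors of $i$ and the anticommutation signs produced when $\sigma_2$ or $\sigma_3$ is moved past $\sigma_1$ at the successor sites and when the six Pauli factors are collapsed at site $x$. The key consistency checks are that (a) all $\sigma_2^{(x)}$ and $\sigma_3^{(x)}$ contributions cancel — which I would argue by noting the summand is real and the set of surviving patterns is invariant under $(j_i)\leftrightarrow(l_i)$, under which $\sigma_2^{(x)},\sigma_3^{(x)}$ pick up a sign while $\sigma_0^{(x)},\sigma_1^{(x)}$ do not — and (b) every coefficient of $\sinh$ and $\cosh$ reassembles cleanly via the four Remark identities. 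Once those are in place the stated formulas for $h_0^{(x)}$ and $h_1^{(x)}$ follow by collecting terms according to which pattern $(h_a^{(1)},h_b^{(2)},h_c^{(3)})$ they multiply.
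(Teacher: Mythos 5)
Your overall strategy---expand each $K_{<x,i>}$ in the Pauli basis via \eqref{compactformofK<u,v>}, use orthogonality of the normalized trace to prune the index pairs at each successor site, and reassemble the surviving coefficients through the identities of the preceding Remark---uses the same ingredients as the paper, but you organize the computation globally (a single sum over $4^3\times 4^3$ index pairs), whereas the paper peels off one successor site at a time. Concretely, the paper first computes $\tr_{x]}\left[K_{<x,3>}\bh^{(3)}K_{<x,3>}\right]=g_0^{(3)}\sigma_0^{(x)}+g_1^{(3)}\sigma_1^{(x)}$ and then iterates the resulting $2\times 2$ linear recursion $(g_0,g_1)\mapsto\big(g_0h_0\cosh^2\beta+g_1h_1\sinh\beta\cosh\beta,\ g_0h_1\sinh\beta+g_1h_0\cosh\beta\big)$ through sites $2$ and $1$. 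Each step is a $16$-term sum whose output is visibly in the span of $\sigma_0^{(x)},\sigma_1^{(x)}$, so the combinatorial explosion and the cancellation question you flag never arise. Your version is workable but strictly harder to execute.

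The one genuine soft spot is your argument (a) for why the $\sigma_2^{(x)}$ and $\sigma_3^{(x)}$ components cancel. Swapping $(j_i)\leftrightarrow(l_i)$ sends each summand to its adjoint: the site-$x$ word $\sigma_{j_1}\sigma_{j_2}\sigma_{j_3}\sigma_{l_3}\sigma_{l_2}\sigma_{l_1}$ is reversed (hence adjointed, the Pauli matrices being self-adjoint) and each scalar $\tr\big(\sigma_{j_i}\bh^{(i)}\sigma_{l_i}\big)$ is complex-conjugated. So this symmetry only shows the total is self-adjoint, i.e.\ that the $\sigma_2^{(x)},\sigma_3^{(x)}$ coefficients are \emph{real}---not that they vanish; and self-adjointness is anyway immediate from the form $A\,\bh\,A^{*}$ of the operator being traced. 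A correct shortcut: conjugation by $\sigma_1^{(x)}\otimes\sigma_1^{(1)}\otimes\sigma_1^{(2)}\otimes\sigma_1^{(3)}$ fixes every $K_{<x,i>}$ (because $\sigma_1\sigma_j\sigma_1\otimes\sigma_1\sigma_j\sigma_1=\sigma_j\otimes\sigma_j$ for all $j$) and every $\bh^{(i)}$, and it intertwines correctly with $\tr_{x]}$; hence the output commutes with $\sigma_1^{(x)}$ and must lie in $\mathrm{span}\{\sigma_0^{(x)},\sigma_1^{(x)}\}$. With that repaired (or by adopting the paper's sequential organization, which makes the issue moot), your computation goes through and yields the stated $h_0^{(x)}$ and $h_1^{(x)}$.
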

\begin{proof}
Let us first evaluate ${\bg}_3^{(x)}:=\tr_{x]}\left[K_{<x,3>}\bh^{(3)}K_{<x,3>}\right].$ From \eqref{compactformofK<u,v>} it follows that
\begin{eqnarray*}
K_{<x,3>}\bh^{(3)}K_{<x,3>}&=&\sum\limits_{i,j=0}^3K_iK_j\sigma_{i}^{(x)}\sigma_{j}^{(x)}\otimes\sigma_i^{(3)}
\left(h_0^{(3)}\sigma_0^{(3)}+h_1^{(3)}\sigma_1^{(3)}\right)\sigma_j^{(3)}\\
&=& h_{0}^{(3)}\sum\limits_{i,j=0}^3K_iK_j\sigma_{i}^{(x)}\sigma_{j}^{(x)}\otimes\sigma_i^{(3)}\sigma_j^{(3)}\\
&& + h_{1}^{(3)}\sum\limits_{i,j=0}^3K_iK_j\sigma_{i}^{(x)}\sigma_{j}^{(x)}\otimes\sigma_i^{(3)}\sigma_1^{(3)}\sigma_j^{(3)}
\end{eqnarray*}
Therefore, one gets
\begin{eqnarray}\label{g3ofTrh3}
{\bg}_3^{(x)}&=&g_0^{(3)}\sigma_{0}^{(x)}+g_1^{(3)}\sigma_{1}^{(x)}
\end{eqnarray}
where
\begin{eqnarray}
\label{g_0^(3)} g_0^{(3)}&=& h_{0}^{(3)}(K_0^2+K_1^2+K_2^2+K_3^2)=h_0^{(3)}\cosh^2\beta\\
\label{g_1^(3)} g_1^{(3)}&=& 2h_1^{(3)}(K_0K_1+K_2K_3)=h_1^{(3)}\sinh\beta.
\end{eqnarray}

Now, evaluate ${\bg}_2^{(x)}:=\tr_{x]}\left[K_{<x,2>}\bh^{(2)}{\bg}_3^{(x)}K_{<x,2>}\right].$ Using \eqref{compactformofK<u,v>} and \eqref{g3ofTrh3} we find
\begin{eqnarray*}
K_{<x,2>}\bh^{(2)}{\bg}_3^{(x)}K_{<x,2>}&=&g_0^{(3)}h_0^{(2)}\sum\limits_{i,j=0}^3K_iK_j\sigma_{i}^{(x)}\sigma_{j}^{(x)}
\otimes\sigma_i^{(2)}\sigma_j^{(2)}\\
&&+g_0^{(3)}h_1^{(2)}\sum\limits_{i,j=0}^3K_iK_j\sigma_{i}^{(x)}\sigma_{j}^{(x)}\otimes\sigma_i^{(2)}\sigma_1^{(2)}\sigma_j^{(2)}\\
&&+g_1^{(3)}h_0^{(2)}\sum\limits_{i,j=0}^3K_iK_j\sigma_{i}^{(x)}\sigma_1^{(x)}\sigma_{j}^{(x)}\otimes\sigma_i^{(2)}\sigma_j^{(2)}\\
&&+g_1^{(3)}h_1^{(2)}\sum\limits_{i,j=0}^3K_iK_j\sigma_{i}^{(x)}\sigma_1^{(x)}\sigma_{j}^{(x)}\otimes\sigma_i^{(2)}\sigma_1^{(2)}\sigma_j^{(2)}.\\
\end{eqnarray*}
Hence, one has
\begin{eqnarray}
{\bg}_2^{(x)}=g_0^{(2)}\sigma_{0}^{(x)}+g_1^{(2)}\sigma_{1}^{(x)}
\end{eqnarray}
where
\begin{eqnarray}
\label{g_0^(2)} g_0^{(2)}&=&g_0^{(3)}h_0^{(2)}(K_0^2+K_1^2+K_2^2+K_3^2)+2g_1^{(3)}h_1^{(2)}(K_0K_1-K_2K_3) \nonumber\\
&=&g_0^{(3)}h_0^{(2)}\cosh^2\beta+g_1^{(3)}h_1^{(2)}\sinh\beta\cosh\beta,\\[2mm]
\label{g_1^(2)} g_1^{(2)}&=&2g_0^{(3)}h_1^{(2)}(K_0K_1+K_2K_3)+g_1^{(3)}h_0^{(2)}(K_0^2+K_1^2-K_2^2-K_3^2)\nonumber\\
&=&g_0^{(3)}h_1^{(2)}\sinh\beta+g_1^{(3)}h_0^{(2)}\cosh\beta.
\end{eqnarray}

Similarly, one can evaluate
\begin{eqnarray}
{\bg}_1^{(x)}:=\tr_{x]}\left[K_{<x,1>}\bh^{(1)}{\bg}_2^{(x)}K_{<x,1>}\right]=g_0^{(1)}\sigma_{0}^{(x)}+g_1^{(1)}\sigma_{1}^{(x)}
\end{eqnarray}
where
\begin{eqnarray}
\label{g_0^(1)} g_0^{(1)}&=&g_0^{(2)}h_0^{(1)}\cosh^2\beta+g_1^{(2)}h_1^{(1)}\sinh\beta\cosh\beta,\\
\label{g_1^(1)} g_1^{(2)}&=&g_0^{(2)}h_1^{(1)}\sinh\beta+g_1^{(2)}h_0^{(1)}\cosh\beta.
\end{eqnarray}
We know that
\[\tr_{x]}\left[\prod_{i\in \overrightarrow{S(x)}}K_{<x,i>}\prod_{i\in \overrightarrow{S(x)}}\bh^{(i)}\prod_{i\in \overleftarrow{S(x)}}K_{<x,i>}\right]={\bg}_1^{(x)},\]
and combining
\eqref{g_0^(3)},\eqref{g_1^(3)},\eqref{g_0^(2)},\eqref{g_1^(2)},\eqref{g_0^(1)},\eqref{g_1^(1)},
we get
\begin{eqnarray*}
g_0^{(1)}&=&h_0^{(1)}h_0^{(2)}h_0^{(3)}\cosh^6\beta+h_0^{(1)}h_1^{(2)}h_1^{(3)}\sinh^2\beta\cosh^3\beta\\
&&+h_1^{(1)}h_1^{(2)}h_0^{(3)}\sinh^2\beta\cosh^3\beta+h_1^{(1)}h_0^{(2)}h_1^{(3)}\sinh^2\beta\cosh^2\beta,\\
g_1^{(2)}&=&h_0^{(1)}h_0^{(2)}h_1^{(3)}\sinh\beta\cosh^2\beta+h_0^{(1)}h_1^{(2)}h_0^{(3)}\sinh\beta\cosh^3\beta\\
&&+h_1^{(1)}h_0^{(2)}h_0^{(3)}\sinh\beta\cosh^4\beta+h_1^{(1)}h_1^{(2)}h_1^{(3)}\sinh^3\beta\cosh\beta
\end{eqnarray*}
This completes the proof.
\end{proof}

\begin{corollary}\label{traceforalpha_0}
Let $K_{<u,v>}$ be  given by \eqref{compactformofK<u,v>}, $\overrightarrow{S(x)}=(1,2,3),$ and
\begin{eqnarray*}
\bh^{(1)}=h_1\sigma_1^{(1)}, \quad
\bh^{(2)}=\alpha_0\sigma_0^{(2)}, \quad \bh^{(3)}=\alpha_0\sigma_0^{(3)}.
\end{eqnarray*}
Then we have
\begin{eqnarray}
\tr_{x]}\left[\prod_{i\in \overrightarrow{S(x)}}K_{<x,i>}\prod_{i\in \overrightarrow{S(x)}}\bh^{(i)}\prod_{i\in \overleftarrow{S(x)}}K_{<x,i>}\right]=\alpha_0^2h_1\sinh\beta\cosh^4\beta\sigma_1^{(x)}
\end{eqnarray}
\end{corollary}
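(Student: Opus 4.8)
The plan is to derive this as an immediate specialization of the preceding Proposition. First I would read off the scalar coefficients of the three input operators in the basis $\{\sigma_0,\sigma_1\}$: from $\bh^{(1)}=h_1\sigma_1^{(1)}$ one has $h_0^{(1)}=0$, $h_1^{(1)}=h_1$; from $\bh^{(2)}=\alpha_0\sigma_0^{(2)}$ one has $h_0^{(2)}=\alpha_0$, $h_1^{(2)}=0$; and from $\bh^{(3)}=\alpha_0\sigma_0^{(3)}$ one has $h_0^{(3)}=\alpha_0$, $h_1^{(3)}=0$.

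Substituting these values into the two formulas of the Proposition, I would argue as follows. In the expression for $h_0^{(x)}$ each of the four monomials carries at least one of the factors $h_0^{(1)}$, $h_1^{(2)}$, $h_1^{(3)}$, all of which vanish, so $h_0^{(x)}=0$. In the expression for $h_1^{(x)}$ the first two monomials carry the factor $h_0^{(1)}=0$ and the fourth carries $h_1^{(2)}h_1^{(3)}=0$; only the third monomial $h_1^{(1)}h_0^{(2)}h_0^{(3)}\sinh\beta\cosh^4\beta$ survives, and it equals $\alpha_0^2 h_1\sinh\beta\cosh^4\beta$. Hence the left-hand side equals $\alpha_0^2 h_1\sinh\beta\cosh^4\beta\,\sigma_1^{(x)}$, which is the assertion.

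There is no genuine obstacle here: the statement is a bookkeeping corollary, and the only care needed is to track which coefficients are set to zero. If one prefers a self-contained derivation rather than invoking the Proposition, I would instead rerun the three nested partial traces $\bg_3^{(x)}$, $\bg_2^{(x)}$, $\bg_1^{(x)}$ from its proof with the present inputs: $\bg_3^{(x)}$ collapses to the pure $\sigma_0^{(x)}$-term $\alpha_0\cosh^2\beta$ because $h_1^{(3)}=0$; then $\bg_2^{(x)}$ collapses to the pure $\sigma_0^{(x)}$-term $\alpha_0^2\cosh^4\beta$ because $h_1^{(2)}=0$; and finally, since $\bh^{(1)}$ has only a $\sigma_1$-part, the outermost step produces only a $\sigma_1^{(x)}$-term, equal to $\alpha_0^2\cosh^4\beta\cdot h_1\cdot 2(K_0K_1+K_2K_3)=\alpha_0^2 h_1\sinh\beta\cosh^4\beta$ by identity (iii) for the numbers $K_i$. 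Either route yields the claimed identity.
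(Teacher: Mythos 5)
Your proposal is correct and is exactly the argument the paper intends: the corollary is stated without proof as a direct specialization of the preceding Proposition, and your substitution $h_0^{(1)}=0$, $h_1^{(1)}=h_1$, $h_0^{(2)}=h_0^{(3)}=\alpha_0$, $h_1^{(2)}=h_1^{(3)}=0$ correctly kills every monomial except $h_1^{(1)}h_0^{(2)}h_0^{(3)}\sinh\beta\cosh^4\beta=\alpha_0^2h_1\sinh\beta\cosh^4\beta$. The alternative nested-trace computation you sketch is also consistent with the Proposition's proof and arrives at the same coefficient.
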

\begin{corollary}\label{traceforgamma_0gamma_1}
Let $K_{<u,v>}$ be  given by \eqref{compactformofK<u,v>}, $\overrightarrow{S(x)}=(1,2,3),$ and
\begin{eqnarray*}
\bh^{(1)}=h_0\sigma_0^{(1)}+h_1\sigma_1^{(1)},\\
\bh^{(2)}=\gamma_0\sigma_0^{(2)}+\gamma_1\sigma_1^{(2)}, \\
\bh^{(3)}=\gamma_0\sigma_0^{(3)}+\gamma_0\sigma_0^{(3)}.
\end{eqnarray*}
Then we have
\begin{eqnarray}
\tr_{x]}\left[\prod_{i\in \overrightarrow{S(x)}}K_{<x,i>}\prod_{i\in \overrightarrow{S(x)}}\bh^{(i)}\prod_{i\in \overleftarrow{S(x)}}K_{<x,i>}\right]=\Bigl\langle Ah, \sigma^{(x)}\Bigr\rangle,
\end{eqnarray}
where as before $A$ is a matrix given by \eqref{mainMatrix}, and here we assume that $\sigma^{(x)}=\left(\sigma_0^{(x)},\sigma_1^{(x)}\right),$ $h=(h_0,h_1)$ are vectors and  $\Bigl\langle \cdot, \cdot\Bigr\rangle$ stands for the standard inner product of vectors.
\end{corollary}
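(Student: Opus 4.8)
The plan is to obtain this Corollary as an immediate specialization of the preceding Proposition, so the argument is short. First I would invoke that Proposition with the coefficient choices $h_0^{(1)}=h_0$, $h_1^{(1)}=h_1$, $h_0^{(2)}=h_0^{(3)}=\gamma_0$ and $h_1^{(2)}=h_1^{(3)}=\gamma_1$ (reading the evident misprint $\bh^{(3)}=\gamma_0\sigma_0^{(3)}+\gamma_0\sigma_0^{(3)}$ as $\gamma_0\sigma_0^{(3)}+\gamma_1\sigma_1^{(3)}$). This already yields that the left-hand side equals $h_0^{(x)}\sigma_0^{(x)}+h_1^{(x)}\sigma_1^{(x)}$, with $h_0^{(x)}$ and $h_1^{(x)}$ the explicit cubic expressions furnished by the Proposition.

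Next I would substitute the above values into those two expressions and regroup the resulting monomials according to whether they carry $h_0$ or $h_1$. In $h_0^{(x)}$ the two summands $h_1^{(1)}h_1^{(2)}h_0^{(3)}\sinh^2\b\cosh^3\b$ and $h_1^{(1)}h_0^{(2)}h_1^{(3)}\sinh^2\b\cosh^2\b$ both become multiples of $\gamma_0\gamma_1 h_1$ and merge into $\gamma_0\gamma_1\sinh^2\b\cosh^2\b(1+\cosh\b)h_1$, while the two summands carrying $h_0^{(1)}$ merge into $(\cosh^6\b\,\gamma_0^2+\sinh^2\b\cosh^3\b\,\gamma_1^2)h_0$. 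Similarly, in $h_1^{(x)}$ the summands $h_0^{(1)}h_0^{(2)}h_1^{(3)}\sinh\b\cosh^2\b$ and $h_0^{(1)}h_1^{(2)}h_0^{(3)}\sinh\b\cosh^3\b$ merge into $\gamma_0\gamma_1\sinh\b\cosh^2\b(1+\cosh\b)h_0$, and the remaining two into $(\sinh\b\cosh^4\b\,\gamma_0^2+\sinh^3\b\cosh\b\,\gamma_1^2)h_1$.

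Finally I would compare the four coefficients so obtained with the entries of the matrix $A$ in \eqref{mainMatrix}: the coefficients of $h_0$ and $h_1$ in $h_0^{(x)}$ are precisely the first row of $A$, and those in $h_1^{(x)}$ are precisely the second row. Hence $h_0^{(x)}=(Ah)_0$ and $h_1^{(x)}=(Ah)_1$ with $h=(h_0,h_1)$, so that the trace equals $(Ah)_0\sigma_0^{(x)}+(Ah)_1\sigma_1^{(x)}=\langle Ah,\sigma^{(x)}\rangle$, which is the assertion.

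There is no genuine obstacle here; all the work resides in the preceding Proposition. The only point requiring attention is the bookkeeping in the regrouping step, i.e.\ checking that the two equal-looking cross-terms in each of $h_0^{(x)}$ and $h_1^{(x)}$ really do combine with exactly the $(1+\cosh\b)$ factor appearing in the off-diagonal entries of $A$ as written in \eqref{mainMatrix}, together with fixing the misprint in $\bh^{(3)}$. As a consistency check, setting $h_0=0$, $\gamma_1=0$ and $\gamma_0=\alpha_0$ in these formulas reproduces Corollary \ref{traceforalpha_0}.
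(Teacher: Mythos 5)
Your proposal is correct and is exactly the argument the paper intends: the corollary is an unproved specialization of the preceding proposition, obtained by substituting $h^{(1)}=(h_0,h_1)$, $h^{(2)}=h^{(3)}=(\gamma_0,\gamma_1)$ and regrouping the four monomials in each of $h_0^{(x)}$, $h_1^{(x)}$ into the rows of $A$ from \eqref{mainMatrix}. Your identification of the misprint in $\bh^{(3)}$ and the consistency check against Corollary \ref{traceforalpha_0} are both apt.
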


Let us consider the following elements:
\begin{eqnarray}
\sigma_0^{\Lambda}:=\bigotimes_{x\in \Lambda}\sigma_0^{(x)}\in\cb_{\Lambda}, \ \Lambda\subset\Lambda_n,\quad\quad
\sigma_1^{\overrightarrow{S(x)},1}:=\sigma_1^{(1)}\otimes\sigma_0^{(2)}\otimes\sigma_0^{(3)}\in \cb_{S(x)},\\
\sigma_1^{\overrightarrow{W}_{n+1},1}:=\sigma_1^{\overrightarrow{S(x_{W_n}^{(1)})},1}
\otimes\sigma_0^{\overrightarrow{W}_{n+1}\setminus\overrightarrow{S(x_{W_n}^{(1)})}}\in\cb_{W_{n+1}},\\
\label{elementa}a_{\sigma_{1}}^{\Lambda_{n+1}}:=\bigotimes_{i=0}^n\sigma_0^{\overrightarrow{W}_{i}}\otimes\sigma_1^{\overrightarrow{W}_{n+1},1}
\in\cb_{\Lambda_{n+1}}.
\end{eqnarray}

\begin{proposition}\label{evaluatestateahlpha}
Let $\ffi^{(f)}_{w_0(\a_0),{\bh(\alpha_0)}}$ be a forward quantum
$d-$Markov chain corresponding to the model
\eqref{compactformofK<u,v>} with boundary conditions
$\bh^{(x)}=\alpha_0\sigma_0^{(x)}$ for all $x\in L,$ where
$\alpha_0=\frac{1}{\cosh^3\beta}.$ Let
$a_{\sigma_{1}}^{\Lambda_{N+1}}$ be an element given by
\eqref{elementa} and  $\beta\in(\beta_{*},\beta^{*}).$ Then one has
$\ffi^{(f)}_{w_0(\a_0),{\bh(\alpha_0)}}\left(a_{\sigma_{1}}^{\Lambda_{N+1}}\right)=0,$
for any $N\in\bn.$
\end{proposition}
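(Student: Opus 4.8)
The plan is to use the rotation symmetry of the $XY$-interaction about the $z$-axis. First I would reduce the statement to a single trace. By Proposition~\ref{state^nwithW_n} together with the compatibility of the finite--volume functionals, for any $a\in\cb_{\Lambda_{N+1}}$ one has
$\ffi^{(f)}_{w_0(\a_0),\bh(\a_0)}(a)=\tr\big(\cw_{N+1]}\,a\big)$, where $\cw_{N+1]}=K_{N+1}K_{N+1}^{*}$ with
$K_{N+1}=w_0^{1/2}(\a_0)\,K_{[0,1]}\cdots K_{[N,N+1]}\,\bh_{N+1}^{1/2}(\a_0)$. Since here $w_0(\a_0)=\cosh^3\b\cdot\sigma_0$ and $h_x(\a_0)=\a_0\sigma_0$ are scalar multiples of the identity, the factors $w_0^{1/2}(\a_0)$ and $\bh_{N+1}^{1/2}(\a_0)$ are scalars, so $\cw_{N+1]}$ is, up to a positive constant, $L L^{*}$ with $L:=K_{[0,1]}\cdots K_{[N,N+1]}$ an ordered product of the operators $K_{<x,y>}$, exactly one per edge of $\Lambda_{N+1}$.

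Next I would record the symmetry. From $\sigma_3\sigma_1\sigma_3=-\sigma_1$ and $\sigma_3\sigma_2\sigma_3=-\sigma_2$ it follows, by \eqref{1Hxy1}, that $(\sigma_3^{(u)}\otimes\sigma_3^{(v)})H_{<u,v>}(\sigma_3^{(u)}\otimes\sigma_3^{(v)})=H_{<u,v>}$, hence by \eqref{K<u,v>} also $(\sigma_3^{(u)}\otimes\sigma_3^{(v)})K_{<u,v>}(\sigma_3^{(u)}\otimes\sigma_3^{(v)})=K_{<u,v>}$ for every edge $<u,v>$. Put $U:=\bigotimes_{x\in\Lambda_{N+1}}\sigma_3^{(x)}$, a self-adjoint unitary with $U^{2}=\id$. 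Conjugating $K_{N+1}$ factor by factor (the scalar factors commute with $U$, and each $K_{<x,y>}$ is fixed by conjugation with $\sigma_3^{(x)}\otimes\sigma_3^{(y)}$ and hence by $U$), I obtain $UK_{N+1}U=K_{N+1}$, and therefore $U\cw_{N+1]}U=\cw_{N+1]}$.

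On the other hand, by the definition \eqref{elementa}, $a_{\sigma_1}^{\Lambda_{N+1}}$ is the identity at every site of $\Lambda_{N+1}$ except at the single site $(x^{(1)}_{W_N},1)\in W_{N+1}$, where it equals $\sigma_1$; since $\sigma_3\sigma_0\sigma_3=\sigma_0$ and $\sigma_3\sigma_1\sigma_3=-\sigma_1$, this gives $U\,a_{\sigma_1}^{\Lambda_{N+1}}\,U=-a_{\sigma_1}^{\Lambda_{N+1}}$. Using cyclicity of $\tr$ on $\cb_{\Lambda_{N+1}}$ and $U^{2}=\id$,
\begin{eqnarray*}
\ffi^{(f)}_{w_0(\a_0),\bh(\a_0)}\big(a_{\sigma_1}^{\Lambda_{N+1}}\big)
&=&\tr\big(\cw_{N+1]}\,a_{\sigma_1}^{\Lambda_{N+1}}\big)
\ =\ \tr\big(U\cw_{N+1]}U\,a_{\sigma_1}^{\Lambda_{N+1}}\big)\\
&=&\tr\big(\cw_{N+1]}\,U a_{\sigma_1}^{\Lambda_{N+1}}U\big)
\ =\ -\,\tr\big(\cw_{N+1]}\,a_{\sigma_1}^{\Lambda_{N+1}}\big),
\end{eqnarray*}
whence the quantity is $0$. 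In this route there is no real obstacle: the only step requiring care is the identification of the restricted state with $\tr(\cw_{N+1]}\,\cdot\,)$, which is precisely Proposition~\ref{state^nwithW_n} plus compatibility of the $\ffi^{(n,f)}$; everything else is the elementary Pauli identity.

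As a cross--check more in the spirit of the surrounding lemmas, one can instead peel off the levels $W_{N+1},\dots,W_1$ one at a time. Starting from the data $\a_0\sigma_0$ at every site of $W_{N+1}$ and $\a_0\sigma_1$ at $(x^{(1)}_{W_N},1)$, the partial--trace formula preceding Corollary~\ref{traceforalpha_0} returns each vertex of $W_N$ other than $x^{(1)}_{W_N}$ to $\a_0\sigma_0$ (because $\a_0^{3}\cosh^{6}\b=\a_0$), while Corollary~\ref{traceforalpha_0} sends $x^{(1)}_{W_N}$ to $\a_0^{3}\sinh\b\cosh^{4}\b\,\sigma_1^{(x^{(1)}_{W_N})}$; since $x^{(1)}_{W_N}=(x^{(1)}_{W_{N-1}},1)$, the $\sigma_1$ migrates up the leftmost branch and, after $N+1$ such steps, one is left with a nonzero multiple of $\sigma_1^{(0)}$ at the root, which is annihilated by $\tr\big(w_0(\a_0)\,\cdot\,\big)$ because $\tr(\sigma_1)=0$. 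The bookkeeping of the constants through the $N+1$ levels is the only mildly tedious part of this variant, so I would present the symmetry argument as the actual proof and keep this computation merely as a sanity check.
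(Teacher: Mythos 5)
Your main argument is correct, but it proves the proposition by a genuinely different route than the paper. The paper's proof is essentially your ``cross--check'': it reduces to $\tr\big(\cw_{N+1]}a_{\sigma_1}^{\Lambda_{N+1}}\big)$ via Proposition~\ref{state^nwithW_n} and compatibility, then peels off the levels $W_{N+1},\dots,W_1$ by explicit partial traces (the identity \eqref{hhhhhh} at the unmarked sites, Corollary~\ref{traceforalpha_0} along the leftmost branch), ending with a constant times $\tr(\sigma_1^{(0)})=0$. Your primary argument instead exploits the global $\mathbb{Z}_2$ symmetry: $U=\bigotimes_{x\in\Lambda_{N+1}}\sigma_3^{(x)}$ fixes every $K_{<u,v>}$ (since $\sigma_3\sigma_1\sigma_3=-\sigma_1$, $\sigma_3\sigma_2\sigma_3=-\sigma_2$ leave $H_{<u,v>}$ invariant) and commutes with the scalar boundary operators $w_0(\a_0)$, $\bh_{N+1}(\a_0)$, so $U\cw_{N+1]}U=\cw_{N+1]}$, while $Ua_{\sigma_1}^{\Lambda_{N+1}}U=-a_{\sigma_1}^{\Lambda_{N+1}}$; cyclicity of the trace then forces the value to vanish. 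All steps check out, including the reduction of the infinite-volume state to the finite-volume trace, which is exactly where Proposition~\ref{state^nwithW_n} and \eqref{compatibility} are needed. Your route is shorter and yields more: $\ffi^{(f)}_{w_0(\a_0),\bh(\a_0)}$ annihilates \emph{every} element odd under conjugation by $U$, and it makes transparent why the same cancellation fails for the boundary condition $\gamma_0\sigma_0+\gamma_1\sigma_1$, which breaks the symmetry. What the paper's computational approach buys is the machinery (Corollaries~\ref{traceforalpha_0} and \ref{traceforgamma_0gamma_1}) that is indispensable for the companion Proposition~\ref{evaluatestategamma}, where the corresponding expectation is genuinely nonzero and must be computed via the transfer matrix $A$; so the explicit recursion is not wasted effort in the paper's economy, even though for this particular proposition your symmetry argument is the cleaner proof.
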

\begin{proof}
Due to \eqref{eq2} (see Theorem \ref{compa}) the compatibility condition holds
$\ffi^{(n+1,f)}_{w_0(\a_0),\bh(\alpha_0)}\lceil_{\cb_{\L_n}}=\ffi^{(n,f)}_{w_0(\a_0),\bh(\alpha_0)}.$ Therefore,
\begin{eqnarray}
\ffi^{(f)}_{w_0(\a_0),{\bh(\alpha_0)}}\left(a_{\sigma_{1}}^{\Lambda_{N+1}}\right)=
w-\lim_{n\to\infty}\ffi^{(n,f)}_{w_0(\a_0),\bh(\alpha_0)}
\left(a_{\sigma_{1}}^{\Lambda_{N+1}}\right)=\ffi^{(N+1,f)}_{w_0(\a_0),\bh(\alpha_0)}\left(a_{\sigma_{1}}^{\Lambda_{N+1}}\right).
\end{eqnarray}
Taking into account $w_0(\a_0)=\frac{1}{\alpha_0}\sigma_0^{(0)}$ and due to Proposition \ref{state^nwithW_n}, it is enough to evaluate the following
\begin{eqnarray}
\ffi^{(N+1,f)}_{w_0(\a_0),\bh(\alpha_0)}\left(a_{\sigma_{1}}^{\Lambda_{N+1}}\right)
&=&\tr\left(\cw_{N+1]}\left(a_{\sigma_{1}}^{\Lambda_{N+1}}\right)\right)\nonumber\\
&=&\frac{1}{\alpha_0}\tr\left[K_{[0,1]}\cdots K_{[N,N+1]}\bh_{N+1}K^{*}_{[N,N+1]}\cdots K^{*}_{[0,1]}
a_{\sigma_{1}}^{\Lambda_{N+1}}\right]\nonumber\\
&=&\frac{1}{\alpha_0}\tr\Bigl[K_{[0,1]}\cdots K_{[N-1,N]}\Bigr.\nonumber\\
&&\left.\quad\quad\quad\quad\tr_{N]}\left[K_{[N,N+1]}\bh_{N+1}K^{*}_{[N,N+1]}\sigma_1^{\overrightarrow{W}_{N+1},1}\right]K^{*}_{[N-1,N]}\cdots K^{*}_{[0,1]}\right].\nonumber
\end{eqnarray}
Now let us calculate $\widetilde{\bh}_N:=\tr_{N]}\left[K_{[N,N+1]}\bh_{N+1}K^{*}_{[N,N+1]}\sigma_1^{\overrightarrow{W}_{n+1},1}\right].$ Since $K_{<u,v>}$ is a self-adjoint, we then get
\begin{eqnarray*}
{\widetilde{\bh}}_N&=&\tr_{\left.x^{(1)}_{W_{N}}\right]}\left[\prod_{y\in \overrightarrow{S(x^{(1)}_{W_{N}})}}K_{\left\langle x^{(1)}_{W_{N}},y\right\rangle}\prod_{y\in \overrightarrow{S(x^{(1)}_{W_{N}})}}\bh^{(y)}\prod_{y\in \overleftarrow{S(x^{(1)}_{W_{N}})}}K_{\left\langle x^{(1)}_{W_{N}},y\right\rangle}\sigma_1^{\overrightarrow{S(x_{W_N}^{(1)})},1}\right]\otimes\\
&&\bigotimes_{x\in \overrightarrow{W}_{N}\setminus x^{(1)}_{W_{N}}}\tr_{x]}\left[\prod_{y\in \overrightarrow{S(x)}}K_{<x,y>}\prod_{y\in \overrightarrow{S(x)}}\bh^{(y)}\prod_{y\in \overleftarrow{S(x)}}K_{<x,y>}\right].
\end{eqnarray*}
We know that
\begin{eqnarray}\label{hhhhhh}
\tr_{x]}\left[\prod_{y\in \overrightarrow{S(x)}}K_{<x,y>}\prod_{y\in \overrightarrow{S(x)}}\bh^{(y)}\prod_{y\in \overleftarrow{S(x)}}K_{<x,y>}\right]=\bh^{(x)},
\end{eqnarray}
for every $x\in \overrightarrow{W}_{N}\setminus x^{(1)}_{W_{N}}.$ Therefore, one can
easily check that
\begin{eqnarray}
\tr_{\left.x^{(1)}_{W_{N}}\right]}\left[\prod_{y\in \overrightarrow{S(x^{(1)}_{W_{N}})}}K_{\left\langle x^{(1)}_{W_{N}},y\right\rangle}\prod_{y\in \overrightarrow{S(x^{(1)}_{W_{N}})}}\bh^{(y)}\prod_{y\in \overleftarrow{S(x^{(1)}_{W_{N}})}}K_{\left\langle x^{(1)}_{W_{N}},y\right\rangle}\sigma_1^{\overrightarrow{S(x_{W_N}^{(1)})},1}\right]={\widetilde{\bh}}^{(x^{(1)}_{W_{N}})},
\end{eqnarray}
where
\begin{eqnarray*}
{\widetilde{\bh}}^{(x^{(1)}_{W_{N}})}=\alpha_1\sigma_1^{(x^{(1)}_{W_{N}})},\quad \quad \alpha_1=\sinh\beta\cosh^5\beta.
\end{eqnarray*}
Hence, we obtain
\begin{eqnarray*}
{\widetilde{\bh}}_N={\widetilde{\bh}}^{(x^{(1)}_{W_{N}})}\bigotimes_{x\in \overrightarrow{W}_{N}\setminus x^{(1)}_{W_{N}}}\bh^{(x)}.
\end{eqnarray*} Therefore, one finds
\begin{eqnarray*}
\ffi^{(N+1,f)}_{w_0,\bh(\alpha_0)}\left(a_{\sigma_{1}}^{\Lambda_{N+1}}\right)&=&\frac{1}{\alpha_0}\tr\Bigl[K_{[0,1]}\cdots K_{[N-2,N-1]}\Bigr.\nonumber\\
&&\left.\quad\quad\quad\quad\quad\tr_{N-1]}\left[K_{[N-1,N]}{\widetilde{\bh}}_{N}K^{*}_{[N-1,N]}\right]K^{*}_{[N-2,N-1]}\cdots K^{*}_{[0,1]}\right].
\end{eqnarray*}
So, after $N$ times applying Corollary \eqref{traceforalpha_0}, we get
\begin{eqnarray*}
\ffi^{(N+1,f)}_{w_0,\bh(\alpha_0)}\left(a_{\sigma_{1}}^{\Lambda_{N+1}}\right)
=\alpha^{2N-1}_0\alpha^{N}_1(\sinh\beta\cosh^4\beta)^N\tr(\sigma_1^{(0)})=0.
\end{eqnarray*}
This completes the proof.
\end{proof}

\begin{proposition}\label{evaluatestategamma}
Let $\ffi^{(f)}_{w_0(\gamma),{\bh(\gamma)}}$ be a forward quantum $d-$Markov chain corresponding to the model \eqref{compactformofK<u,v>} with boundary conditions $\bh^{(x)}=\gamma_0\sigma_0^{(x)}+\gamma_1\sigma_1^{(x)}$ for all $x\in L,$ where $\gamma_0=\sqrt{DE}$ and $\gamma_1=\sqrt{E}.$ Let $a_{\sigma_{1}}^{\Lambda_{N+1}}$ be an element given by \eqref{elementa} and  $\beta\in(\beta_{*},\beta^{*}).$ Then one has
\begin{eqnarray}
\ffi^{(f)}_{w_0,{\bh(\gamma)}}\left(a_{\sigma_{1}}^{\Lambda_{N+1}}\right)=\frac{1}{\gamma_0}\Bigl\langle A^{N}h_{\gamma_0,\gamma_1},e\Bigr\rangle\quad \quad \forall N\in\bn,
\end{eqnarray}
where $A$ is a matrix given by \eqref{mainMatrix}, $\Bigl\langle\cdot,\cdot\Bigr\rangle$ is the standard inner product of vectors and $e=(1,0),$ $h_{\gamma_0,\gamma_1}=(h_0,h_1)$ are vectors with
\begin{eqnarray}\label{0h}
h_0&=&\gamma_0^2\gamma_1(\sinh^2\beta\cosh\beta(1+\cosh\beta)+\cosh^5\beta)+\gamma^3_1\sinh^2\beta\cosh^2\beta,\\
\label{1h}
h_1&=&\gamma_0^3\sinh\beta\cosh^5\beta+\gamma_0\gamma_1(\sinh\beta\cosh^3\beta(1+\cosh\beta)+\sinh^3\beta\cosh^2\beta).
\end{eqnarray}
\end{proposition}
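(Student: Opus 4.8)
The plan is to imitate the proof of Proposition \ref{evaluatestateahlpha}, contracting the tree level by level, but now keeping track of the $\sigma_1$--component generated by the boundary condition $\bh^{(x)}=\gamma_0\sigma_0^{(x)}+\gamma_1\sigma_1^{(x)}$ at every step; unlike in Proposition \ref{evaluatestateahlpha} this component survives, and it is carried from one level to the next precisely by the matrix $A$ of \eqref{mainMatrix}.

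First, since $(\gamma_0,\gamma_1)$ is (Theorem \ref{fixed-p}) a solution of \eqref{eq1},\eqref{eq2}, Theorem \ref{compa} gives the compatibility identity \eqref{compatibility}, and therefore $\ffi^{(f)}_{w_0(\gamma),\bh(\gamma)}\bigl(a_{\sigma_1}^{\Lambda_{N+1}}\bigr)=\ffi^{(N+1,f)}_{w_0(\gamma),\bh(\gamma)}\bigl(a_{\sigma_1}^{\Lambda_{N+1}}\bigr)$. By Proposition \ref{state^nwithW_n} and $w_0(\gamma)=\frac{1}{\gamma_0}\sigma_0^{(0)}$,
\[
\ffi^{(N+1,f)}_{w_0(\gamma),\bh(\gamma)}\bigl(a_{\sigma_1}^{\Lambda_{N+1}}\bigr)=\frac{1}{\gamma_0}\,\tr\!\left[K_{[0,1]}\cdots K_{[N,N+1]}\,\bh_{N+1}\,K^{*}_{[N,N+1]}\cdots K^{*}_{[0,1]}\,a_{\sigma_1}^{\Lambda_{N+1}}\right].
\]
Because $a_{\sigma_1}^{\Lambda_{N+1}}$ acts as the identity on $\Lambda_N$ and as $\sigma_1^{\overrightarrow{W}_{N+1},1}$ on $W_{N+1}$, I would first contract $W_{N+1}$, namely compute $\widetilde{\bh}_N:=\tr_{N]}\bigl[K_{[N,N+1]}\bh_{N+1}K^{*}_{[N,N+1]}\sigma_1^{\overrightarrow{W}_{N+1},1}\bigr]$. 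This partial trace factorizes over $x\in W_N$; for every $x\neq x^{(1)}_{W_N}$ relation \eqref{eq2} (equivalently \eqref{hhhhhh}) reproduces $\bh^{(x)}=\gamma_0\sigma_0^{(x)}+\gamma_1\sigma_1^{(x)}$, whereas on $x^{(1)}_{W_N}$ the extra $\sigma_1$ sitting on its first direct successor forces a separate, $\sigma_1$--twisted contraction. Carrying out the successive elimination of the three successors exactly as in the Proposition preceding Corollary \ref{traceforalpha_0}, but transporting this $\sigma_1$ through the last step and invoking the identities $K_0^2+K_1^2+K_2^2+K_3^2=\cosh^2\beta$, $2(K_0K_1-K_2K_3)=\sinh\beta\cosh\beta$, $2(K_0K_1+K_2K_3)=\sinh\beta$, $K_0^2+K_1^2-K_2^2-K_3^2=\cosh\beta$, one finds $\widetilde{\bh}^{(x^{(1)}_{W_N})}=h_0\,\sigma_0^{(x^{(1)}_{W_N})}+h_1\,\sigma_1^{(x^{(1)}_{W_N})}$ with $(h_0,h_1)=h_{\gamma_0,\gamma_1}$ as in \eqref{0h},\eqref{1h}; hence $\widetilde{\bh}_N=\bigl(h_0\sigma_0+h_1\sigma_1\bigr)^{(x^{(1)}_{W_N})}\otimes\bigotimes_{x\in\overrightarrow{W}_N\setminus x^{(1)}_{W_N}}\bigl(\gamma_0\sigma_0+\gamma_1\sigma_1\bigr)^{(x)}$.

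Next I would strip off the levels $W_N,W_{N-1},\dots,W_1$ one at a time. At the step removing $W_{N-k+1}$ the partial trace again factorizes over $W_{N-k}$: on every vertex other than $x^{(1)}_{W_{N-k}}$ all three direct successors still carry the vector $(\gamma_0,\gamma_1)$, and \eqref{eq2} reproduces $\gamma_0\sigma_0+\gamma_1\sigma_1$; while on $x^{(1)}_{W_{N-k}}$ its first direct successor is $x^{(1)}_{W_{N-k+1}}$, which carries the twisted vector $A^{k-1}h_{\gamma_0,\gamma_1}$, and the other two successors carry $(\gamma_0,\gamma_1)$, so that Corollary \ref{traceforgamma_0gamma_1} applies and gives $\bigl\langle A\,(A^{k-1}h_{\gamma_0,\gamma_1}),\sigma^{(x^{(1)}_{W_{N-k}})}\bigr\rangle$. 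By induction, after removing $W_N,\dots,W_1$ the operator left on $W_0=\{x^0\}$ is $\bigl\langle A^{N}h_{\gamma_0,\gamma_1},\sigma^{(x^0)}\bigr\rangle=(A^{N}h_{\gamma_0,\gamma_1})_0\,\sigma_0^{(x^0)}+(A^{N}h_{\gamma_0,\gamma_1})_1\,\sigma_1^{(x^0)}$. Taking the normalized trace at the root and using $\tr(\sigma_0)=1$, $\tr(\sigma_1)=0$ gives $\ffi^{(N+1,f)}_{w_0(\gamma),\bh(\gamma)}\bigl(a_{\sigma_1}^{\Lambda_{N+1}}\bigr)=\frac{1}{\gamma_0}(A^{N}h_{\gamma_0,\gamma_1})_0=\frac{1}{\gamma_0}\bigl\langle A^{N}h_{\gamma_0,\gamma_1},e\bigr\rangle$, which combined with the first step proves the claim.

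The level--by--level stripping is routine: it only uses \eqref{eq2} on the non-distinguished vertices, Corollary \ref{traceforgamma_0gamma_1} on the distinguished one, and the combinatorial remark that the distinguished vertex of level $m$ is always the first direct successor of the distinguished vertex of level $m-1$. The genuinely new computation, and the step I expect to be the main obstacle, is the top-level one: showing that the $\sigma_1$--twisted contraction on $x^{(1)}_{W_N}$ yields exactly the vector $h_{\gamma_0,\gamma_1}$ of \eqref{0h},\eqref{1h}. It has to be carried out honestly on the last contraction, since the twist does not merely interchange the $\sigma_0$-- and $\sigma_1$--coefficients of $\bh^{(1)}$; but it is a finite Pauli-algebra calculation of the same kind as the one behind the Proposition preceding Corollary \ref{traceforalpha_0}, and once it is in place the remainder is bookkeeping.
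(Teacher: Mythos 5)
Your proposal follows the paper's own proof essentially step for step: the same reduction via compatibility and Proposition \ref{state^nwithW_n}, the same contraction of $W_{N+1}$ with the factorization over $W_N$ and the $\sigma_1$--twisted contraction at $x^{(1)}_{W_N}$ producing $h_{\gamma_0,\gamma_1}$, and the same $N$-fold application of Corollary \ref{traceforgamma_0gamma_1} down the leftmost branch followed by the trace at the root. The one computation you defer (the twisted contraction yielding \eqref{0h},\eqref{1h}) is also stated without detail in the paper, and you correctly identify the subtlety that the trailing $\sigma_1^{(1)}$ does not simply swap the coefficients of $\bh^{(1)}$, so the argument is sound as proposed.
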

\begin{proof}
Again the compatibility condition yields that
\begin{eqnarray}
\ffi^{(f)}_{w_0,{\bh(\gamma)}}\left(a_{\sigma_{1}}^{\Lambda_{N+1}}\right)=w-\lim_{n\to\infty}\ffi^{(n,f)}_{w_0,\bh(\gamma)}
\left(a_{\sigma_{1}}^{\Lambda_{N+1}}\right)=\ffi^{(N+1,f)}_{w_0,\bh(\gamma)}\left(a_{\sigma_{1}}^{\Lambda_{N+1}}\right).
\end{eqnarray}
Noting that if $\bh^{(0)}=\gamma_0\sigma_0^{(0)}+\gamma_1\sigma_1^{(0)}$ then one of the solutions of the equation $\tr(wa_0\bh^{(0)})=1$ w.r.t. $w_0$ is $w_0(\gamma)=\frac{1}{\gamma_0}\sigma_0^{(0)},$ and due to Proposition \ref{state^nwithW_n}, it is enough to evaluate the following
\begin{eqnarray}
\ffi^{(N+1,f)}_{w_0,\bh(\gamma)}\left(a_{\sigma_{1}}^{\Lambda_{N+1}}\right)
&=&\tr\left(\cw_{N+1]}\left(a_{\sigma_{1}}^{\Lambda_{N+1}}\right)\right)\nonumber\\
&=&\frac{1}{\gamma_0}\tr\left[K_{[0,1]}\cdots K_{[N,N+1]}\bh_{N+1}K^{*}_{[N,N+1]}\cdots K^{*}_{[0,1]}
a_{\sigma_{1}}^{\Lambda_{N+1}}\right]\nonumber\\
&=&\frac{1}{\gamma_0}\tr\Bigl[K_{[0,1]}\cdots K_{[N-1,N]}\Bigr.\nonumber\\
&&\left.\quad\quad\quad\tr_{N]}\left[K_{[N,N+1]}\bh_{N+1}K^{*}_{[N,N+1]}\sigma_1^{\overrightarrow{W}_{N+1},1}\right]K^{*}_{[N-1,N]}\cdots K^{*}_{[0,1]}\right].\nonumber
\end{eqnarray}
Let us calculate $\widetilde{\bh}_N:=\tr_{N]}\left[K_{[N,N+1]}\bh_{N+1}K^{*}_{[N,N+1]}\sigma_1^{\overrightarrow{W}_{n+1},1}\right].$ Self-adjointness of $K_{<u,v>}$ implies that
\begin{eqnarray*}
{\widetilde{\bh}}_N&=&\tr_{\left.x^{(1)}_{W_{N}}\right]}\left[\prod_{y\in \overrightarrow{S(x^{(1)}_{W_{N}})}}K_{\left\langle x^{(1)}_{W_{N}},y\right\rangle}\prod_{y\in \overrightarrow{S(x^{(1)}_{W_{N}})}}\bh^{(y)}\prod_{y\in \overleftarrow{S(x^{(1)}_{W_{N}})}}K_{\left\langle x^{(1)}_{W_{N}},y\right\rangle}\sigma_1^{\overrightarrow{S(x_{W_N}^{(1)})},1}\right]\otimes\\
&&\bigotimes_{x\in \overrightarrow{W}_{N}\setminus x^{(1)}_{W_{N}}}\tr_{x]}\left[\prod_{y\in \overrightarrow{S(x)}}K_{<x,y>}\prod_{y\in \overrightarrow{S(x)}}\bh^{(y)}\prod_{y\in \overleftarrow{S(x)}}K_{<x,y>}\right].
\end{eqnarray*}
It follows from \eqref{hhhhhh} that
\begin{eqnarray*}
\tr_{\left.x^{(1)}_{W_{N}}\right]}\left[\prod_{y\in \overrightarrow{S(x^{(1)}_{W_{N}})}}K_{\left\langle x^{(1)}_{W_{N}},y\right\rangle}\prod_{y\in \overrightarrow{S(x^{(1)}_{W_{N}})}}\bh^{(y)}\prod_{y\in \overleftarrow{S(x^{(1)}_{W_{N}})}}K_{\left\langle x^{(1)}_{W_{N}},y\right\rangle}\sigma_1^{\overrightarrow{S(x_{W_N}^{(1)})},1}\right]
={\widetilde{\bh}}^{(x^{(1)}_{W_{N}})},
\end{eqnarray*}
where
\begin{eqnarray*}
{\widetilde{\bh}}^{(x^{(1)}_{W_{N}})}&=&h_0\sigma_0^{(x^{(1)}_{W_{N}})}+h_1\sigma_1^{(x^{(1)}_{W_{N}})},\\
h_0&=&\gamma_0^2\gamma_1(\sinh^2\beta\cosh\beta(1+\cosh\beta)+\cosh^5\beta)+\gamma^3_1\sinh^2\beta\cosh^2\beta,\\
h_1&=&\gamma_0^3\sinh\beta\cosh^5\beta+\gamma_0\gamma_1(\sinh\beta\cosh^3\beta(1+\cosh\beta)+\sinh^3\beta\cosh^2\beta).
\end{eqnarray*}
 Thus we obtain
\begin{eqnarray*}
{\widetilde{\bh}}_N={\widetilde{\bh}}^{(x^{(1)}_{W_{N}})}\bigotimes_{x\in \overrightarrow{W}_{N}\setminus x^{(1)}_{W_{N}}}\bh^{(x)}.
\end{eqnarray*} Therefore, one gets
\begin{eqnarray*}
\ffi^{(N+1,f)}_{w_0,\bh(\gamma)}\left(a_{\sigma_{1}}^{\Lambda_{N+1}}\right)&=&\frac{1}{\gamma_0}\tr\Bigl[K_{[0,1]}\cdots K_{[N-2,N-1]}\Bigr.\nonumber\\
&&\left.\quad\quad\quad\quad\quad\tr_{N-1]}\left[K_{[N-1,N]}{\widetilde{\bh}}_{N}K^{*}_{[N-1,N]}\right]K^{*}_{[N-2,N-1]}\cdots K^{*}_{[0,1]}\right].
\end{eqnarray*}
Again applying $N$ times Corollary \eqref{traceforgamma_0gamma_1}, one finds
\begin{eqnarray*}
\ffi^{(N+1,f)}_{w_0,\bh(\gamma)}\left(a_{\sigma_{1}}^{\Lambda_{N+1}}\right)
&=&\frac{1}{\gamma_0}\tr\left[\Bigl\langle A^Nh_{\gamma_0,\gamma_1}, \sigma^{(0)}\Bigr\rangle\right]
=\frac{1}{\gamma_0}\Bigl\langle A^Nh_{\gamma_0,\gamma_1}, e\Bigr\rangle.
\end{eqnarray*}
here as before $e=(1,0),$ $h_{\gamma_0,\gamma_1}=(h_0,h_1)$ are vectors, and $A$ is a matrix given by \eqref{mainMatrix}. This completes the proof.
\end{proof}

To prove our main result we are going to use the following theorem
(see \cite{BR}, Corollary 2.6.11).
\begin{theorem}\label{br-q}
Let $\varphi_1,$ $\varphi_2$ be two states on a quasi-local algebra $\ga=\cup_{\Lambda}\ga_\Lambda$. The states $\varphi_1,$ $\varphi_2$ are  quasi-equivalent if and only if for any given $\varepsilon>0$ there exists a finite volume $\Lambda\subset L$ such that $\|\varphi_1(a)-\varphi_2(a)\|<\varepsilon \|a\|$ for all $a\in B_{\Lambda^{'}}$ with $\Lambda^{'}\cap\Lambda=\emptyset.$
\end{theorem}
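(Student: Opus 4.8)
The plan is to reduce the stated criterion to the standard description of quasi-equivalence through folia, and then to convert the folium condition into control of the two states on observables localised arbitrarily far out. Write $(\cH_i,\pi_i,\Omega_i)$ for the GNS data of $\varphi_i$ and let $\cf(\varphi_i)$ denote the folium of $\varphi_i$, i.e.\ the set of $\pi_i$-normal states. Recall that $\varphi_1$ and $\varphi_2$ are quasi-equivalent exactly when $\cf(\varphi_1)=\cf(\varphi_2)$, and that $\psi\in\cf(\varphi_i)$ already forces $\cf(\psi)\subseteq\cf(\varphi_i)$; hence quasi-equivalence is equivalent to the two memberships $\varphi_2\in\cf(\varphi_1)$ and $\varphi_1\in\cf(\varphi_2)$, a symmetric condition. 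Throughout I would use the tensor splitting $\ga=\ga_\Lambda\otimes\ga_{\Lambda^c}$ attached to each finite $\Lambda$, the finite dimensionality (type~I) of the local algebras $\ga_\Lambda$, the norm-closedness of each folium, and primarity of the states, i.e.\ triviality of the algebra at infinity $\cn_\infty:=\bigcap_\Lambda\pi_1(\ga_{\Lambda^c})''=\bC\id$.

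For the implication ``asymptotic agreement $\Rightarrow$ quasi-equivalence'' I would fix $\varepsilon>0$, choose a finite $\Lambda$ with $\|(\varphi_1-\varphi_2)\restriction_{\ga_{\Lambda^c}}\|<\varepsilon$, and then modify $\varphi_1$ only inside the finite-dimensional factor $\ga_\Lambda$ so as to match the behaviour of $\varphi_2$ there. Such a modification is implemented in $\cH_1$ by finitely many matrix units of $\pi_1(\ga_\Lambda)$, hence produces a $\pi_1$-normal state $\psi_\Lambda\in\cf(\varphi_1)$; the point is that a perturbation supported on a type~I factor cannot change the quasi-equivalence class. One then estimates $\|\varphi_2-\psi_\Lambda\|$: the two states coincide on $\ga_\Lambda$ by construction, agree on $\ga_{\Lambda^c}$ up to $\varepsilon$, and—using the clustering that accompanies a trivial algebra at infinity to tame the correlations across the cut $\Lambda\,|\,\Lambda^c$—differ globally by $O(\varepsilon)$. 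Letting $\varepsilon\to0$ exhibits $\varphi_2$ as a norm limit of members of the norm-closed set $\cf(\varphi_1)$, whence $\varphi_2\in\cf(\varphi_1)$; by symmetry $\varphi_1\in\cf(\varphi_2)$, and quasi-equivalence follows.

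For the converse, quasi-equivalence gives a density operator $\rho$ on $\cH_1$ with $\varphi_2(\cdot)=\Tr(\rho\,\pi_1(\cdot))$, while $\varphi_1(\cdot)=\Tr(|\Omega_1\rangle\langle\Omega_1|\,\pi_1(\cdot))$. Put $T=|\Omega_1\rangle\langle\Omega_1|-\rho$, a self-adjoint trace-class operator with $\Tr T=0$, and $\cn_\Lambda:=\pi_1(\ga_{\Lambda^c})''$. Since $\Tr(T\,\cdot)$ is normal and, by Kaplansky's density theorem, the unit ball of $\pi_1(\ga_{\Lambda^c})$ is $\sigma$-weakly dense in that of $\cn_\Lambda$, one obtains
\[
\big\|(\varphi_1-\varphi_2)\restriction_{\ga_{\Lambda^c}}\big\|=\big\|\Tr(T\,\cdot)\restriction_{\cn_\Lambda}\big\|.
\]
The algebras $\cn_\Lambda$ decrease as $\Lambda$ grows, so these restricted norms decrease; their limit is controlled by the algebra at infinity $\cn_\infty$, on which the limiting functional acts as $X=c\,\id\mapsto c\,\Tr T=0$. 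A noncommutative reverse-martingale argument then upgrades the vanishing on $\cn_\infty$ to $\|\Tr(T\,\cdot)\restriction_{\cn_\Lambda}\|\to0$, which is precisely the asserted condition.

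The main obstacle is the final step of the converse: passing from the vanishing of the limit functional on the intersection $\cn_\infty$ to the vanishing of the limit of the restricted norms $\|\Tr(T\,\cdot)\restriction_{\cn_\Lambda}\|$. This interchange is a martingale-type continuity for the decreasing net $\{\cn_\Lambda\}$, and it rests decisively on the triviality $\cn_\infty=\bC\id$, i.e.\ on primarity of the reference state—the same clustering property that also underlies the cross-correlation estimate in the first direction. That hypothesis cannot be dropped: a nontrivial convex combination of two mutually disjoint product phases is quasi-equivalent to a different such combination, yet the two remain distinguishable by single-site observables arbitrarily far out, so the condition would fail. Thus the criterion is properly a statement about primary states, which is exactly the regime in which it is invoked in the sequel.
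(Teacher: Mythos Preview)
The paper does not prove this theorem at all; it is quoted verbatim as Corollary~2.6.11 of Bratteli--Robinson and used as a black box. So there is no ``paper's own proof'' to compare against, and your sketch is effectively a reconstruction of the textbook argument.

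Your outline follows the standard route: the folium characterisation of quasi-equivalence, a local perturbation inside a finite type~I factor for one direction, and a normal-functional/reverse-martingale argument on the decreasing net $\pi_1(\ga_{\Lambda^c})''$ for the other. The structure is correct, and you are right that the crucial ingredient in both directions is the triviality of the algebra at infinity, i.e.\ that the states be \emph{factor} (primary) states. Your counterexample---two distinct convex combinations of disjoint pure phases, which are quasi-equivalent yet distinguishable by observables arbitrarily far out---correctly shows the criterion fails without primarity. This is not a defect of your argument but of the paper's statement: Bratteli--Robinson's Corollary~2.6.11 is formulated for factor states, and that hypothesis has been silently dropped in the quotation. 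For the application made later in the paper this is harmless, since the two QMCs in question are translation-invariant product-type states and hence factorial, but the theorem as stated is false.

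The places where your sketch is thinnest are the ``clustering'' step controlling correlations across the cut $\Lambda\,|\,\Lambda^c$ in the first direction, and the reverse-martingale passage from vanishing on $\cn_\infty$ to vanishing of the restricted norms in the second. Both are genuine work in the textbook proof and would need to be filled in if you were writing this out in full; as an outline, however, you have identified the right mechanisms.
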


Now by means of the Theorem \ref{br-q} we will show that the states
$\ffi^{(f)}_{w_0,{\bh(\alpha_0)}}$ and
$\ffi^{(f)}_{w_0,{\bh(\gamma)}}$ are not quasi-equivalent. Namely,
we have the following

\begin{theorem}
Let $\beta\in(\beta_{*},\beta^{*})$ and
$\ffi^{(f)}_{w_0(\a_0),{\bh(\alpha_0)}},$
$\ffi^{(f)}_{w_0(\gamma),{\bh(\gamma)}}$ be two forward quantum
$d-$Markov chains corresponding to the model
\eqref{compactformofK<u,v>} with two boundary conditions
$\bh^{(x)}=\alpha_0\sigma_0^{(x)},$  $\forall x\in L$ and
$\bh^{(x)}=\gamma_0\sigma_0^{(x)}+\gamma_1\sigma_1^{(x)},$  $\forall
x\in L,$ respectively, here as before
$\alpha_0=\frac{1}{\cosh^3\beta},$ $\gamma_0=\sqrt{DE},$ and
$\gamma_1=\sqrt{E}.$ Then $\ffi^{(f)}_{w_0,{\bh(\alpha_0)}}$ and
$\ffi^{(f)}_{w_0,{\bh(\gamma)}}$ are not quasi-equivalent.
\end{theorem}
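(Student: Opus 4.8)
The strategy is to apply Theorem~\ref{br-q} in its contrapositive form: I will exhibit a single family of local observables, namely the elements $a_{\sigma_1}^{\Lambda_{N+1}}$ defined in \eqref{elementa}, on which the two states disagree by a definite proportion of the norm, no matter how far out the element is supported. More precisely, since $\|\sigma_1^{(x)}\|=1$ and the $\sigma_i$ are mutually orthogonal in Hilbert--Schmidt sense, each $a_{\sigma_1}^{\Lambda_{N+1}}$ has operator norm $1$. By Proposition~\ref{evaluatestateahlpha} we have $\ffi^{(f)}_{w_0(\a_0),{\bh(\alpha_0)}}\left(a_{\sigma_1}^{\Lambda_{N+1}}\right)=0$ for every $N$, while by Proposition~\ref{evaluatestategamma} we have $\ffi^{(f)}_{w_0(\gamma),{\bh(\gamma)}}\left(a_{\sigma_1}^{\Lambda_{N+1}}\right)=\frac{1}{\gamma_0}\bigl\langle A^N h_{\gamma_0,\gamma_1},e\bigr\rangle$. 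Hence it suffices to show that this last quantity does \emph{not} tend to $0$ as $N\to\infty$; then for $\varepsilon$ smaller than its $\liminf$ (in absolute value) no finite volume $\Lambda$ can work, because one can always pick $N$ large enough that $a_{\sigma_1}^{\Lambda_{N+1}}$ is supported outside $\Lambda$, and the two states still differ there by more than $\varepsilon\|a_{\sigma_1}^{\Lambda_{N+1}}\|$.

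The computation of $\lim_{N\to\infty} A^N h_{\gamma_0,\gamma_1}$ is where Proposition~\ref{A-N} does the work. By part~(iv) of that proposition, $A^N$ converges (entrywise) as $N\to\infty$, because $\lambda_1=1$ while $\lambda_2=\det(A)\in(0,1)$ by Proposition~\ref{A-m}(iv), so $\lambda_2^N\to0$. The limit matrix is the rank-one projection onto the $\lambda_1=1$ eigenline along the $\lambda_2$ eigenline, explicitly
\begin{eqnarray*}
A^\infty=\lim_{N\to\infty}A^N=\left(
\begin{array}{cc}
\dfrac{x_1^2}{x_1^2+y_1^2\sinh\beta} & \dfrac{x_1y_1\sinh\beta}{x_1^2+y_1^2\sinh\beta}\\[3mm]
\dfrac{x_1y_1}{x_1^2+y_1^2\sinh\beta} & \dfrac{y_1^2\sinh\beta}{x_1^2+y_1^2\sinh\beta}
\end{array}
\right),
\end{eqnarray*}
with $x_1,y_1$ as in \eqref{eigenvectorlambda1}. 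Therefore $\bigl\langle A^N h_{\gamma_0,\gamma_1},e\bigr\rangle\to \frac{x_1(x_1h_0+y_1h_1\sinh\beta)}{x_1^2+y_1^2\sinh\beta}$, and the whole thing equals $\frac{1}{\gamma_0}$ times that. The remaining task is to verify that this limit is strictly positive (hence nonzero). Both $x_1$ and $y_1$ are positive in the regime $\beta\in(\beta_*,\beta^*)$: indeed $x_1>0$ because $A_2>A_1$ by Lemma~\ref{inequalities}(ii) and $B_1>B_2$ by Lemma~\ref{inequalities}(iii), and $y_1>0$ for the same reason. The scalars $h_0,h_1$ from \eqref{0h},\eqref{1h} are manifestly sums of products of $\gamma_0,\gamma_1>0$ and positive hyperbolic terms, hence positive. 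So $x_1h_0+y_1h_1\sinh\beta>0$, the denominator is positive, and $\gamma_0=\sqrt{DE}>0$ by Lemma~\ref{inequalities}(v). Thus $\lim_{N\to\infty}\ffi^{(f)}_{w_0(\gamma),{\bh(\gamma)}}\left(a_{\sigma_1}^{\Lambda_{N+1}}\right)>0$.

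With this in hand the argument closes cleanly. Set $c:=\lim_{N\to\infty}\ffi^{(f)}_{w_0(\gamma),{\bh(\gamma)}}\left(a_{\sigma_1}^{\Lambda_{N+1}}\right)>0$ and fix any $\varepsilon\in(0,c)$. Given any finite volume $\Lambda\subset L$, choose $N$ so large that $\Lambda_{N+1}\setminus\Lambda_N$ lies outside $\Lambda$; note that $a_{\sigma_1}^{\Lambda_{N+1}}$ acts as the identity on all sites of $\Lambda_N$, so it is an element of $\cb_{\Lambda'}$ for a suitable $\Lambda'$ with $\Lambda'\cap\Lambda=\emptyset$, and $\|a_{\sigma_1}^{\Lambda_{N+1}}\|=1$. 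Then, for $N$ additionally large enough that the convergence above has brought the value within $\varepsilon$ of $c$,
\[
\left|\ffi^{(f)}_{w_0(\gamma),{\bh(\gamma)}}\left(a_{\sigma_1}^{\Lambda_{N+1}}\right)-\ffi^{(f)}_{w_0(\a_0),{\bh(\alpha_0)}}\left(a_{\sigma_1}^{\Lambda_{N+1}}\right)\right|=\left|\ffi^{(f)}_{w_0(\gamma),{\bh(\gamma)}}\left(a_{\sigma_1}^{\Lambda_{N+1}}\right)\right|>c-\varepsilon>0,
\]
which exceeds $\varepsilon\,\|a_{\sigma_1}^{\Lambda_{N+1}}\|$ once $\varepsilon<(c-\varepsilon)$, i.e. once $\varepsilon<c/2$. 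Hence the criterion of Theorem~\ref{br-q} fails for every finite $\Lambda$, and the two states are not quasi-equivalent. The main obstacle in executing this plan is purely the positivity bookkeeping for the limit $c$: one must make sure every hyperbolic-trigonometric factor entering $x_1$, $y_1$, $h_0$, $h_1$ and the normalizing denominator carries the right sign throughout $(\beta_*,\beta^*)$, which is exactly what the enumerated inequalities of Lemma~\ref{inequalities} and Propositions~\ref{A-m},~\ref{A-N} were assembled to guarantee; no genuinely new estimate should be needed.
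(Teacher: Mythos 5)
Your proposal is correct and follows essentially the same route as the paper: the same test elements $a_{\sigma_1}^{\Lambda_{N+1}}$, the same two evaluation propositions, and the same spectral analysis of $A$ via Propositions \ref{A-m} and \ref{A-N} to show the $\gamma$-state's values stay bounded away from zero while the $\alpha_0$-state's values vanish. The only (welcome) difference is cosmetic: you pass to the limit matrix $A^{\infty}$ and explicitly verify the strict positivity of the limiting value $c$ and the disjoint localization of $a_{\sigma_1}^{\Lambda_{N+1}}$ from a given finite $\Lambda$, points the paper leaves implicit when it writes down $\varepsilon_0$ and invokes Theorem \ref{br-q}.
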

\begin{proof}
Let $a_{\sigma_{1}}^{\Lambda_{N+1}}$ be an
 element given by
\eqref{elementa}. It is clear that
$\left\|a_{\sigma_{1}}^{\Lambda_{N+1}}\right\|=1,$ for all
$N\in\bn.$

If $\beta\in(\beta_{*},\beta^{*})$, then according to Propositions \ref{evaluatestateahlpha} and \ref{evaluatestategamma}, we have
\begin{eqnarray}
\label{valueofstatealpha}\ffi^{(f)}_{w_0(\a_0),{\bh(\alpha_0)}}\left(a_{\sigma_{1}}^{\Lambda_{N+1}}\right)&=&0,\\
\label{valueofstategamma0}
\ffi^{(f)}_{w_0(\gamma),{\bh(\gamma)}}\left(a_{\sigma_{1}}^{\Lambda_{N+1}}\right)&=&\frac{1}{\gamma_0}\Bigl\langle A^{N}h_{\gamma_0,\gamma_1},e\Bigr\rangle
\end{eqnarray}
for all $N\in\bn,$ here as before $e=(1,0),$ $h_{\gamma_0,\gamma_1}=(h_0,h_1)$ (see \eqref{0h},\eqref{1h})  and $A$ is given by \eqref{mainMatrix}.
Then from \eqref{valueofstategamma0} with Proposition \ref{A-N} one finds
\begin{eqnarray}\label{valueofstategamma}
\ffi^{(f)}_{w_0(\gamma),{\bh(\gamma)}}\left(a_{\sigma_{1}}^{\Lambda_{N+1}}\right)=\frac{x_1^2h_1+x_1y_1\sinh\beta h_2}{\gamma_0(x_1^2+y_1^2\sinh\beta)}+\frac{y_1^2\sinh\beta h_1-x_1y_1\sinh\beta h_2}{\gamma_0(x_1^2+y_1^2\sinh\beta)}\lambda_2^N,
\end{eqnarray}
where $\lambda_2$ is an eigenvalue of $A$ and $(x_1,y_1)$ is an
eigenvector of the matrix $A$ corresponding to the eigenvalue
$\lambda_1=1$ (see Proposition \ref{A-N}). Due to Propositions
\ref{A-m}(iv) and \ref{A-N} one has $0<\lambda_2<1,$ which implies
the existence  $N_0\in \bn$ such that
\begin{eqnarray}\label{estimateofvarepsilon}
\left|\frac{x_1^2h_1+x_1y_1\sinh\beta h_2}{\gamma_0(x_1^2+y_1^2\sinh\beta)}+\frac{y_1^2\sinh\beta h_1-x_1y_1\sinh\beta h_2}{\gamma_0(x_1^2+y_1^2\sinh\beta)}\lambda_2^N\right|\ge \frac{x_1^2h_1+x_1y_1\sinh\beta h_2}{2\gamma_0(x_1^2+y_1^2\sinh\beta)}
\end{eqnarray}
for all $N>N_0.$

Now putting $\varepsilon_0=\dfrac{x_1^2h_1+x_1y_1\sinh\beta h_2}{2\gamma_0(x_1^2+y_1^2\sinh\beta)}$ and using \eqref{valueofstatealpha}, \eqref{valueofstategamma}, \eqref{estimateofvarepsilon} we obtain
\begin{eqnarray*}
\left|\ffi^{(f)}_{w_0(\a_0),{\bh(\alpha_0)}}\left(a_{\sigma_{1}}^{\Lambda_{N+1}}\right)
-\ffi^{(f)}_{w_0(\gamma),{\bh(\gamma)}}\left(a_{\sigma_{1}}^{\Lambda_{N+1}}\right)\right|\ge \varepsilon_0\left\|a_{\sigma_{1}}^{\Lambda_{N+1}}\right\|,
\end{eqnarray*}
for all $N>N_0,$ which means $\ffi^{(f)}_{w_0(\a_0),{\bh(\alpha_0)}}$ and $\ffi^{(f)}_{w_0(\gamma),{\bh(\gamma)}}$ are not quasi-equivalent. This completes the proof.
\end{proof}

From the proved theorem we immediately get the occurrence of the phase transition for the model \eqref{compactformofK<u,v>} on the Cayley tree of order 3 in the regime $\beta\in(\beta_{*},\beta^{*})$. This completely proves our main Theorem \ref{main}.

\section{Some observation}

In this section we define a continuous function, depending on the model, such that its first order derivative has
discontinuity at the critical values of the phase phase transition.

First denote
\begin{equation}\label{freeE1}
\widetilde{K}_n(\beta)=K_{[0,1]}K_{[1,2]}\cdots K_{[n+1,n]}\mathbf{w}_{|W_{n+1}|}^{1/2},
\end{equation}
where
$$\mathbf{w}_{|W_{n+1}|}^{1/2}:=\bigotimes_{x\in \overrightarrow{W}_{n+1}}w^{1/2}(\beta).$$

Define a function $F: \br_{+}\to\br$ by the following formula
\begin{eqnarray}\label{freeE1}
\b F(\b)=\lim_{n\to\infty}\frac{1}{|V_n|}\log \tr \left(\widetilde{K}_n(\beta)\widetilde{K}^{*}_n({\beta})\right).
\end{eqnarray}

In what follows, we will consider the function $F(\beta)$ given by \eqref{freeE1} corresponding to the model \eqref{compactformofK<u,v>} with mixed boundary conditions $\omega(\alpha_0)=\frac{1}{\alpha_0}\sigma_0$, i.e.  $\bh^{(x)}=\alpha_0\sigma_0^{(x)},$  $\forall x\in L$ for $\beta\in (0,\beta_{*}]\cup[\beta^{*},\infty)$ and $\omega(\gamma_0)=\frac{1}{\gamma_0}\sigma_0,$ $\bh^{(x)}=\gamma_0\sigma_0^{(x)}+\gamma_1\sigma_1^{(x)},$  $\forall x\in L$ for $\beta\in (\beta_{*},\beta^{*}),$ here as before $\alpha_0=\frac{1}{\cosh^3\beta},$ $\gamma_0=\sqrt{DE},$ and $\gamma_1=\sqrt{E}.$

We have the following result.

\begin{theorem}
Let $F: \br_{+}\to\br$ be a function given by \eqref{freeE1}. Then the following assertion holds to be true:
\begin{itemize}
  \item [(i)] $F(\beta)$ is a continuous function on $\br_{+}$;
  \item [(ii)]The derivative function $F'(\beta)$ has the first order discontinuity at the points $\beta_{*}$ and $\beta^{*}.$
\end{itemize}
\end{theorem}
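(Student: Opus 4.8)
The plan is to evaluate $\tr\bigl(\widetilde K_n(\beta)\widetilde K_n^{*}(\beta)\bigr)$ in closed form and then pass to the limit in \eqref{freeE1}. First I would compute the trace by iterating the partial trace from the leaves $W_{n+1}$ down to the root, exactly as in Proposition \ref{state^nwithW_n} and in the proofs of Propositions \ref{evaluatestateahlpha}, \ref{evaluatestategamma}, using that $K_{<x,y>}$ is self--adjoint. The decisive simplification is that the weight $w(\beta)$ placed on each leaf is a scalar multiple of $\sigma_0$: it equals $\tfrac{1}{\alpha_0}\sigma_0=\cosh^{3}\beta\,\sigma_0$ for $\beta\in(0,\beta_{*}]\cup[\beta^{*},\infty)$ and $\tfrac{1}{\gamma_0}\sigma_0$ (with $\gamma_0=\sqrt{DE}$) for $\beta\in(\beta_{*},\beta^{*})$. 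Hence applying $\tr_{x]}$ to a layer filled with $c\,\sigma_0$ returns $B_2c^{3}\sigma_0=\cosh^{6}\beta\,c^{3}\sigma_0$ at the layer below, which is just the restriction of \eqref{dynsystem} to the invariant line $\{y=0\}$ (cf. Theorem \ref{fixed-p}). Therefore the successive scalars satisfy $c_{m-1}=\cosh^{6}\beta\,c_m^{3}$, which the substitution $d_m=\log c_m+3\log\cosh\beta$ turns into $d_{m-1}=3d_m$; solving this and using $|V_n|=|\Lambda_n|=(3^{\,n+1}-1)/2$ gives
\[
\beta F(\beta)=\lim_{n\to\infty}\frac{1}{|V_n|}\log\tr\bigl(\widetilde K_n\widetilde K_n^{*}\bigr)=2\log c(\beta)+6\log\cosh\beta=\begin{cases}12\log\cosh\beta,& \beta\in(0,\beta_{*}]\cup[\beta^{*},\infty),\\ 6\log\cosh\beta-\log(DE),& \beta\in(\beta_{*},\beta^{*}),\end{cases}
\]
where $c(\beta)$ is the ($n$--independent) scalar of $w(\beta)$ and, in the lower line, $DE=\gamma_0^{2}=\dfrac{A_2-A_1}{A_2B_1-A_1B_2}$ by \eqref{A1B1}--\eqref{A2B2}, \eqref{D}, \eqref{E}. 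Denote the two branches by $F_0$ and $F_1$.

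For (i), I would first note that $A_1,A_2,B_1,B_2$ are analytic in $\beta$, that $A_2-A_1>0$ by Lemma \ref{inequalities}(ii) and $A_2B_1-A_1B_2>0$ by Lemma \ref{inequalities}(vi), so $DE>0$ is analytic; hence $F_0$ and the analytic continuation of $F_1$ are real--analytic on all of $\br_{+}$. Since $F=F_0$ on $(0,\beta_{*}]\cup[\beta^{*},\infty)$ and $F=F_1$ on $(\beta_{*},\beta^{*})$, continuity only needs $F_0(\beta_{*})=F_1(\beta_{*})$ and $F_0(\beta^{*})=F_1(\beta^{*})$. At $\beta_{*}$ (and likewise $\beta^{*}$), Lemma \ref{inequalities}(i),(iii) give $B_1=B_2$, whence $A_2B_1-A_1B_2=(A_2-A_1)B_2$ and $DE=\tfrac{1}{B_2}=\tfrac{1}{\cosh^{6}\beta}$, so the lower branch becomes $6\log\cosh\beta-\log(DE)=12\log\cosh\beta$, matching $F_0$. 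Equivalently, this is the statement that the two fixed points of Theorem \ref{fixed-p} coalesce at the endpoints, $\gamma_0^{2}=DE\to\alpha_0^{2}$ and $\gamma_1^{2}=E\to0$. This gives (i).

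For (ii), I would compare the two branches directly. Put $G(\beta):=\dfrac{A_2B_1-A_1B_2}{(A_2-A_1)B_2}$; then on $\br_{+}$
\[
\beta\bigl(F_1(\beta)-F_0(\beta)\bigr)=-\log(DE)-6\log\cosh\beta=\log G(\beta),\qquad G(\beta)-1=\frac{A_2\,(B_1-B_2)}{B_2\,(A_2-A_1)} .
\]
Because the prefactor $\tfrac{A_2}{B_2(A_2-A_1)}$ is analytic and strictly positive (Lemma \ref{inequalities}(ii)), $G-1$ has exactly the zeros of $B_1-B_2$, to the same order. By Lemma \ref{inequalities}(i),(iii), $B_1-B_2$ changes sign at $\beta_{*}$ and $\beta^{*}$ (it is $\le0$ off $(\beta_{*},\beta^{*})$ and $>0$ on it), and the roots $t_{*}=\cosh\beta_{*}$, $t^{*}=\cosh\beta^{*}$ of $P_9$ are simple, so $(B_1-B_2)'(\beta_{*})>0$ and $(B_1-B_2)'(\beta^{*})<0$; hence $\log G=\log\bigl(1+(G-1)\bigr)$ has a simple zero at each of $\beta_{*},\beta^{*}$. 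Consequently $(F_1-F_0)(\beta_{*})=0$ but $(F_1-F_0)'(\beta_{*})\neq0$, and similarly at $\beta^{*}$. Since $F=\tfrac1\beta\cdot\beta F$ with $\beta F$ continuous, $F'$ inherits the same jump, so $\lim_{\beta\uparrow\beta_{*}}F'(\beta)\neq\lim_{\beta\downarrow\beta_{*}}F'(\beta)$ and likewise at $\beta^{*}$ — the asserted first--order discontinuity, which is (ii).

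The longest, but routine, part is the first step: the bookkeeping for $\tr(\widetilde K_n\widetilde K_n^{*})$ through the layered products $K_{[m-1,m]}$, their adjoints, and the closing trace $\tr(\sigma_0^{(x^0)})$. The genuinely load--bearing point is the simplicity of the zeros of $B_1-B_2$ at $\beta_{*},\beta^{*}$: this is contained in Lemma \ref{inequalities}(i) — writing $Q_8(t)=P_9(t)/(t-1)$, the zeros of $B_1-B_2$ in the variable $t=\cosh\beta$ coincide with those of $Q_8$, and $P_9$ changes sign at $t_{*},t^{*}$ — but if desired it can also be checked directly from the numerical localisations of $t_{*},t^{*}$ in Lemma \ref{inequalities}(i). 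Everything else reduces to continuity and analyticity of elementary functions of $\beta$.
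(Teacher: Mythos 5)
Your proposal is correct and follows essentially the same route as the paper: the same closed forms $\beta F=12\log\cosh\beta$ and $\beta F=6\log\cosh\beta-\log(DE)$ (the paper obtains them from the normalization $\ffi^{(n,f)}(\id)=1$ rather than by iterating $c\mapsto B_2c^3$ along the invariant line, but the bookkeeping is identical), the same continuity argument via $B_1=B_2$ at $\beta_*,\beta^*$ forcing $\gamma_0\to\alpha_0$, and the same jump $\dfrac{A_2(B_1'-B_2')}{(A_2-A_1)B_2\,\beta_*}$ for $F'$. If anything, you are more careful than the paper on the one load-bearing point, namely that $B_1'-B_2'\neq 0$ at $\beta_*,\beta^*$ (simplicity of $t_*,t^*$, which indeed follows from the Descartes count in the Appendix, since a sign change alone would only give odd multiplicity).
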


\begin{proof} Let us evaluate the value of the function $F(\beta)$ on the ranges $\beta\in (0,\beta_{*}]\cup[\beta^{*},\infty)$ and  $\beta\in (\beta_{*},\beta^{*})$, respectively.

Now assume that  $\beta\in (0,\beta_{*}]\cup[\beta^{*},\infty)$, then using the same argument as in \eqref{uniq} one gets
\begin{eqnarray}\label{freeE2}
\tr \left(\widetilde{K}_n(\beta)\widetilde{K}^{*}_n({\beta})\right)=\frac{1}{\alpha_0^{|W_{n+1}|}}\cdot \frac{\alpha_0}{\alpha_0^{|W_{n+1}|}}
\ffi^{(n,f)}_{w_0(\alpha_0),\bh(\alpha_0)}(\id)=\frac{\alpha_0}{\alpha_0^{2|W_{n+1}|}}
\end{eqnarray}
Hence, taking into account
$\lim\limits_{n\to\infty}\frac{|W_{n+1}|}{|V_n|}=2$ with \eqref{freeE2},\eqref{freeE1} we obtain
\begin{eqnarray*}
\beta F(\b)=-4\log\alpha_0(\beta),
\end{eqnarray*}
for all $\beta\in (0,\beta_{*}]\cup[\beta^{*},\infty).$

Let $\beta\in (\beta_{*},\beta^{*})$. Then in this setting, similarly as above, one derives
\begin{eqnarray}\label{freeE3}
\tr \left(\widetilde{K}_n(\beta)\widetilde{K}^{*}_n({\beta})\right)=\frac{1}{\gamma_0^{|W_{n+1}|}}\cdot \frac{\alpha_0}{\alpha_0^{|W_{n+1}|}}
\ffi^{(n,f)}_{w_0(\alpha_0),\bh(\alpha_0)}(\id)=\frac{\alpha_0}{\alpha_0^{|W_{n+1}|}\gamma_0^{|W_{n+1}|}}
\end{eqnarray}
Therefore,
\begin{eqnarray*}
\beta F(\b)=-2\log\big(\alpha_0(\beta)\gamma_0(\beta)\big),
\end{eqnarray*}
for all $\beta\in (\beta_{*},\beta^{*}).$
Thus, we have
\begin{eqnarray}
\beta F(\beta)=\left\{\begin{array}{l}
                        -4\log(\alpha_0(\beta)), \quad \ \  \beta\in (0,\beta_{*}]\cup[\beta^{*},\infty)\\
                        -2\log\big(\alpha_0(\beta)\gamma_0(\beta)\big), \quad  \beta\in (\beta_{*},\beta^{*})
                      \end{array}
\right.
\end{eqnarray}

Using \eqref{A1B1}-\eqref{D} one can calculate that
\begin{eqnarray*}
\gamma_0(\beta)=\sqrt{D(\b)E(\b)}=\sqrt{\frac{1}{B_2(\b)}+\frac{A_2(\b)(B_2(\b)-B_1(\b))}{B_2(\b)(A_2(\b)B_1(\b)-B_2(\b)A_1(\b))}}.
\end{eqnarray*}

Due to $B_2(\b_{*})=B_1(\b_{*})$, $B_2(\b^{*})=B_1(\b^{*})$ we have
\begin{eqnarray*}
\lim\limits_{\beta\to\beta_{*}+0}\gamma_0(\beta)=\alpha_0(\beta_{*}),\quad \quad \lim\limits_{\beta\to\beta^{*}-0}\gamma_0(\beta)=\alpha_0(\beta^{*}).
\end{eqnarray*}
This means that $F(\beta)$ is a continuous function on $(0,\infty).$

It is clear that $\alpha_0(\beta)$ and $\gamma_0(\beta)$ are differentiable functions on $(0,\beta_{*}]\cup[\beta^{*},\infty)$ and $(\beta_{*},\beta^{*})$ respectively.

One can easily check that
\begin{eqnarray*}
F'(\beta)\mid_{\b=\b_{*}+0}-F'(\beta)\mid_{\b=\b_{*}-0}&=&
\frac{A_2(\b_{*})(B_1'(\b_{*})-B_2'(\b_{*}))}{(A_2(\b_{*})-A_1(\b_{*}))B_2(\b_{*})\b_{*}}\neq 0,\\
F'(\beta)\mid_{\b=\b^{*}-0}-F'(\beta)\mid_{\b=\b^{*}+0}&=&
 \frac{A_2(\b^{*})(B_1'(\b^{*})-B_2'(\b^{*}))}{(A_2(\b^{*})-A_1(\b^{*}))B_2(\b^{*})\b^{*}}\neq 0,
\end{eqnarray*}
which shows that the derivative function $F'(\beta)$ has the first order discontinuity at the points $\beta_{*}$ and $\beta^{*}.$
\end{proof}

\section{Conclusions}

It is know (see \cite{BR}) if a tree is not one-dimensional lattice,
then it is expected (from a physical point of view) the existence of
a phase transition for quantum Markov chains constructed over such a
tree. In this paper, using a tree structure of graphs, we gave a
construction of quantum Markov chains on a Cayley tree, which
generalizes the construction of \cite{Ac87} to trees.  By means of
such constructions, we have established the existence of a phase
transition for quantum Markov chains associated with $XY$-model on a
Cayley tree of order three.
 By the phase transition
we means the existence of two distinct QMC for the given family of
interaction operators $\{K_{<x,y>}\}$. Note that in \cite{AMSa} we
 established the uniqueness of for the same model on the Cayley
tree of order two. Hence, results of the present paper totaly differ
from \cite{AMSa}, and show by increasing the dimension of the tree
we are getting the phase transition. In the last section we defined
a thermodynamic function, and proved that such a function is
continuous and has discontinuity at the critical values of $\b$.


\section*{Acknowledgement} The present study have been done within
the grant FRGS0308-91 of Malaysian Ministry of Higher Education. The authors also acknowledge
the MOSTI grant 01-01-08-SF0079. This
work was done while the first named author (F.M.) was visiting the Abdus Salam International Centre for Theoretical Physics, Trieste,
Italy as a Junior Associate. He would like to thank the Centre for hospitality and financial support.


\section{Appendix. Proof of Lemma \ref{inequalities}}\label{appx}

$(i)$ Let $P_9(t)=t^9-t^8-t^7-t^6+2t^4+2t^3-t-1.$ One can check that
\[P_9(t)=(t-1)(t^8-t^6-2t^5-2t^4+2t^2+2t+1)\]
and $t=1$ is a root of the polynomial $P_9(t).$ It is easy to see
that $P_9(1.05)>0,$ $P_9(1.1)<0,$ $P_9(1.5)<0,$ $P_9(1.6)>0.$ This
means $P_9(t)$  has two roots $t_{*}$ and $t^{*}$ such that
$1.05<t_{*}<1.1$ and $1.5<t_{*}<1.6.$ On the other hand, due to
Descartes theorem, the number of positive roots of $P_9(t)$ is at
most the number of exchanging signs of its coefficients (i.e.
$1,-1,-1,-1,2,2,-1,-1.$) So, $P_9(t)$ has exactly three roots
$1,t_{*},t^{*}.$ It is evident  that if $t\in (1,t_{*})\cup
(t^{*},\infty)$ then $P_9(t)>0$ and $t\in (t_{*},t^{*})$ then
$P_9(t)<0.$

$(ii)$
Since $\beta > 0$  and  $\cosh\beta >\sinh\beta>0,$ we get
$$A_2-A_1=\sinh^2\beta\cosh^2\beta(2\cosh^2\beta+\cosh\beta-\sinh\beta)> 0.$$

$(iii)$ Let us denote by $t=\cosh\beta$ and  $\beta_{*}=\cosh^{-1}t_{*},$ $\beta^{*}=\cosh^{-1}t^{*}.$ One can check that
\[B_2-B_1\ge0 \quad \quad \Leftrightarrow\quad \quad P_9(t)\ge0,\]
and
\[B_2-B_1<0 \quad \quad \Leftrightarrow\quad \quad P_9(t)<0.\]

So, from $(i)$ it follows that if $\beta\in(0,\beta_{*}]\cup[\beta^{*},\infty)$ then $B_1\le B_2$ and  if $\beta\in(\beta_{*},\beta^{*})$ then $B_1> B_2.$

$(iv)$ Let us denote by $t=\cosh\beta,$ and $$Q_{10}(t)=t^{10}+4t^9+5t^8-4t^7-14t^6-6t^5+11t^4+8t^3-3t^2-2t+1.$$ One can see that
\[A_2+B_2>A_1+B_1\quad \quad \Leftrightarrow \quad\quad Q_{10}(t)>0.\] It is clear that if $\beta> 0$ then $t>1.$ One can easily get  that if $t>1$ then
$$Q_{10}(t)=t(t-1)\left((t-1)(t^7+6t^6+16t^5+22t^4+11t^3+3t(t^2-1))+2(t+1)\right)+1>0.$$

$(v)$ If $\beta\in(\beta_{*},\beta^{*})$ then $B_1-B_2>0.$ From $(iv)$ it follows that $A_2-A_1>B_1-B_2.$ This means that $D>1.$

$(vi)$ Since $1+\cosh\beta+\cosh^2\beta>1+2\cosh\beta$ and $\cosh\beta>\sinh\beta>0$ we get
\[B_1B_2-A_1A_2=\sinh\beta\cosh^3\beta\left(\cosh^5\beta(1+\cosh\beta+\cosh^2\beta)-\sinh^4\beta(1+2\cosh\beta)\right)>0.\]

It is easy to see that
\[A_2B_1-A_1B_2=\sinh^3\beta\cosh^4\beta(1+3\cosh\beta+3\cosh^2\beta+\cosh^3\beta)>0\]

$(vii)$
Let
\[Q_7(t)=t^7+2t^6-3t^4-2t^3+t^2+3t+1.\]
Then, one can easily check that
\begin{eqnarray*}
A_1A_2+3A_1B_2-A_2B_1+B_1B_2&=&\sinh\beta\cosh^3\beta Q_7(\cosh\beta).
\end{eqnarray*}
If $\beta\in (\beta_{*},\beta^{*})$ then $t\in(t_{*},t^{*})$ and
\[Q_7(t)=t(t-1)(t^5+3t^4+2t(t^2-1)+t^3-1)+2t+1>0.\]
here $t_{*}>1.$

Let \[Q_4(t)=-t^4-t^3+t^2+5t+2.\]
Then, we get
\[A_2B_1-3A_1B_2-2A_1A_2=\sinh^3\beta\cosh^3\beta Q_4(\cosh\beta).\]

One can check that $Q_4(1.7)>0$ and $Q_4(1.8)<0.$ Due to Descartes Theorem we conclude that $Q_4(t)$ has a unique positive root $\hat{t}$ such that $1.7<\hat{t}<1.8.$

If $\beta\in (\beta_{*},\beta^{*})$ then $t\in(t_{*},t^{*})$ and $t^{*}<1.7<\hat{t}.$ Then, for any $t\in(t_{*},t^{*})$ we have
$Q_4(t)>0.$

$(viii)$ It is clear that, if $\beta >0,$ then
$$\sinh\beta\cosh\beta(1+\cosh\beta)>0.$$
Now we are going to show that
\begin{eqnarray}\label{rightineq}
\sinh\beta(1+\cosh\beta)<\cosh^3\beta.
\end{eqnarray}
Noting
$$\sinh\beta=\frac{e^{\beta}-e^{-\beta}}{2},\ \ \ \cosh\beta=\frac{e^{\beta}+e^{-\beta}}{2}.$$
and letting $t=e^{\beta},$  we reduce inequality
\eqref{rightineq} to
\begin{eqnarray}\label{powersix}
t^6-2t^5-t^4+7t^2+2t+1 &>& 0
\end{eqnarray}
Since $\beta >0,$ then $t>1$. Therefore, we shall show that
\eqref{powersix} is satisfied whenever $t>1$. Now consider several
cases with respect to $t$.

{\sc Case I.} Let $t\ge1+\sqrt{2}.$ Then we have
\begin{eqnarray*}
t^6-2t^5-t^4+7t^2+2t+1 =
t^4\big(t-(1+\sqrt{2})\big)\big(t-(1-\sqrt{2})\big)+7t^2+2t+1> 0
\end{eqnarray*}

{\sc Case II.} Let $2\le t\le 1+\sqrt{2}.$ Then it is clear that
$t<\sqrt{7}.$ Therefore,
\begin{eqnarray*}
t^6-2t^5-t^4+7t^2+2t+1 = t^5(t-2)+t^2(7-t^2)+2t+1> 0
\end{eqnarray*}

{\sc Case III.} Let $\sqrt{\frac72}\le t\le 2.$ Then one gets
\begin{eqnarray*}
2(t^6-2t^5-t^4+7t^2+2t+1)& = &2t^4\bigg(t^2-\frac72\bigg)+\frac52 t^4(2-t)\\
&&+\frac32 t^2(8-t^3)+2t^2+4t+2> 0
\end{eqnarray*}

{\sc Case IV.} Let $1<t\le\sqrt{\frac72}.$ Then we have
\begin{eqnarray*}
t^6-2t^5-t^4+7t^2+2t+1 = t^4(t-1)^2+t^2(7-2t^2)+2t+1> 0
\end{eqnarray*}

Hence, the inequality \eqref{rightineq} is satisfied for all
$\beta>0.$

\end{document}